\titlerunning{A Temporal Logic for Asynchronous Hyperproperties}
\newcommand{\Thanks}{\thanks{This work was funded in part by Madrid
    Regional Government under project ``S2018/TCS-4339 (BLOQUES-CM)'',
    and by Spanish National Project ``BOSCO
    (PGC2018-102210-B-100)''.}}
\title{A Temporal Logic for\\ Asynchronous Hyperproperties\Thanks}
\author{Jan Baumeister\inst{1} \and Norine Coenen\inst{1} \and\\ Borzoo Bonakdarpour\inst{2} \and Bernd Finkbeiner\inst{1} \and
  C\'esar S\'anchez\inst{3}\orcidID{0000-0003-3927-4773}}
\institute{CISPA Helmholtz Center for Information Security, Germany \\
  \and Michigan State Univerrsity, USA \\ 
  \and IMDEA Software Institute, Spain
}
\newcommand{\Vars}{\ensuremath{\mathsf{Vars}}\xspace}
\newcommand{\Traces}{\ensuremath{\mathsf{Traces}}\xspace}
\newcommand{\AP}{\ensuremath{\mathsf{AP}}\xspace}
\newcommand{\PTR}{\ensuremath{\mathsf{PTR}}\xspace}
\newcommand{\V}{\mathcal{V}}
\newcommand{\state}{s}
\newcommand{\trace}{\sigma}
\newcommand{\traj}{t}
\newcommand{\trvar}{\pi}
\newcommand{\tjvar}{\tau}
\newcommand{\trass}{\Pi}
\newcommand{\tjall}{\ensuremath{\mathsf{TRJ}}\xspace}
\newcommand{\alphabet}{\mathrm{\Sigma}}
\newcommand{\naturals}{\mathbb{N}_0}
\newcommand{\term}{\mathsf{term}}
\newcommand{\vphi}[1]{\ensuremath{\varphi_{\textit{#1}}}}
\newcommand{\varphiNoPCP}{\varphi_{\overline{\mathit{pcp}}}}
\newcommand{\varphiType}{\vphi{type}}
\newcommand{\varphiSeqWV}{\varphi_{\mathit{domino}}}
\newcommand{\varphiLetter}{\varphi_{\mathit{word}}}
\newcommand{\kripPCP}{\krip_{\mathit{pcp}}}
\newcommand{\psiMap}{\psi_{\mathit{map}}}
\newcommand{\psiSync}{\psi_{\mathit{sync}}}
\newcommand{\krip}{\mathcal{K}}
\newcommand{\ktuple}{\langle S, S_\init, \trans, L \rangle}
\newcommand{\init}{\mathit{init}}
\newcommand{\Trace}{\mathsf{Traces}}
\newcommand{\States}{S}
\newcommand{\trans}{\delta}
\newcommand{\idx}{\ensuremath{\mathit{dom}}\xspace}
\newcommand{\lc}{\ensuremath{\mathit{lc}}\xspace}
\newcommand{\comp}[1]{\textsf{\small #1}}
\newcommand{\sync}{\mathit{sync}}
\newcommand{\Atau}{\mathsf{A}}
\newcommand{\Etau}{\mathsf{E}}
\newcommand{\Domi}[2]{\Big[\frac{#1}{#2}\Big]}
\newcommand{\domi}[2]{[\frac{#1}{#2}]}
\newcommand{\psiPH}{\psi_{\textit{ph}}}
\newcommand{\kripST}{\krip^{\textit{st}}}
\newcommand{\ST}{\ensuremath{\mathit{st}}}
\newcommand{\SST}{S^\ST}
\newcommand{\transST}{\trans^\ST}
\newcommand{\LST}{L^\ST}
\newcommand{\Acc}{\ensuremath{\mathit{acc}}}
\newcommand{\sbot}{\ensuremath{s_\bot\xspace}}
\newcommand{\SAcc}{S^\Acc}
\newcommand{\transAcc}{\trans^\Acc}
\newcommand{\LAcc}{L^\Acc}
\newcommand{\psiESync}{\psi_{\textit{esync}}}
\newcommand{\PS}{\mathcal{P}}
\newcommand{\Change}{\ensuremath{\textit{change}\xspace}}
\newcommand{\Move}{\ensuremath{\textit{move}\xspace}}
\newcommand{\Phases}{\ensuremath{\textit{phase}\xspace}}
\newcommand{\Fair}{\ensuremath{\textit{fair}\xspace}}
\newcommand{\Bad}{\ensuremath{\textit{missalign}\xspace}}
\newcommand{\Align}{\ensuremath{\textit{align}\xspace}}
\newcommand{\Block}{\ensuremath{\textit{block}\xspace}}
\newcommand{\Cycles}{\ensuremath{\textit{cycles}\xspace}}
\newcommand{\Expand}{\ensuremath{\textit{expand}\xspace}}
\newcommand{\Compress}{\ensuremath{\textit{compress}\xspace}}
\newcommand{\acc}{\ensuremath{\textit{acc}}}
\newcommand{\dec}{\ensuremath{\textit{dec}}}
\newcommand{\kripAcc}{\krip^{\textit{acc}}}
\newcommand{\Color}{\ensuremath{\textit{color}}}
\newcommand{\lnz}{\varphi_{\mathsf{LNZ}}}
\newcommand{\history}{\mathsf{history}}
\newcommand{\gmni}{\varphi_{\mathsf{GMNI}}}
\newcommand{\nnt}{\varphi_{\mathsf{NNT}}}
\newcommand{\tin}{\varphi_{\mathsf{TIN}}}
\newcommand{\tsn}{\varphi_{\mathsf{TSN}}}
\newcommand{\subst}{\triangleleft}
\newcommand{\Prog}{\textit{P}}
\newcommand{\ProgOne}{\ensuremath{\Prog{}_1}\xspace}
\newcommand{\ProgTwo}{\ensuremath{\Prog{}_2}\xspace}
\newcommand{\U}{\mathbin\mathcal{U}}
\newcommand{\G}{\LTLsquare}
\newcommand{\F}{\LTLdiamond}
\newcommand{\Always}{\LTLsquare}
\newcommand{\Event}{\LTLdiamond} 
\newcommand{\Next}{\LTLcircle}
\newcommand{\true}{\mathit{true}}
\newcommand{\false}{\mathit{false}}
\newcommand{\into}{\ensuremath{\rightarrow}}
\renewcommand{\And}{\mathrel{\wedge}}
\newcommand{\Or}{\mathrel{\vee}}
\newcommand{\Then}{\mathrel{\rightarrow}} 
\newcommand{\Iff}{\mathrel{\leftrightarrow}}
\newcommand{\Into}{\mathrel{\rightarrow}}
\newcommand{\IntoP}{\mathrel{\rightharpoonup}}
\newcommand{\partTo}{\IntoP}
\newcommand{\tupleof}[1]{\langle#1\rangle}
\newcommand{\Dom}{\ensuremath{\textit{Dom}}}
\newcommand{\ie}{i.e.,\xspace}
\newcommand{\AHLTLName}{\textup{A-HLTL}}
\newcommand{\AsyncHyperLTL}{\AHLTLName\xspace}
\newcommand{\AHLTL}{\AsyncHyperLTL}
\newcommand{\defAs}{\ensuremath{\stackrel{\text{\textup{def}}}{=}}}
\newcommand{\DefinedAs}{\ensuremath{\,\defAs\,}}
\newcommand{\DefOR}{\ensuremath{\hspace{0.6em}\big|\hspace{0.6em}}}
\newcommand{\li}{\mathit{li}}
\newcommand{\lo}{\mathit{lo}}
\newcommand{\phiOD}{\varphi_{\mathsf{OD}}\xspace}
\definecolor{mGreen}{rgb}{0,0.6,0}
\definecolor{mGray}{rgb}{0.5,0.5,0.5}
\definecolor{mPurple}{rgb}{0.58,0,0.82}
\definecolor{backgroundColour}{rgb}{0.95,0.95,0.92}
\lstdefinestyle{CStyle}{
    backgroundcolor=\color{backgroundColour},   
    commentstyle=\color{mGreen},
    keywordstyle=\color{magenta},
    numberstyle=\tiny\color{mGray},
    stringstyle=\color{mPurple},
    basicstyle=\footnotesize,
    breakatwhitespace=false,         
    breaklines=true,                 
    captionpos=b,                    
    keepspaces=true,                 
    numbers=left,                    
    numbersep=2pt,                  
    showspaces=false,                
    showstringspaces=false,
    showtabs=false,                  
    tabsize=2,
    language=C
}
    \def\@citecolor{blue}%
    \def\@urlcolor{blue}%
    \def\@linkcolor{blue}%
 \def\orcidID#1{\smash{\href{http://orcid.org/#1}{\protect\raisebox{-1.25pt}{\protect\includegraphics{}}}}}
\newcommand{\SPICorrect}{\textsf{SPI-correct}\xspace}
\newcommand{\SPITerm}{\textsf{SPI-term}\xspace}
\begin{document}

\maketitle

\begin{abstract}
  {\em Hyperproperties} are properties of computational systems that
  require more than one trace to evaluate, e.g., many information-flow
  security and concurrency requirements.
  Where a trace property defines a set of traces, a hyperproperty
  defines a set of sets of traces.
  The temporal logics HyperLTL and HyperCTL* have been proposed to
  express hyperproperties.
  However, their semantics are {\em synchronous} in the sense that all
  traces proceed at the same speed and are evaluated at the same
  position.
  This precludes the use of these logics to analyze systems whose
  traces can proceed at different speeds and allow that different
  traces take stuttering steps independently.
  To solve this problem in this paper, we propose an {\em
    asynchronous} variant of HyperLTL.
  On the negative side, we show that the model-checking problem for
  this variant is undecidable.
  On the positive side, we identify a decidable fragment which covers
  a rich set of formulas with practical applications.
  We also propose two model-checking algorithms that reduce our
  problem to the HyperLTL model-checking problem in the synchronous
  semantics.
\end{abstract}


\section{Introduction}
\label{sec:intro}

Hyperproperties~\cite{cs10} extend the conventional notion of trace
properties~\cite{as85} from a set of traces to a set of sets of
traces.
In other words, a hyperproperty stipulates a system property and not
the property of just individual traces.
Many interesting requirements in computing systems are hyperproperties
and cannot be expressed by trace properties.
Examples include
\begin{inparaenum}[(1)]
\item a wide range of information-flow security policies such as {\em
    noninterference}~\cite{gm82} and {\em observational
    determinism}~\cite{zm03},
\item sensitivity and robustness requirements in cyber-physical
  systems~\cite{wzbp19}, and
\item consistency conditions such as {\em linearizability} in
  concurrent data structures~\cite{bss18}.
\end{inparaenum}

HyperLTL~\cite{clarkson14temporal} is a temporal logic for
hyperproperties that enriches LTL with quantifiers allowing explicit
and simultaneous quantification over multiple execution traces.
For example, the observational determinism security
policy~\cite{zm03} stipulates that any two executions that
start in two {\em low-equivalent} states (i.e., states whose value of
publicly observable variables are the same), should remain in
low-equivalent states.
%
%
%
This property can be expressed in HyperLTL as the following formula,
called $\phiOD$,
\( \forall \pi. \forall \pi'. (l_\pi \leftrightarrow l_{\pi'}) \,
\rightarrow \, \G(l_\pi \leftrightarrow l_{\pi'}).  \)
However, the semantics of HyperLTL (and other formal languages for
hyperproperties) is {\em synchronous}, meaning that they completely
abstract away the notion of time passage.
In HyperLTL, all traces proceed at the same speed, as all temporal
operators move the position on all traces simultaneously.
Consider the program \ProgOne in Fig~\ref{fig:prog1}, where input
values $\mathtt{0}$ and $\mathtt{1}$ are possible for {\em
  high-secret} variable $\mathtt{h}$.
This renders two possible traces shown in~Fig.~\ref{fig:sync1} that
satisfy $\phiOD$.

\renewcommand*\thelstnumber{\ensuremath{\ell_{\the\value{lstnumber}}}}
\begin{figure}[t]
	\begin{minipage}{.3\textwidth}
          \begin{lstlisting}[style=CStyle,tabsize=2,language=ML, basicstyle=\scriptsize\ttfamily,
            % ,escapechar=/
            ]
 int l = 0;
			
 if (h = 0)
     l := l + 1;
 else
     l := 1;

 
\end{lstlisting}
     \vspace{0.5em}
		\caption{Program \ProgOne}
		\label{fig:prog1}
	\end{minipage}
\hfill
\begin{minipage}{.3\textwidth}
          \begin{lstlisting}[style=CStyle,tabsize=2,language=ML, basicstyle=\scriptsize\ttfamily,
			%    ,escapechar=/
			]
 int l = 0;
	
 if (h = 0)
     reg:= l + 1;
     l := reg;
 else
     l := 1;\end{lstlisting}
    \vspace{-0.4em}
    \caption{Program \ProgTwo}
    \label{fig:prog2}
  \end{minipage}
\hfill
\begin{minipage}{.3\textwidth}
  \vspace{0.2em}
  \includegraphics[scale=.45]{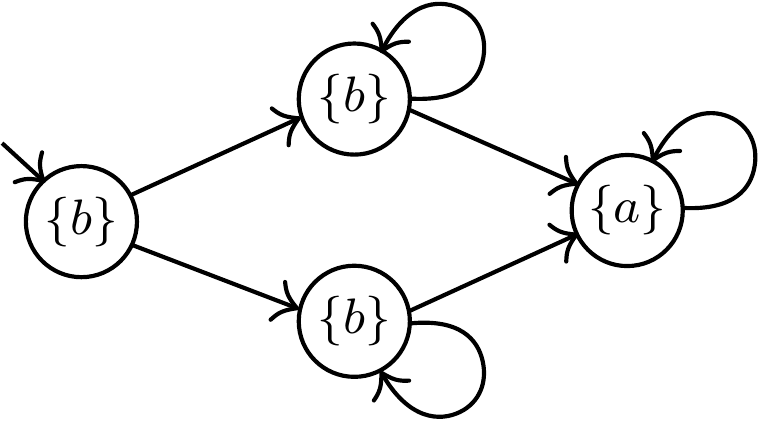}
  \vspace{0.2em}
       \caption{$\krip$ with a self-loop}
\label{fig:selfloop}
\end{minipage}
\vspace{-0.5em}	
\end{figure}

%

The synchronous semantics of HyperLTL has a shortcoming which has
practical implications as well: formulas are not invariant under {\em
  stuttering}.
Note that, contrary to LTL, disallowing the use of $\Next$ does not
make the formula invariant under stuttering different, as traces can
still stutter independently.
This limits the scope of application of HyperLTL to only those
settings where different traces can be perfectly aligned.
For example, consider program \ProgTwo in Fig.~\ref{fig:prog2}, where
line $\ell_4$ in \ProgOne is refined to its intermediate code using a
register that stores the value $\mathtt{l+1}$ and then stores this
value in memory location $\mathtt{l}$ in lines $\ell_4$ and $\ell_5$,
respectively.
Applying the synchronous semantics of HyperLTL results in declaring a
violation of $\phiOD$ in the second position.
This, however, is not an accurate interpretation of $\phiOD$ (assuming
that an attacker only has access to the memory footprint and not the
CPU registers or a timing channel), as the two traces are stutter
equivalent with respect to the state of variable $\mathtt{l}$.
In fact, the synchronous semantics of HyperLTL may incorrectly
identify good programs as bad because it ignores the notion of
relative time between traces.
This problem is generally amplified in Kripke structures where
self-loops correspond to non-deterministic choices that model that the
system may remain in a state for some arbitrary time.
For instance, consider $\krip$ in Fig.~\ref{fig:selfloop} and HyperLTL
formula
$\forall \pi.\forall \pi'. ((b_\pi \leftrightarrow b_{\pi'}) \, \U \,
\G(a_\pi \leftrightarrow a_{\pi'}))$.
Only pairs of traces that take the self-loop the same number of times
satisfy this formula.
However, since the goal of employing a self-loop is typically to make the 
duration of staying in a state irrelevant, this semantics is too restrictive.

Besides HyperLTL, other logics have been proposed that allow trace
quantification, for example,
$H_\mu$~\cite{DBLP:journals/pacmpl/GutsfeldMO21}, which adds trace
quantifiers to the polyadic
$\mu$-calculus~\cite{Andersen/1994/APolyadicModalMuCalculus}.
For $H_\mu$, the model-checking problem is in general undecidable, but two fragments, the 
$k$-synchronous, $k$-context bounded fragments, have been identified for which model
checking remains decidable~\cite{DBLP:journals/pacmpl/GutsfeldMO21}.
It is not known, however, if any of the commonly used hyperproperties,
like observational determinism, noninterference, or linearizability,
can be encoded in these fragments.

In this paper, we propose an asynchronous temporal logic for
hyperproperties.
Our main motivation is to be able to reason about execution traces
according to the relative order of the sequences of actions in each
trace but not about the duration of each action.
Software is inherently asynchronous, and so is hardware in many cases
if one abstracts the execution platform or many features of the
execution platform like pipelines, caches, memory contention, etc.
We call our temporal logic {\em Asynchronous HyperLTL} or in short,
\AHLTL.
The key addition is the notion of~\emph{trajectory} that controls the
relative speed at which traces progress by chosing at each instant
which traces move and which traces stutter.
%
\begin{figure}[t!]
  \begin{subfigure}[b]{0.27\textwidth}
    \centering
    \includegraphics[scale=.44]{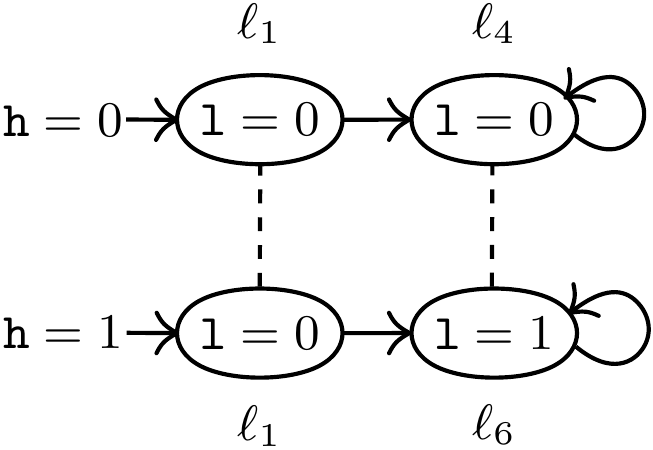}
    \caption{\ProgOne satisfies $\varphi_{\mathsf{OD}}$ under synchronous sems.}
    \label{fig:sync1}
  \end{subfigure}
  \hspace{0.7em}
  \begin{subfigure}[b]{0.31\textwidth}
    \centering
    \includegraphics[scale=.44]{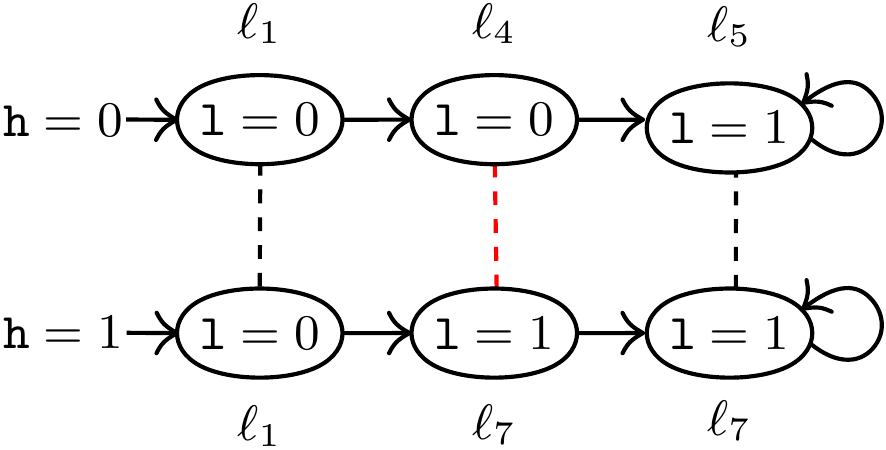}
    \caption{\ProgTwo violates  $\varphi_{\mathsf{OD}}$ under synchronous semantics}
    \label{fig:sync2}
  \end{subfigure}
  \hfill
  \begin{subfigure}[b]{0.31\textwidth}
    \centering
    \includegraphics[scale=.44]{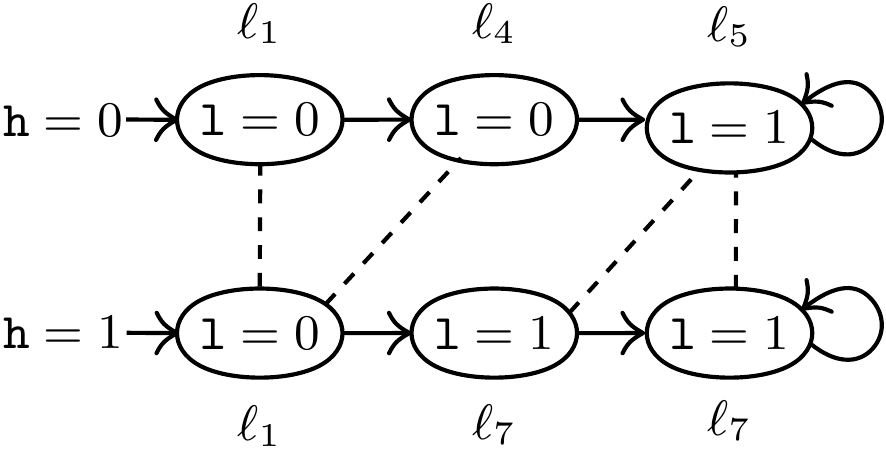}    
    \caption{\ProgTwo satisfies $\varphi_{\mathsf{OD}}$ under asynchronous semantics.}
    \label{fig:async}
  \end{subfigure}
  \caption{Synchronous vs. asynchronous semantics for HyperLTL.}
  \label{fig:async-sync}
\end{figure}
For example, the trajectory shown in Fig.~\ref{fig:async} for the two
traces of the program in Fig.~\ref{fig:prog2} allows the lower trace
to stutter in the first position while the upper trace advances.
On the contrary, in the third position, the upper trace stutters while the lower trace moves from the 
second to the third position.
This trajectory enables identification of stutter equivalence of the
two traces with respect to state variable $\mathtt{l}$ and, hence,
successful verification of observational determinism.
In order to reflect the notion of trajectories in our logic, we lift
the syntax of HyperLTL by allowing a trajectory modality.
This way, the corresponding formula for observational determinism in
\AHLTL is the following:
\[
\phiOD \DefinedAs \forall \pi.\forall \pi'.\Etau. (\li_\pi \Iff \li_{\pi'}) \Into
\G (\lo_\pi \Iff \lo_{\pi'})
\]
\label{form:obsdet}
where $\Etau$ denotes the {\em existence} of a trajectory for temporal
operator $\G$.
%
%
The \AHLTL formula for the Kripke structure in Fig.~\ref{fig:selfloop} is $\forall \pi.\forall \pi'. \Etau. 
((b_\pi \leftrightarrow b_{\pi'}) \, \U \, \G(a_\pi \leftrightarrow a_{\pi'}))$. 
\AHLTL allows us to reason about relational properties between
two different systems that differ on timing, like for example,
translation validation~\cite{pss98}, which relates executions of the
target code with the source code with respect to a (trace or hyper)
property.

We show an encoding of the PCP problem into model-checking a formula
of the shape
$ \forall \pi. \forall \pi'.\Etau. (\G\psi_1(\pi, \pi') \wedge
\F\psi_2(\pi, \pi'))
$, which implies that model-checking \AHLTL is undecidable, even for
the universal fragment.
On the positive side, we show two decidable fragments of \AHLTL.
The first algorithm is based on a {\em stuttering construction} in
which we modify the Kripke structure to accept all stuttering
expansions of the original paths.
This algorithm can handle fragment
$\forall \pi_1 \ldots \pi_n. \Etau.\psi$, where the $\psi$ is a
\emph{phase formula}, a class of safety formulas that appear in many
hyperproperties and are the building block of expressing trace
equivalence.
Our second algorithm uses an {\em acceleration construction} to
convert a finite sequence of transitions that do not change phase,
into a single transition.
This algorithm is able to handle formulas with arbitrary
quantification but a simpler kind of phase formulas.
\AHLTL is, thus, the first logic for hyperproperties that can express the major asynchronous 
hyperproperties of interest within decidable fragments.
Moreover, \AHLTL is the first logic for asynchronous hyperproperties
with a practical model checking algorithm.
Both algorithms use internally HyperLTL model-checking as a building
block.
However, the reduction from A-HLTL model-checking into HyperLTL
requires modifying both the formula and the model in a highly
non-trivial way, to encode the exitence of trajectories.
The choice of using HyperLTL model-checking as a building block is
based on the existence of tools, but it does not imply that
asynchronous properties of interest can be expressed in HyperLTL
directly.

We have evaluated the stuttering construction on two sets of cases
studies: a range of compiler optimizations and an SPI bus protocol.
In both case studies, we were able to prove system correctness using
our reduction from \AHLTL to synchronous HyperLTL.

\paragraph{Organization.} The rest of the paper is structured as follows.
Section~\ref{sec:prelim} contains the preliminaries, and
Section~\ref{sec:ahltl} introduces \AHLTL and presents examples of
properties expressible in \AHLTL.
Sections~\ref{sec:decidable} describes the decidable fragments and
present procedures for the model-checking problem.
Section~\ref{sec:undec} shows that the model-checking problem for
general \AHLTL formulas is undecidable and present the lower-bound
complexity.
Experimental results are presented in Section~\ref{sec:eval}.
Finally, Section~\ref{sec:related} discusses the related work, while
Section~\ref{sec:conclusion} concludes. Detailed proofs appear in the
appendix.


%
\section{Preliminaries}
\label{sec:prelim}

Let $\AP$ be a set of {\em atomic propositions} and $\alphabet=2^\AP$
be the {\em alphabet}, where we call each element of $\alphabet$ a
{\em letter}.
A {\em trace} is an infinite sequence $\sigma=a_0a_1\cdots$ of letters from 
$\alphabet$.
We denote the set of all infinite traces by $\alphabet^\omega$.
We use $\sigma(i)$ for $a_i$ and $\sigma^i$ for the suffix
$a_ia_{i+1}\cdots$.
A {\em pointed trace} is a pair $(\sigma,p)$, where $p \in \naturals$ is a
natural number (called the {\em pointer}).
Pointed traces allow to traverse a trace by moving the pointer.
Given a pointed trace $(\sigma, p)$ and $n > 0$, we use
$(\sigma, p) + n$ as a short for $(\sigma,p+n)$.
We denote the set of all pointed traces by
$\PTR = \{(\trace, p) \mid \trace \in \alphabet^\omega \, \text{ and } \, p
\in \naturals\}$.

Two pointed traces $(\sigma,p)$ and $(\sigma',p')$ are \emph{stuttering
equivalent} if there are two infinite sequences of indices
$p=i_0<i_1\ldots$ and $p'=j_0<j_1\ldots$ such that for all $k\geq 0$
and for all $l\in[i_k,i_{k+1})$ and $l'\in[j_k,j_{k+1})$, 
$\sigma(l)=\sigma'(l')$.
%
A pointed trace $(\sigma',p')$ is a \emph{stuttering expansion} of
$(\sigma,p)$ if there is a sequence $p'=j_0<j_1<\ldots$ such that for
all $k\geq 0$ and for all $l\in [j_k,j_{k+1})$, $\sigma(p+k)=\sigma'(l)$.
%
%
We say that $\sigma$ is stuttering equivalent to $\sigma'$ if
$(\sigma,0)$ is stuttering equivalent to $(\sigma',0)$, and that
$\sigma'$ is a stuttering expansion of $\sigma$ if $(\sigma',0)$ is a
stuttering expansion of $(\sigma,0)$.

%
A {\em Kripke structure} is a tuple $\krip = \ktuple$, where $\States$
is a set of states, $\States_{\init} \subseteq \States$ is the set of
initial states, $\trans \subseteq \States \times \States$ is a
transition relation, and $L: S \rightarrow \alphabet$ is a labeling
function on the states of $\krip$. We require that for each
$\state \in \States$, there exists $\state' \in \States$, such that
$(\state, \state') \in \trans$.

A \emph{path} of a Kripke structure is an infinite sequence of states
$\state(0)\state(1)\cdots \in \States^\omega$, such that
$\state(0) \in \States_\init$ and
$(\state(i), \state({i+1})) \in \trans$, for all $i \geq 0$.
A \emph{trace} of a Kripke structure is a trace
$\trace(0)\trace(1)\trace(2) \cdots \in \alphabet^\omega$, such that
there exists a path $\state(0)\state(1)\cdots \in \States^\omega$ with
$\trace(i) = L(\state(i))$ for all $i\geq 0$.
Abusing notation we use $\trace=L(\rho)$ to denote that $\trace$ is
the trace corresponding to path $\rho$.
We denote by $\Trace(\krip, \state)$ the set of all traces of $\krip$
with paths that start in state $\state \in \States$,
We denote by $\Trace(\krip,A)$ the set of all traces that start from
some state in $A\subseteq S$ and $\Trace(\krip)$ as a short for
$\Trace(\krip,\States_{\init})$.

\subsubsection{HyperLTL.}

HyperLTL~\cite{clarkson14temporal} is a temporal logic that extends
LTL~\cite{pnueli77temporal,manna95temporal} for hyperproperties, which
allows reasoning about multiple execution traces simultaneously.
The syntax of HyperLTL is:
\begin{alignat*}{6}
  & \varphi   ::= \exists \pi. \varphi &&  \DefOR \forall \pi.\varphi && \DefOR \psi  && && \\
  & \psi  ::=  a_\pi && \DefOR \psi \lor \psi && \DefOR \neg \psi &&
   \DefOR \Next \psi && \DefOR \psi \U\psi
 \end{alignat*}
 where $\pi$ is a {\em trace variable} from an infinite supply of trace
 variables.
The intended meaning of $a_\pi$ is that proposition $a\in\alphabet$
holds in the current time in trace $\pi$.
Trace quantifiers $\exists\pi$ and $\forall\pi$ allow reasoning
simultaneously about different traces of the computation.
Atomic predicates $a_\pi$ refer to a single trace $\pi$.  Given a
HyperLTL formula $\varphi$, we use $\Vars(\varphi)$ for the set of
trace variables quantified in $\varphi$.
A formula $\varphi$ is well-formed if for all atoms $a_\pi$ in
$\varphi$, $\pi$ is quantified in $\varphi$ (\ie $\pi\in\Vars(\varphi)$)
and if no trace variable is quantified twice in $\varphi$.
Given a set of traces $T$, the semantics of a HyperLTL formula
$\varphi$ is defined in terms of trace assignments, which is a
(partial) map from trace variables to indexed traces
$\Pi:\Vars(\varphi)\partTo\PTR$.
The trace assignment with empty domain is denoted
by $\trass_\emptyset$.
We use $\Dom(\Pi)$ for the subset of $\Vars(\varphi)$ for which $\Pi$
is defined.
Given a trace assignment $\Pi$, a trace variable $\pi$, a trace
$\sigma$ and a pointer $p$, we denote by $\Pi[\pi \mapsto (\sigma,p)]$
the assignment that coincides with $\Pi$ for every trace variable
except for $\pi$, which is mapped to $(\sigma,p)$.
Also, we use $\Pi+n$ to denote the trace assignment $\Pi'$ such that 
$\Pi'(\pi)=\Pi(\pi)+n$ for all $\pi \in \Dom(\Pi) = \Dom (\Pi')$.
The semantics of HyperLTL is:
\newcommand{\modelsHLTL}{\ensuremath{\models}}
\[
  \begin{array}{@{}rl@{\hspace{2em}}c@{\hspace{2em}}l@{}}
    \Pi &\modelsHLTL_T \exists \pi. \varphi & \text{iff } & \text{for some 
$\sigma\in{}T$, } \Pi[\pi\mapsto (\sigma,0)]\modelsHLTL_T\varphi \\
    \Pi &\modelsHLTL_T \forall \pi. \varphi & \text{iff } & \text{for all 
                                                            $\sigma\in{}T$, } \Pi[\pi\mapsto (\sigma,0)]\models_T\varphi \\
    \Pi &\modelsHLTL_T \psi & \text{iff } & \Pi\modelsHLTL \psi \\
    \Pi &\modelsHLTL a_\pi & \text{iff } & a \in\sigma(p), \text{ where $(\sigma,p)=\Pi(\pi)$} \\
    \Pi &\modelsHLTL_T \psi_1 \Or \psi_2 & \text{iff } & \Pi\modelsHLTL_T \psi_1 \text{ or } \Pi\modelsHLTL_T \psi_2 \\
    \Pi &\modelsHLTL \neg \psi & \text{iff } & \Pi \not\modelsHLTL
                                               \psi \\
    \end{array}
    \]
    \[
      \begin{array}{@{}rl@{\hspace{2em}}c@{\hspace{2em}}l@{}}
        \Pi &\modelsHLTL \Next \psi & \text{iff } & (\Pi+1)\modelsHLTL \psi\\
    \Pi &\modelsHLTL \psi_1 \U \psi_2 & \text{iff } & \text{for some } j\geq 0\;\; (\Pi+j)\modelsHLTL \psi_2\\
 &&& \hspace{1em}\text{and for all $0\leq i<j$,} (\Pi+i)\models\psi_1
  \end{array}
\]
%
Note that quantifiers assign traces to trace variables and set the
pointer to the initial position $0$.
We say that a set of traces $T$ is a model of a HyperLTL formula
$\varphi$, denoted $T\models\varphi$ whenever $\trass_\emptyset\models_T\varphi$.
A Kripke structure $\krip$ is a model of a HyperLTL
formula $\varphi$, denoted by $\krip\models\varphi$, whenever
$\Trace(\krip)\models\varphi$.


\section{Asynchronous HyperLTL}
\label{sec:ahltl}

We introduce a temporal logic \AHLTL as an extension of HyperLTL to
express asynchronous hyperproperties.

\subsubsection{Trajectories.}
To model the asynchronous passage of time, we now introduce the notion
of a trajectory, which chooses when traces move and when they stutter.
Let $\V$ be a set of trace variables and let $I \subseteq \V$.
The $I$-successor of a trace assignment $\Pi$, denoted by $\Pi+I$, is
the trace assignment $\Pi'$ such that $\Pi'(\pi)=\Pi(\pi)+1$ if
$\pi\in I$ and $\Pi'(\pi)=\Pi(\pi)$ otherwise.
%
That is, the pointers of indices in $I$ advance by one step, while the
others remain the same.
A \emph{trajectory} $\traj: \traj(0)\traj(1)\traj(2)\cdots$ for a
formula $\varphi$ is an infinite sequence of non-empty subsets of
$\Vars(\varphi)$.
Essentially, in each step of the trajectory one or more of the traces
make progress.
A trajectory is fair for a trace variable $\pi \in \Vars(\varphi)$ if
there are infinitely many positions $j$ such that $\pi\in t(j)$.
A trajectory is fair it is fair for all trace variables in
$\Vars(\varphi)$.
%
%
Given a trajectory $\traj$, by $\traj^i$, we mean the suffix
$\traj(i)\traj(i+1)\cdots$.
Furthermore, for a set of trace variables $\V$, we use $\tjall_\V$ for
set of all trajectories for indices from $\V$.

\subsection{Syntax and Semantics of  Asynchronous HyperLTL}

The syntax of Asynchornous HyperLTL is:
\begin{alignat*}{4}
& \varphi ::= \exists \pi . \varphi && \mid \forall \pi. \varphi && \mid \Etau\psi \mid \Atau\psi \\
&\psi ::= a_\pi && \mid \lnot \psi && \mid \psi_1 \vee \psi_2 \mid 
\psi_1 \, \U \, \psi_2 \mid \Next \psi
\end{alignat*}
where $a \in \AP$, $\pi$ is a trace variable from an infinite supply
$\V$ of trace variables, $\Etau$ is the existential trajectory
modality and $\Atau$ is the universal trajectory modality.
The intended meaning of $\Etau$ is that there is a trajectory that
gives an intenrpretation of the relative passage of time between the
traces for which the temporal formula that relates the traces is
satisfied.
Dualy, $\Atau$ means that for all trajectories, the resulting
alignment makes the inner formula true.
It is important to note that there is no nesting of trajectory
modalities and that all temporal operators in a formula are
interpreted with respect to a single modality.

We use the usual syntactic sugar for Boolean operators
$\true \defAs a_\trvar \vee \neg a_\trvar$,
$\false \defAs \neg \true$,
$\varphi_1 \wedge \varphi_2 \defAs \neg(\neg \varphi_1 \vee \neg\varphi_2)$,
and the syntactic sugar for temporal operators
$\Event \varphi \defAs \true \U \varphi$,
$\varphi_1 \into \varphi_2 \defAs \neg \varphi_1 \vee \varphi_2$, and
$\G \varphi \defAs \neg \Event \neg \varphi$, etc.




As before, we use trace assignments for the semantics of \AHLTL.
%
%
%
Given $(\Pi,t)$ where $\Pi$ is a trace assignment and $t$ a trajectory,
we use $(\Pi,t)+1$ for the successor of $(\Pi,t)$ defined as
$(\Pi',t')$ where $t'=t^1$, and $\Pi'(\pi)=\Pi(\pi)+1$ if $\pi\in t(0)$
and $\Pi'(\pi)=\Pi(\pi)$ otherwise.
%
We use $(\Pi,t)+k$ as the $k$-th successor of $(\Pi,t)$.

The satisfaction of an asynchronous HyperLTL formula $\varphi$ over a
trace assignment $\trass$ and a set of traces $T$, denoted by
$\trass \models_T \varphi$ is defined as follows:
\[
  \begin{array}{rll@{\hspace{3em}}c@{\hspace{3em}}l}

 \trass  &\models_T& \exists \trvar. \varphi & \text{iff} & \text{for some } \trace 
\in T: \trass[\pi \mapsto (\trace, 0)] \models_T \varphi\\
    \trass  &\models_T& \forall \pi. \varphi & \text{iff} & \text{for all } \trace  \in T: \trass [\pi \mapsto (\trace, 0)] \models_T \varphi\\
    \trass  &\models_T& \Etau\psi & \text{iff} & \text{for some }\traj \in \tjall_{\Dom(\Pi)}.\, (\Pi,t) \models \psi\\
    \trass  &\models_T& \Atau\psi & \text{iff} & \text{for all }\traj \in \tjall_{\Dom(\Pi)}.\, (\Pi,t) \models \psi\\
(\trass,t) &\models& a_{\trvar} & \text{iff} & a \in \Pi(\trvar)(0)\\
(\trass,t) &\models& \neg \psi & \text{iff} & (\trass,t) \not 
\models\psi\\
(\trass,t) &\models& \psi_1 \, \vee \, \psi_2 & \text{iff} &
(\trass,t) \models \psi_1 \text{ or } (\trass,t) \models \psi_2 \\
    (\trass,t) &\models& \Next\psi & \text{iff} & (\trass,t)+1 \models \psi\\
    (\trass,t) &\models& \psi_1 \, \U \psi_2 & \text{iff} &
                                                            \text{for some } i\geq 0: (\trass,t)+i\models \psi_2 \, \text{ and } \, \\
    %
%
%
    &&&& \hspace{3em} \text{for all } j < i: (\trass,t)+j \models \psi_1\\
\end{array}
\]
We say that a set $T$ of traces satisfies a closed sentence $\varphi$,
denoted by $T \models \varphi$, if $\Pi_\emptyset \models_T \varphi$.
We say that a Kripke structure $\krip$ satisfies an \AHLTL formula
$\varphi$ (and write $\krip \models \varphi$) if and only if we have
$\Trace(\krip, \States_\init)\models \varphi$.

\subsection{Examples of A-HLTL}
\label{subsec:examples}

We illustrate the expressive power of \AHLTL by introducing the
asynchronous version of well-known properties.

%




\paragraph*{Linearizability}~\cite{hw90} requires that any history of execution of a 
concurrent data structure (i.e., sequence of invocation and response by 
different threads) matches some sequential order of invocations and responses:
\[
\lnz \DefinedAs \forall \pi.\exists \pi'.\Etau.\G(\history_\pi \Iff 
\history_{\pi'})\]
where $\mathsf{history}$ denotes method invocations (and not the
actual execution of the internal instructions of the concurrent
library) by the different threads and the response observed, trace
$\pi$ ranges over the concurrent data structure and $\pi'$ ranges over
its sequential counterpart.

\paragraph*{Goguen and Meseguer's noninterference (GMNI)}~\cite{gm82} stipulates
that, for all traces, the low-observable output must not change when
all high inputs are removed:
\[
\gmni =  \DefinedAs
\forall \pi. \exists \pi'. \Etau.
(\G \lambda_{\pi'}) \, \And \, \G(\lo_\pi \Iff \lo_{\pi'})
\]
where $\lambda_{\pi'}$ expresses that all of the high inputs in the
current state of $\pi'$ have dummy value $\lambda$, and denotes
low-observable output proposition.

\paragraph*{Not never terminates}~\cite{l80} requires that for every
initial state, there is a terminating trace and a non-terminating trace:
\[
\nnt \DefinedAs \forall \pi. \exists \pi'.\exists \pi''.\Etau.
(\pi[0] = \pi'[0] = \pi''[0]) \; \Into \;
(\F \, \term_{\pi'} \; \And \; \G \neg \term_{\pi''})
\]



\paragraph*{Termination-insensitive noninterference}~\cite{ss01} requires that 
for two executions that start from a low-observable states, information leaks 
are permitted if they are transmitted purely by the program's termination 
behavior. That is, the program may diverge on some high inputs and terminate on 
others:

\[\begin{array}{l}
	\tin \DefinedAs \forall \pi. \forall \pi'. \Etau.  
	\Big(l_\pi \Iff l_{\pi'}\Big) \; \Into \;
    \begin{pmatrix}
      \begin{array}{l}
	(\G \neg \term_\pi \Or  \G \neg \term_{\pi'})  \Or \\
        \F(\term_\pi \And \term_{\pi'} \And l_\pi \Iff l_{\pi'})
      \end{array}
    \end{pmatrix}
\end{array} \]
\label{form:TNI1}


\paragraph*{Termination-sensitive noninterference}~\cite{ahss08} 
Termination-sensitive noninterference is the same as termination insensitive, 
except that it forbids one trace to diverge and the other to terminate:
\[\begin{array}{l}
	\tsn \DefinedAs \forall \pi. \forall \pi'. \Etau. 
	\Big(l_\pi \Iff l_{\pi'}\Big) \Into 
    \begin{pmatrix}
      \begin{array}{l}
	(\G \neg \term_\pi \And  \G \neg \term_{\pi'})  \Or \\
        \F(\term_\pi \And \term_{\pi'} \And l_\pi \Iff l_{\pi'})
      \end{array}
    \end{pmatrix}
    
\end{array}
\]



\section{Model-Checking \AHLTL}
\label{sec:decidable}

In this section, we show the decidability of the model-checking problem
for two classes of \AHLTL formulas using two different algorithms:
\begin{compactenum}[(1)]
\item a \emph{stuttering} construction in which we modify the Kripke
  structure $\krip$ to accept all stuttering expansions of paths in
  $\krip$; and
\item an \emph{acceleration} construction in which the modified Kripke
  structure accelerates jumping directly to the synchronization
  points.
\end{compactenum}
In both cases the problem is reduced to model-checking HyperLTL
formulas, which is known to be
decidable~\cite{finkbeiner13temporal,clarkson14temporal}.
We describe each construction separately.

\subsection{The Stuttering Construction}
\label{subsec:stuttering}

We consider first \AHLTL formulas of the form
\(
  \forall \pi_1\ldots\pi_n.\Etau.\psi.
\)
We will then extend our results to the $\exists^*$ fragment, to handle
the $\Atau$ trajectory modality and to a larger collection of
predicates.
The class of temporal formulas $\psi$ that we handle are called
\emph{admissible} formulas, and are defined as the Boolean combination
of:
\begin{compactenum}
\item any number of state formulas, which may relate propositions $p_{\pi_i}$ of different traces 
arbitrarily;
  
\item any number temporal formulas (called \emph{monadic
temporal formulas}), each of which only uses one trace
  variable and is invariant under stuttering (guaranteed for example
  by forbidding the use of $\Next$), and
  
\item one \emph{phase formula}, which is an invariant that can relate
  different traces in a restricted way (see below).
\end{compactenum}
Given an admissible formula $\psi$, we use $\psiPH$ for its phase
formula, and we use $\psi[\psiPH\subst\xi]$ for the formula that results from
$\psi$ by replacing $\psiPH$ with $\xi$.
Since $\psiPH$ occurs only once in $\psi$, we use the fact that
$\psiPH$ appears with a single polarity.
We present here the construction for positive polarity which is the
case in all practical formulas (the case for negative polarity is
analogous).
%


The algorithm has two parts.
First, we generate the \emph{stuttering} Kripke structure $\kripST$
whose paths are the stuttering expansions of paths in the original
Kripke structure $\krip$.
Then, we modify the admissible formula $\psi$ into $\psiSync$ such
that $\krip\models \forall\pi_1\ldots\pi_n.\Etau.\psi$ if and only if
$\kripST\models \forall\pi_1\ldots\pi_n.\psiSync$.
We describe each of the concepts separately.

\paragraph{Phase formulas.}
We first define \emph{atomic phase formulas}
\( (\bigwedge_{p\in P} p_{\pi_i}\Iff p_{\pi_j}) \)
which are characterized by $(\pi_i,\pi_j,P)$, where $P\subseteq \AP$
and $\pi_i$ and $\pi_j$ are two different trace variables.
We use \emph{color} to refer to a valuation of the variables in $P$.
Essentially, an atomic phase formula asserts that all propositions in
$P$ coincide in both traces at all points in time, that is, both
traces exhibit the same sequence of colors.
Since the passage of time proceeds at different speeds in the
different traces---according to the trajectory--- atomic phase formulas
state the traces for $\pi_i$ and $\pi_j$ are sequences of phases of
the same color, where corresponding phases may have different lengths.
A phase formula is formed from atomic formulas as follows:
\[ \Always \big( \bigwedge_{p\in P^1} p_{\pi^1_i}\Iff p_{\pi^1_j} \And \cdots \And \bigwedge_{p\in P^k} p_{\pi^k_i}\Iff p_{\pi^k_j}\big) \]
We use $\PS:\{(\pi_i^1,\pi_j^1,P^1),\ldots,(\pi_i^k,\pi_j^k,P^k)\}$
for the collection of predicates and trace variables that characterize
a phase formula.

\paragraph{Stuttering Kripke structure.}
We start from $\krip$ and create $\kripST$ that accepts the stuttering
expansions of traces in $\krip$.
First, the alphabet of atomic propositions is enriched with a fresh
proposition $\ST$, that is $\AP^{\ST}=\AP\cup\{\ST\}$, to encode whether
the state represents a real move or a stuttering move.
Given $\krip=\tupleof{S, \States_\init, \trans, L}$, the stuttering Kripke
structure is $\kripST=\tupleof{\SST, \States_\init, \transST,\LST}$ where:
\begin{compactitem}
\item $\SST=S\cup\{s^\ST \mid s\in S\}$ contains two copies of each state
  in $S$, where we use $s^\ST$ to denote the stuttering state that
  corresponds to $s$;
  
\item
  $\transST=\trans\cup \{(s,s^\ST)\}\cup\{(s^\ST,s^\ST)\}
  \cup\{(s^\ST,s') \mid \text{ for every } (s,s')\in\trans\}$.
\item $\LST(s)=L(s)$ for $s\in S$, and $\LST(s^\ST)=L(s)\cup\{\ST\}$.
\end{compactitem}
The construction generates a Kripke structure which is linear in the
size of the original Kripke structure.
It is easy to see that every stuttering expansion of a path of $\krip$
has a corresponding path in $\kripST$, where the repeated version of
state $s$ is captured by state $s^\ST$.
Conversely every path $\rho'$ in $\kripST$ whose trace satisfies
$\Always\Event \neg\ST$ can be turned into its ``stuttering
compression'' by removing all stuttering states, which is a path of
$\krip$.
Note that the constraint $\Always\Event \neg\ST$ guarantees that there
are infinitely many non-stuttering positions in $\rho'$, so $\rho$ is
well-defined.
This constructions makes a one-to-one correspondence between a
trajectory and a tuple of traces of $\krip$, with a corresponding
tuple of traces of $\kripST$.

\paragraph{State and monadic formulas are not affected by trajectories.}
State formulas are relational formulas that are evaluated at the
beginning of the computation.
Temporal monadic formulas only refer to one trace variable and are
stuttering invariant by definition.
Therefore, none of these formulas are affected by the stuttering induced
by a trajectory, as the relative stuttering among traces does not
affect their truth valuation.
We first note that given a trace assigned for each of the trace
variables in $\Vars(\varphi)$ the truth value of state formulas and
monadic formulas does not depend on the trajectory chosen.
%



\paragraph{Phase alignment of asynchronous sequences.}
We use the stuttering in $\kripST$ to encode the relative progress of
traces as dictated by a trajectory.
We will now introduce synchronous HyperLTL formulas to reason in
$\kripST$ about the corresponding states during the asynchronous
evaluation in $\krip$.
The important concept is that of ``phase changes'', which are the
points in a trace $\sigma$ at which the valuation of the predicates
$P$ in an atomic phase formula $(\pi_i,\pi_j,P)$ change.
Let $\Pi$ be a trace assignment for traces in $\krip$ that maps $\pi_i$
to a pointed trace $(\sigma,l)$.
We say that in assignment $\Pi$, trace variable $\pi_i$ is \emph{about
  to change phase} with respect to $(\pi_i,\pi_j,P)$ if for some
$p\in P$ either $p\in\sigma(l)$ but $p\notin\sigma(l+1)$ or
$p\notin\sigma(l)$ but $p\in\sigma(l+1)$.
Note that in $\kripST$ the next relevant letter (the one
corresponding to $\sigma(l+1)$ is the first letter that is not a
stuttering letter).
Formula $\Change_P(\pi_i)$ captures that the next non-stuttering step
of $\pi_i$ is a phase change (with respect to predicates in $P$ and
therefore with respect to atomic phase formula $\alpha$):
\[
  \Change_P(\pi_i)\DefinedAs \bigvee_{p\in P}p_{\pi_i}\not\Iff\Next(\ST_{\pi_i} \U p_{\pi_i})
\]
A phase change for $\pi_i$ in atomic phase formula $(\pi_i,\pi_j,P)$
implies that $\pi_j$ must also proceed to change phase.
The second observation is that when $\pi_i$ and $\pi_j$ are not
changing phases, any choice that the trajectory makes will preserve
the valuation of the atomic phase formula.

We now capture formally this intuition as formulas.
%
%
Predicate $\Move(\pi_i)\DefinedAs\Next(\neg\ST_{\pi_i})$ indicates
whether trace variable $\pi_i$ will move (and not stutter) at a given
instant of the computation.
The following temporal formula captures the consistency criteria of
phase changes as a synchronized decision for moving traces $\pi_i$ and
$\pi_j$ related by an atomic phase formula $(\pi_i,\pi_j,P)$:
\[
  \begin{array}{l}
    \Align_{(\pi_i,\pi_j,P)}\DefinedAs\\
    \hspace{5em}\begin{array}{l}
    \begin{pmatrix}
    \begin{array}{rcrcll}
      (\Move(\pi_i)&\And&\Move(\pi_j)) & \Then & (\Change_P(\pi_i)\Iff \Change_P(\pi_j)) & \And \\
      (\Move(\pi_i)&\And&\neg\Move(\pi_j)) & \Then & \neg\Change_P(\pi_i) & \And \\
      (\neg\Move(\pi_i)&\And&\Move(\pi_j)) & \Then & \neg\Change_P(\pi_j) & \\
    \end{array}
  \end{pmatrix}
                \end{array}
  \end{array}
\]
We will reduce the model-checking problem in \AHLTL to checking in
$\kripST$ that tuples of traces that align phase changes---for all
atomic phase formulas--- satisfy all sub-formulas of the specification
$\psi$.
The following two formulas express that all atomic phase formulas
align, and that all traces are fair (all traces eventually move):
\[
  \Phases\DefinedAs\bigwedge_{(\pi_i,\pi_j,P)\in\PS}
  \Align_{(\pi_i,\pi_j,P)}\hspace{7em}
  \Fair\DefinedAs\bigwedge_{\pi_i\in\{\pi_1\ldots\pi_n\}}\Always\Event\neg\ST_i
\]
We will then check in $\kripST$ that all stuttering traces that align
phases and are fair satisfy the desired formula $\psi$, that is
$(\Always\Phases\And\Fair)\Then\psi$.
Note that all those tuples of traces that do not align phases are
ruled out in the antecedent.

A final technical detail in the construction is that we must guarantee
that for all tuples of paths of $\krip$ there are stuttering
expansions that are fair and align phases, and that they have the same
number of phases.
Otherwise, there are paths of $\krip$ that cannot be aligned, which
inevitably leads to a violation of $\psiPH$.
It could be the case that some tuple of traces of $\krip$ cannot
possibly align the phase changes corresponding to all atomic phase
formulas.
This can happen in two cases: (1) when two traces have
different number of phases, and (2) when there is a circular
dependency between the atomic formulas that force the trajectory to
synchronize the traces in incompatible orders.
%
%
The first case is captured by:
\[
  \Bad \DefinedAs \bigvee_{(\pi_i,\pi_j,P)} \Big(\Always\neg\Change_P(\pi_i)\Big) \not\Iff 
  \Big(\Always\neg\Change_P(\pi_j)\Big)
\]
%
%
The second case is captured by the following formula, where
$\Cycles(\psiPH)$ are the sequences of atomic formulas that form a
simple cycle, that is \linebreak
$[(\pi^0,\pi^1,P^0),(\pi^1,\pi^2,P^1)\ldots(\pi^k,\pi^0,P^k)]$ such
that the second trace variable is the first trace variable of the next
atomic phase formula, circularly (see Ex.~$1$ below):
 \[
   \Block\DefinedAs\bigvee_{C\in\Cycles(\psiPH)}
   \Big(\bigwedge_{(\pi_i,\pi_j,P)\in C} \Change_P(\pi_i)
   \And\neg\Change_P(\pi_j)\Big)
 \]
 Essentially, $\Block$ encodes whether the set of traces involved
 cannot proceed without violating $\Phases$, because $\Align$ forbids
 all traces involved to move.
 Hence, the formula $\Phases\U(\Bad\Or\Block)$ captures to those
 traces of $\kripST$ that contain an aligned prefix of computation
 that lead to a miss-alignment or a block.
 The proof of correctness shows that given a tuple of traces of
 $\krip$, if there is a trajectory that aligns the phase changes
 (which must exist if there is a trajectory that makes $\psiPH$ true),
 then all trajectories that respect $\Always\Phases$ will also align
 the phase changes (and also satisfy $\psiPH$).

%
%

 We are finally ready to describe the synchronous phase formula
 $\psiSync$.
 First, this formula is only evaluated against tuples of fair traces,
 which correspond to the stuttering extensions of paths of $\krip$.
 Then, the phase formula $\psiPH$ is translated into a formula that
 captures (1) that following a phase alignment cannot lead to a block
 or to two traces changing phases a different number of times, and (2)
 that if phases are aligned then $\psiPH$ holds.
 Formally,
 \[
   \psiSync\DefinedAs \Fair\Then\psi[\psiPH\subst\psi'], \;\;\;\;\text{ where } \psi'=
   \begin{pmatrix}
    \begin{array}{l}
       \neg(\Phases\;\U(\Bad\Or\Block)) \;\And \\
       \Always\Phases\Then\psiPH
     \end{array}
   \end{pmatrix}
\]
%

\medskip
\stepcounter{example}
\noindent \textit{Example \theexample.}
\begin{figure}[b!]
  \vspace{-1em}
  \begin{tabular}{l@{\hspace{1em}}l@{\hspace{1em}}l}
  \includegraphics[scale=0.45]{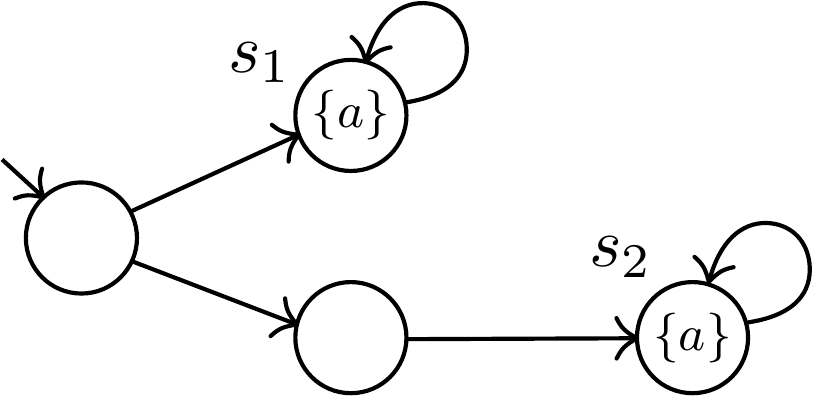} &
  \includegraphics[scale=0.45]{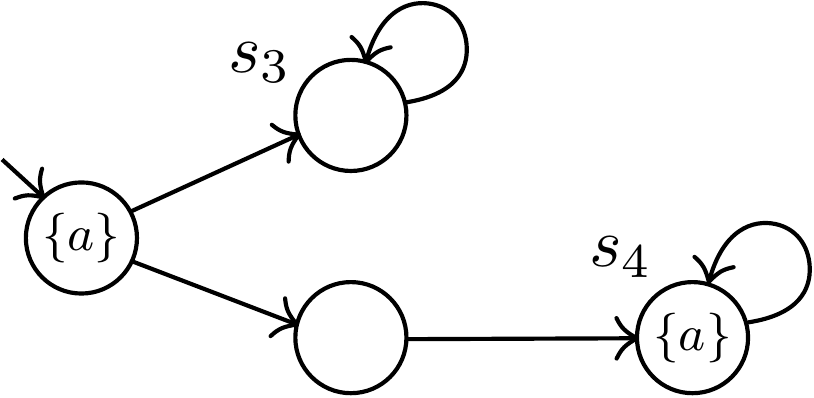} &
   \includegraphics[scale=0.45]{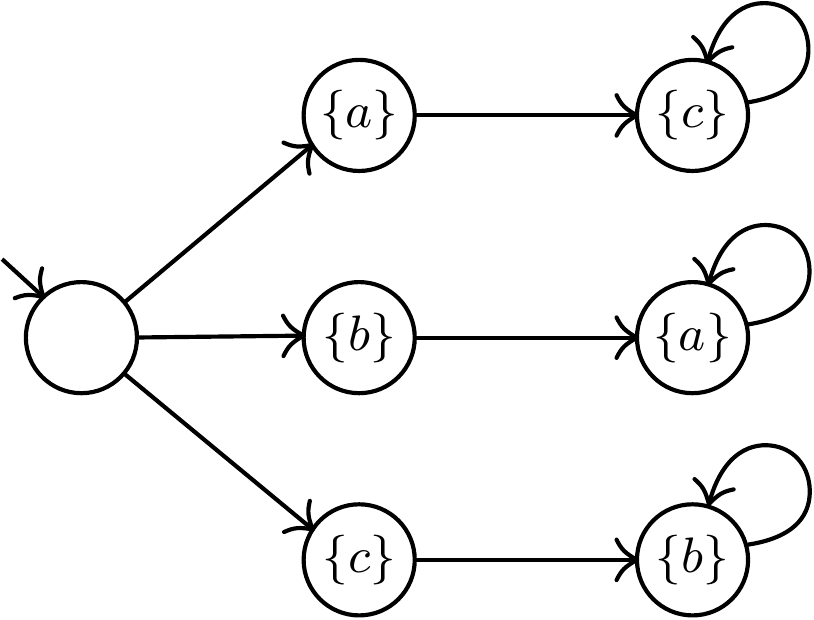}
  \end{tabular}
  \caption{Kripke structure $\krip_1$ (left), $\krip_2$ (middle) and
    $\krip_3$ (right).}
  \label{fig:kthree}
\end{figure}
We illustrate the previous definitions with the Kripke structures
$\krip_1$, $\krip_2$ and $\krip_3$ in Fig.~\ref{fig:kthree} and their
stuttering variants $\krip_1^\ST$, $\krip_2^\ST$ and $\krip_3^\ST$
Consider formula
$\forall\pi_1.\forall\pi_2.\Etau.\Always(a_{\pi_1}\Iff a_{\pi_2})$.
Consider the following trace assignments:

\vspace{0.5em}
\noindent\begin{tabular}{|l|l|l|}\hline && \\[-0.7em]
    \begin{array}{rrllll}
      \Pi^1(\pi_1)&\mapsto& \{\} & \{\ST\} & \{a\} & \ldots \\
      \Pi^1(\pi_2)&\mapsto& \{\} & \{\}    & \{a\} & \ldots
    \end{array} &
    \begin{array}{rrllll}
      \Pi^2(\pi_1)&\mapsto& \{\} & \{a\} & \{a\} &  \ldots \\
      \Pi^2(\pi_2)&\mapsto& \{\} & \{\} & \{a\}  & \ldots
    \end{array} &
    \begin{array}{rrllll}
      \Pi^3(\pi_1)&\mapsto& \{a\} & \{\} & \{\} & \ldots \\
      \Pi^3(\pi_2)&\mapsto& \{a\} & \{\} & \{a\} & \ldots
    \end{array}\\[1em]\hline
  \end{tabular}
\vspace{0.5em}

\noindent Consider the trace assignment $\Pi^1$ on the left, where
$\pi_1$ is a trace of $\krip_1^\ST$ corresponding to the
path of $\krip_1$ that visits $s_1$, and $\pi_2$ corresponds to the
path that visits $s_2$. This trace assignment aligns the atomic phase
formula $(\pi_1,\pi_2,\{a\})$ at all positions. In particular, at
position $0$, we have $\Change_{\{a\}}(\pi_1)$, but
$\neg\Change_{\{a\}}(\pi_2)$, and $\neg\Move(\pi_1)$ and
$\Move(\pi_2)$, as $\Align_{\{a\}}$ requires.

Consider now the trace assignment $\Pi^2$ in the middle, where again
$\pi_1$ corresponds to the path in $\krip_1^\ST$ that visits $s_1$ and
$\pi_2$ the path that visits $s_2$.
In this case, we have $\neg\Align_{\{a\}}$ at position $0$ because
$\Change_{\{a\}}(\pi_1)$ and $\neg\Change_{\{a\}}(\pi_2)$ hold, and
both $\Move(\pi_1)$ and $\Move(\pi_2)$.
Consider now $\Pi^3$ on the right, where $\pi_1$ corresponds to the
path of $\krip_2^\ST$ that visits $s_3$ and $\pi_2$ to the path of
$\krip_2^\ST$ that visits $s_4$.
  In this case $\Align_{\{a\}}$ holds at $0$ and $\Bad$ holds at $1$
  because at $1$,  $\Always\neg\Change_{\{a\}}(\pi_1)$ holds, but not 
  $\Always\neg\Change_{\{a\}}(\pi_2)$.
  %
  %
  Therefore, $\Phases\U(\Bad\Or\Block)$ holds for $\Pi^3$.
  Finally, consider
  $\forall\pi_1.\forall\pi_2.\forall\pi_3.\Etau.\Always( a_{\pi_1}\Iff
  a_{\pi_2} \And b_{\pi_2}\Iff b_{\pi_3} \And c_{\pi_3}\Iff c_{\pi_2}
  )$
  and the trace assignment $\Pi$ of $\krip_3^\ST$ shown below on the
  \begin{wrapfigure}[5]{l}{0.35\textwidth}
    \vspace{-1.6em}
    \begin{tabular}{|l|}\hline
      \begin{array}{rrlllll}
        \Pi(\pi_1)&\mapsto& \{\} & \{\} & \{a\} & \{c\}\ldots \\
        \Pi(\pi_2)&\mapsto& \{\} & \{\} & \{b\} & \{a\}\ldots \\
        \Pi(\pi_3)&\mapsto& \{\} & \{\} & \{c\} & \{b\}\ldots \\
      \end{array}\\[1em]\hline
    \end{tabular}
  \end{wrapfigure}
  left.
  In this case $\Phases$ holds at position $0$ and $\Block$ holds at
  position $1$.
  This is
 because $\Change_{\{a\}}(\pi_1)$ and $\neg\Change_{\{a\}}(\pi_2)$,
$\Change_{\{b\}}(\pi_2)$ and $\neg\Change_{\{b\}}(\pi_3)$, and
$\Change_{\{c\}}(\pi_3)$ and also $\neg\Change_{\{c\}}(\pi_1)$.
This illustrates that it will not be possible to align all three
atomic phase formulas.

\medskip


We are now ready to state the main result of this section.

\newcounter{thm-sync-async}
\setcounter{thm-sync-async}{\value{theorem}}

\begin{theorem}
  \label{thm:sync-async}  
  Let $\krip$ be a Kripke structure and $\psi$ an admissible formula.
  Then, $\krip\models \forall\pi_1\ldots\pi_n.\Etau.\psi$ if
  and only if $\kripST\models \forall\pi_1\ldots\pi_n.\psiSync$.
\end{theorem}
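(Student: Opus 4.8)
The plan is to prove the equivalence separately for each tuple $T$ of traces of $\krip$, exploiting the one-to-one correspondence (noted when $\kripST$ was constructed) between a pair consisting of $T$ and a fair trajectory $t$, and a tuple $T'_t$ of fair traces of $\kripST$ (those satisfying $\Always\Event\neg\ST$), in which the stuttering states record exactly the steps where a trace stutters under $t$. Compressing $T'_t$ recovers $T$, and its stuttering pattern recovers $t$. Consequently the outer quantifier $\forall\pi_1\ldots\pi_n$ over $\kripST$-traces ranges, once the guard $\Fair$ makes non-fair tuples vacuous, over all pairs $(T,t)$; so $\kripST\models\forall\pi_1\ldots\pi_n.\psiSync$ unfolds to ``$\forall T.\,\forall$ fair $t:\ T'_t\models\psi[\psiPH\subst\psi']$'', whereas $\krip\models\forall\pi_1\ldots\pi_n.\Etau.\psi$ unfolds to ``$\forall T.\,\exists$ fair $t:\ (\Pi_T,t)\models\psi$''. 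It therefore suffices to fix $T$ and prove $[\exists t.\,(\Pi_T,t)\models\psi]\Iff[\forall t.\,T'_t\models\psi[\psiPH\subst\psi']]$.

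First I would strip away everything but the phase slot. By the paragraph ``State and monadic formulas are not affected by trajectories'', every state and every monadic subformula has a truth value that is identical on $(\Pi_T,t)$ for all $t$ and on the corresponding $T'_t$; call these fixed Booleans $s$ and $m$. Writing $\psi=F(s,m,\psiPH)$ with $F$ monotone in its last slot (as $\psiPH$ occurs with positive polarity), and setting $\beta(t)=[(\Pi_T,t)\models\psiPH]$ and $\gamma(t)=[T'_t\models\psi']$, monotonicity yields $\exists t.F(s,m,\beta(t))=F(s,m,\fals)\Or(F(s,m,\tru)\And\exists t.\beta(t))$ and dually $\forall t.F(s,m,\gamma(t))=F(s,m,\fals)\Or(F(s,m,\tru)\And\forall t.\gamma(t))$. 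Thus the entire equivalence collapses to the single \emph{alignment identity} $\exists t.\beta(t)\Iff\forall t.\gamma(t)$.

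The remaining work is this alignment identity, which I would prove directly. For the forward direction, fix $t^*$ with $\beta(t^*)$; since $\psiPH$ is a conjunction of atomic phase formulas under an $\Always$, $t^*$ aligns every atomic phase formula at every instant. Take any fair $t$ and show $\gamma(t)$, \ie $T'_t\models\neg(\Phases\U(\Bad\Or\Block))\And(\Always\Phases\Then\psiPH)$. The crux is that the local $\Align$ constraints leave no slack: whenever $t$ has satisfied $\Phases$ on a prefix, the synchronized move/stutter decisions forced at each phase boundary drive the traces into the same canonical alignment witnessed by $t^*$; hence at any point reached along such a prefix neither $\Bad$ (which would require two related traces to have a different number of remaining phases) nor $\Block$ (a cyclic move dependency) can hold, because $t^*$ demonstrably proceeds past it, giving $\neg(\Phases\U(\Bad\Or\Block))$; and if $\Always\Phases$ holds then the same forcing makes $t$ align all phases throughout, so $\psiPH$ holds. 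For the converse I would argue the contrapositive: if no trajectory aligns all atomic phase formulas, then (as the two cases $\Bad$ and $\Block$ enumerate) either some related pair has unequal phase counts or the atomic formulas form a cycle that no trajectory can serialize, and in either case one steers a fair $t$ to align a maximal prefix and then exhibit the obstruction, so $\Phases\U(\Bad\Or\Block)$ holds on $T'_t$, $\psi'$ fails, and $\neg\gamma(t)$.

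Assembling, plugging the alignment identity into the two monotone expansions equates the asynchronous and synchronous readings for each $T$, and quantifying over $T$ (with the $\Fair$ guard making non-fair tuples of $\kripST$ vacuously satisfy $\psiSync$) yields the theorem. I expect the main obstacle to be the forcing argument in the forward direction of the alignment identity: showing that $\Phases$ alone, through the purely local $\Align$ constraints, pins down the global alignment so that \emph{every} $\Phases$-respecting trajectory (not merely the chosen $t^*$) aligns phases and thereby satisfies $\psiPH$, and in particular never manufactures a spurious $\Bad$ or $\Block$ along an aligned prefix.
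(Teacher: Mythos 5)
Your scaffolding matches the paper's proof: the correspondence you describe between pairs (trace tuple, fair trajectory) and fair trace tuples of $\kripST$ is exactly the paper's $\Expand$/$\Compress$ maps, your use of positive polarity to strip away state and monadic subformulas is a cleaner, explicitly monotone packaging of what the paper does in passing, and your forward direction of the ``alignment identity'' is precisely the paper's argument, resting on its two lemmas that (a) stuttering-equivalent assignments both satisfying $\Always\Phases$ agree on $\psiPH$, and (b) if one expansion satisfies $\Phases\U(\Bad\Or\Block)$ then no stuttering-equivalent expansion satisfies $\Always\Phases$.

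The genuine gap is in your converse. You take its contrapositive to start from ``no trajectory aligns all atomic phase formulas'' and conclude that some fair $t$ yields $T'_t\models\Phases\U(\Bad\Or\Block)$. But the hypothesis you actually have is $\neg\exists t.\beta(t)$, i.e., \emph{no trajectory satisfies $\psiPH$}, and this is strictly weaker than non-alignability: alignment of phase changes ($\Always\Phases$ on the expansion) is necessary but not sufficient for $\psiPH$, because the related traces must in addition exhibit equal colors. Concretely, take $\sigma_1=\{a\}^\omega$ and $\sigma_2=\emptyset^\omega$ with the single atomic phase formula $(\pi_1,\pi_2,\{a\})$: no trajectory satisfies $\psiPH=\Always(a_{\pi_1}\Iff a_{\pi_2})$, since it already fails at position $0$; yet neither trace ever changes phase, so $\Change$, $\Bad$ and $\Block$ are identically false, $\Phases$ holds everywhere, and hence $\Phases\U(\Bad\Or\Block)$ is false on \emph{every} fair expansion --- the witness your argument promises does not exist. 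In this situation $\neg\gamma(t)$ must instead be exhibited through the second conjunct of $\psi'$: any fair expansion satisfies $\Always\Phases$ but violates $\psiPH$, hence violates $\Always\Phases\Then\psiPH$.

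So your converse needs a two-case split that your proposal omits: (i) if no fair expansion satisfies $\Always\Phases$, run your obstruction argument (align a maximal prefix, then $\Bad$ or $\Block$ fires) --- this is the combinatorial claim that both you and the paper leave at sketch level; (ii) if some fair expansion $\Pi'$ does satisfy $\Always\Phases$, then by the paper's key lemma $\big((\Pi,t)\models\psiPH$ iff $\Expand(\Pi,t)\models\Always\Phases\And\psiPH\big)$ together with $\neg\exists t.\beta(t)$, this $\Pi'$ must falsify $\psiPH$, hence falsify $\psi'$ via its second conjunct. Case (ii), and the use of that lemma in this direction, is the idea missing from your proposal; it is exactly how the paper's ``$\Leftarrow$'' direction produces its witness.
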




\noindent 
Dually, to show that the $\exists^*$ fragment is decidable, we
consider replacing $\psiPH$ by the formula
\[ \psiESync\DefinedAs \Fair \And
  \psi[\psiPH\subst(\Always\Phases \And \psiPH)] \]

\newcounter{thm-EE}
\setcounter{thm-EE}{\value{theorem}}

\begin{theorem}
  \label{thm:EE}
  Let $\krip$ be a Kripke structure and $\psi$ an admissible formula.
  Then $\krip\models \exists\pi_1\ldots\pi_n.\Etau.\psi$ if and only if
  $\kripST\models \exists\pi_1\ldots\pi_n.\psiESync$.
\end{theorem}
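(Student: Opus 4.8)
The plan is to establish both implications by mirroring the proof of Theorem~\ref{thm:sync-async}, but replacing its ``rule out every bad trajectory'' reasoning by the simpler ``exhibit one good trajectory'' reasoning that the $\exists^*$ quantifier prefix makes available. I would lean on the four facts underlying the stuttering construction: (C) the correspondence between a tuple of traces of $\krip$ equipped with a fair trajectory and a tuple of fair traces of $\kripST$ (those satisfying $\Fair$, i.e.\ $\bigwedge_i\Always\Event\neg\ST_i$); (A) the trajectory-independence of state formulas and monadic temporal formulas, whose truth under a fair trajectory equals their truth on the underlying traces; (D) the identity between the synchronous reading of $\psiPH$ on a tuple of $\kripST$-traces and its asynchronous reading under the encoded trajectory, together with the fact that $\Always\Phases$ holds on such a tuple exactly when the encoded trajectory aligns every atomic phase formula; and (B) the alignment lemma from the proof of Theorem~\ref{thm:sync-async}, guaranteeing that a trajectory satisfying $\psiPH$ can be taken to align all phases. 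I would also use repeatedly that $\psiPH$ occurs with a single, positive polarity, so $\psi$ is monotone in its phase slot.

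For the direction $\kripST\models\exists\pi_1\ldots\pi_n.\psiESync \Rightarrow \krip\models\exists\pi_1\ldots\pi_n.\Etau.\psi$, I would take a witnessing tuple $\Pi'$ of $\kripST$-traces for $\psiESync=\Fair\And\psi[\psiPH\subst(\Always\Phases\And\psiPH)]$. The conjunct $\Fair$ lets me compress the tuple to traces of $\krip$ via (C) and read off the encoded fair trajectory $t$. Since $\Always\Phases\And\psiPH$ entails $\psiPH$ and $\psi$ is monotone in its phase slot, synchronous satisfaction of $\psi[\psiPH\subst(\Always\Phases\And\psiPH)]$ yields synchronous satisfaction of $\psi$; then (A) transports the state and monadic conjuncts and (D) transports the phase conjunct to the asynchronous side, giving $(\Pi,t)\models\psi$ and hence $\Pi\models\Etau.\psi$. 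Notably $\Always\Phases$ is merely dropped here, which is why the existential encoding can afford a plain conjunction rather than the $\Bad/\Block$ guard that $\psiSync$ needs in the universal case.

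For the converse I would start from a tuple $\Pi$ of $\krip$-traces and a fair trajectory $t$ with $(\Pi,t)\models\psi$, and produce a fair tuple of $\kripST$-traces modelling $\psiESync$. Using positivity I would split on the truth of $\psiPH$ under $t$. If $\psiPH$ fails, then $\psi$ is carried by its state and monadic part alone, so by monotonicity $\psi[\psiPH\subst(\Always\Phases\And\psiPH)]$ holds under $t$ regardless of the phase slot, and (A), (C), (D) turn $(\Pi,t)$ into the required $\kripST$-witness. If $\psiPH$ holds, I would invoke (B) to obtain a fair trajectory that aligns all phases while still satisfying $\psiPH$; on its $\kripST$-encoding both $\Always\Phases$ and $\psiPH$ hold by (D), the state and monadic conjuncts are preserved by (A), and fairness gives $\Fair$, so all of $\psiESync$ is satisfied.

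The step I expect to be the crux is fact (B) together with the precise equivalence in (D): showing that a fair trajectory which makes the \emph{asynchronous} phase invariant $\psiPH$ true can be re-chosen so that its $\kripST$-encoding satisfies the \emph{synchronous} alignment formula $\Always\Phases$. The difficulty is that $\Always\Phases$ is built from the $\ST$-sensitive gadgets $\Move$ and $\Change_P$, and one trajectory must simultaneously align \emph{all} atomic phase formulas; when these form a cycle, the circular dependency captured by $\Block$ can in principle forbid any common alignment, and when two related traces run through a different number of phases the mismatch captured by $\Bad$ does the same. The heart of the argument is therefore to show that, as soon as \emph{some} trajectory realizes $\psiPH$, none of these obstructions arises and a fair aligning trajectory exists — exactly the content borrowed from the $\forall^*$ analysis — after which the routine correspondences (A), (C), (D) close both directions. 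A secondary point to handle carefully is that all of this presupposes the quantified trajectories are fair (equivalently, that each trace is fully traversed, as in the definition of stuttering expansion); for a genuinely non-fair witness one would first normalise it to a fair one, which is possible precisely because monadic subformulas are stuttering-invariant and the phase invariant constrains a frozen trace's partners to remain within a single color.
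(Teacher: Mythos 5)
Your core argument is correct and is essentially the paper's own proof: the paper derives Theorem~\ref{thm:EE} directly from Lemma~\ref{lem:four} (the equivalence $(\Pi,t)\models\psiPH$ iff $\Expand(\Pi,t)\models\Always\Phases\And\psiPH$), combined with the $\Expand$/$\Compress$ correspondence and the trajectory-invariance of state and monadic subformulas---exactly your facts (A), (C), (D). Two of your embellishments are sound but unnecessary: since the phase-transfer lemma is an \emph{equivalence}, the truth value of the phase slot transfers exactly in both directions, so no positivity/monotonicity case split is needed; and in the $\krip$-to-$\kripST$ direction no re-choosing of the trajectory via your fact (B) is needed, because the witnessing trajectory itself already aligns all phases---a unilateral color change would falsify the asynchronous $\psiPH$ one step later, which is precisely the forward direction of Lemma~\ref{lem:four}.

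One side claim in your last paragraph is, however, false: a non-fair witness trajectory cannot in general be ``normalised'' to a fair one. Let $\krip$ have exactly the traces $\tau_1=\{a\}\{a,b\}\{a,b\}\cdots$ and $\tau_2=\{a\}\{\}\{\}\cdots$, and take the admissible formula $\psi \DefinedAs \Event b_{\pi_1}\And\Always\neg b_{\pi_2}\And\Always(a_{\pi_1}\Iff a_{\pi_2})$. The monadic conjuncts force $\pi_1\mapsto\tau_1$ and $\pi_2\mapsto\tau_2$; the (unfair) trajectory that freezes $\pi_2$ at position $0$ forever satisfies $\psi$, yet every fair trajectory eventually advances $\pi_2$ to a $\neg a$ position while $\tau_1$ keeps $a$, so the phase conjunct fails under all fair trajectories, and correspondingly $\kripST$ has no fair tuple satisfying $\psiESync$. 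Hence, if trajectory quantification ranged over arbitrary trajectories, the theorem itself would be false and no proof could repair it; the correct resolution is not normalisation but that fairness is built into the semantics of $\Etau$ and $\Atau$, which is what the paper implicitly assumes (its appendix speaks of ``the fairness requirement of $t$'' when arguing that $\Expand(\Pi,t)$ is fair). Under that reading your main argument goes through, and it coincides with the paper's.
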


The proof of Theorem~\ref{thm:EE} takes a witness tuple and trajectory
in $\krip$ and shows that the induced tuple in $\kripST$ is $\Fair$,
satisfies $\Always\Phases$ and that the valuation of $\psiPH$ is
preserved.
Similarly, as before, tuples of traces of $\kripST$ that are fair and
follow phase alignments induce a trajectory on their stuttering
compression that also preserve $\psiPH$.

\begin{corollary}
  The problems of model-checking $\forall^*$ admissible \AHLTL
  formulas and $\exists^*$ admissible \AHLTL formulas is decidable.
\end{corollary}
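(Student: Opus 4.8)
The plan is to treat this corollary as an immediate consequence of Theorem~\ref{thm:sync-async} and Theorem~\ref{thm:EE}, each of which reduces one of the two model-checking problems to a \HyperLTL\ model-checking instance over the stuttering Kripke structure $\kripST$; decidability then follows from the known decidability of \HyperLTL\ model-checking~\cite{finkbeiner13temporal,clarkson14temporal}. The two quantifier shapes named in the corollary, $\forall^*$ and $\exists^*$ over an admissible matrix, are exactly the forms $\forall\pi_1\ldots\pi_n.\Etau.\psi$ and $\exists\pi_1\ldots\pi_n.\Etau.\psi$ covered by the two theorems, so no further case analysis on the prefix is required.

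First I would argue that the reduction is \emph{effective}. The stuttering construction is linear: given $\krip=\ktuple$, the structure $\kripST$ has at most twice as many states and is built by the explicit rule for $\SST$, $\transST$, and $\LST$, so it is computable from $\krip$. Second I would check that the target formulas are genuine \HyperLTL\ formulas. The formula $\psiSync$ (respectively $\psiESync$) is obtained from the admissible $\psi$ by the purely syntactic substitution of a formula for the unique phase subformula $\psiPH$. Every building block---$\Change$, $\Move$, $\Align$, $\Phases$, $\Fair$, $\Bad$, $\Block$, and the enclosing $\psi'$---is assembled only from atoms $p_\pi$, Boolean connectives, and the temporal operators $\Next$ and $\U$ together with their derived forms $\Always$ and $\Event$. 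Crucially, none of them introduces a trajectory modality, so $\forall\pi_1\ldots\pi_n.\psiSync$ and $\exists\pi_1\ldots\pi_n.\psiESync$ lie in \HyperLTL, and the translation from $\psi$ is computable.

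With effectiveness established, the chain is short. For the universal case, Theorem~\ref{thm:sync-async} gives $\krip\models\forall\pi_1\ldots\pi_n.\Etau.\psi$ iff $\kripST\models\forall\pi_1\ldots\pi_n.\psiSync$; for the existential case, Theorem~\ref{thm:EE} gives the analogous equivalence with $\psiESync$. Since each right-hand side is a \HyperLTL\ model-checking query with an effectively computable Kripke structure and an effectively computable formula, and \HyperLTL\ model-checking is decidable, both \AHLTL\ problems are decidable.

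The main obstacle for decidability has already been discharged inside the two theorems---namely the correctness of the phase-alignment encoding relating trajectories in $\krip$ to fair, phase-aligned stuttering traces in $\kripST$. For the corollary itself the only remaining care is routine: confirming that the leading quantifier block is carried over unchanged while the modality $\Etau$ is eliminated and absorbed into the matrix, and that the derived operators $\Always$ and $\Event$ appearing in $\Fair$, $\Phases$, $\Bad$, and $\Block$ are \HyperLTL-definable from $\neg$, $\U$, and $\true$. Granting these standard checks, decidability follows at once.
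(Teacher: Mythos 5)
Your proposal is correct and follows essentially the same route as the paper: the corollary is an immediate consequence of Theorem~\ref{thm:sync-async} and Theorem~\ref{thm:EE} together with the known decidability of \HyperLTL{} model-checking on the effectively constructible $\kripST$. The additional checks you spell out (linearity of the construction, absence of trajectory modalities in $\psiSync$ and $\psiESync$) are exactly the routine observations the paper leaves implicit.
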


We finally consider the negation of phase formulas, called
\emph{co-phase formulas}, which are formulas of the form
$\Event \neg R$ where $R$ a conjunction of atomic phase formulas.
Interestingly, deciding co-admissible formulas (consisting of Boolean
combinations of state-formulas, monadic temporal formulas and one
co-phase formula in positive polarity) is easier than before, as one
can turn the co-phase formula into a monadic formula enumerating all
the violations of the atomic phase formulas ($p\in P$ such that
$p_{\pi_i}\not\Iff p_{\pi_j}$) turns the atomic phase formula into
$(\Event p_{\pi_i} \And \Event \neg p_{\pi_j}\big) \Or (\Event \neg
p_{\pi_i} \And \Event p_{\pi_j}\big)$.
It follows that model-checking co-admissible formulas is also
decidable (for both $\forall^*$ and $\exists^*$).
Note that an admissible formula in negative polarity is a
co-admissible formula in positive polarity (and vice versa).
Finally, since
$\krip\models \forall \pi_1\ldots\forall\pi_n.\Atau.\psi$ if and only
if $\krip\not\models \exists \pi_1\ldots\exists\pi_n.\Etau.\neg\psi$,
it follows that model-checking is also decidable for the $\Atau$
modality for both admissible and co-admissible formulas (in both
polarities), and for both the $\forall^*$ and $\exists^*$ fragments.

\newcounter{thm-adm-coadm}
\setcounter{thm-adm-coadm}{\value{theorem}}

\begin{theorem}
  \label{thm:adm-coadm}
  Model-checking $\forall^*$ or $\exists^*$ admissible and
  co-admissible formulas is decidable both for formulas with $\Etau$
  and formulas with $\Atau$.
\end{theorem}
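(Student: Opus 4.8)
The plan is to prove Theorem~\ref{thm:adm-coadm} by reducing every combination of quantifier prefix ($\forall^*$ or $\exists^*$), trajectory modality ($\Etau$ or $\Atau$), and formula class (admissible or co-admissible) to the two base cases already established, namely Theorem~\ref{thm:sync-async} and Theorem~\ref{thm:EE}, which settle admissible formulas under $\Etau$ for the $\forall^*$ and $\exists^*$ prefixes respectively. Beyond these, three ingredients are needed: a reduction of co-admissible formulas under $\Etau$ to ordinary HyperLTL model-checking; the polarity observation that a co-phase formula in positive polarity is exactly a phase formula in negative polarity (and conversely); and the semantic duality $\krip\models\forall\pi_1\ldots\pi_n.\Atau.\psi$ iff $\krip\not\models\exists\pi_1\ldots\pi_n.\Etau.\neg\psi$ (together with its $\exists^*/\forall^*$ mirror), which lets me convert every $\Atau$ instance into a complemented $\Etau$ instance.

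First I would dispatch co-admissible formulas under $\Etau$, where decidability is in fact easier to obtain than in the admissible case. The only ingredient of a co-admissible formula that could depend on the trajectory is its single co-phase subformula $\Event\neg R$, with $R$ a conjunction of atomic phase formulas. I would show that, under the existential-trajectory semantics, this subformula collapses to a purely monadic one. Distributing $\Event$ over the disjunction coming from $\neg R$ reduces matters to a single proposition, where the key equivalence is
\[ \Etau.\,\Event(p_{\pi_i}\not\Iff p_{\pi_j}) \;\equiv\; (\Event p_{\pi_i}\And\Event\neg p_{\pi_j})\Or(\Event\neg p_{\pi_i}\And\Event p_{\pi_j}). \]
The nontrivial direction produces, from a position of $\pi_i$ carrying $p$ and a position of $\pi_j$ carrying $\neg p$, a trajectory that first advances $\pi_i$ to the former while stuttering $\pi_j$ and then advances $\pi_j$ to the latter while stuttering $\pi_i$, so that the two traces are misaligned on $p$ at a common instant; the converse is immediate, since any witnessing aligned instant projects to the two separate eventualities. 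Because the right-hand side mentions each trace separately and is stuttering invariant, the rewritten co-phase formula is trajectory-independent, and hence so is the whole co-admissible formula. It is therefore equivalent to an ordinary HyperLTL formula over $\krip$, and decidability follows directly from HyperLTL model-checking, for both the $\forall^*$ and $\exists^*$ prefixes.

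With all four $\Etau$ cases---admissible and co-admissible, under $\forall^*$ and $\exists^*$---established, I would close the argument by handling $\Atau$ via duality. Using the identity $\krip\models\forall\pi_1\ldots\pi_n.\Atau.\psi$ iff $\krip\not\models\exists\pi_1\ldots\pi_n.\Etau.\neg\psi$ together with the polarity observation, pushing the negation inward turns an admissible $\psi$ into a co-admissible $\neg\psi$ (its phase formula now occurring negatively, as a co-phase formula) and a co-admissible $\psi$ into an admissible $\neg\psi$, in neither case introducing any nesting of trajectory modalities or new quantifier alternation. The inner problem $\exists\pi_1\ldots\pi_n.\Etau.\neg\psi$ is thus one of the $\Etau$ cases already shown decidable, and since the model-checking answer is decidable it remains decidable after complementation; the symmetric identity for the $\exists^*$ prefix dispatches $\exists\pi_1\ldots\pi_n.\Atau.\psi$. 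I expect the main obstacle to be the co-phase-to-monadic equivalence above: one must verify that the existential trajectory can always realign two independently witnessed violations into a single misaligned instant, and, in the $\Atau$ direction, one must track polarities carefully so that negation maps the admissible and co-admissible fragments exactly onto each other rather than escaping into formulas with an unrestricted phase structure that lie outside the decidable classes.
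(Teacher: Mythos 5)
Your proposal is correct and follows essentially the same route as the paper: rewriting the co-phase formula $\Event\neg R$ into the trajectory-independent monadic disjunction $(\Event p_{\pi_i}\And\Event\neg p_{\pi_j})\Or(\Event\neg p_{\pi_i}\And\Event p_{\pi_j})$, using the polarity correspondence between admissible and co-admissible formulas, and dispatching $\Atau$ via the duality $\krip\models\forall\vec{\pi}.\Atau.\psi$ iff $\krip\not\models\exists\vec{\pi}.\Etau.\neg\psi$. The paper leaves this argument at the level of the main-text discussion (no separate appendix proof is given), so your explicit trajectory construction witnessing the key equivalence, and your polarity bookkeeping, supply details the paper only sketches.
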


\newcommand{\Q}{\mathbb{Q}}

\subsection{The Accelerating Construction}
\label{subsec:accelerating}

The admissible formula in the stuttering construction can express many
formulas of interest, but the quantifier structure admits no
quantifier alternation.
We now consider a second decidable fragment for \AHLTL formulas
consisting of formulas with arbitrary quantification
$ \Q_1 \pi_1. \Q_2\pi_2.\ldots. \Q_n\pi_n \Etau. \psi$
such that $\Q_i \in \{\forall, \exists\}$, but where $\psi$ is an admissible formula where all atomic phase
formulas use the same atomic predicates $P\subseteq\AP$.
We call these admissible formulas \emph{simple admissible formulas}.
The proof of decidability proceeds this time by creating the
\emph{accelerated} Kripke structure $\kripAcc$, where paths jump in
one step to the next phase change, and reducing to a HyperLTL
model-checking problem on $\kripAcc$.

\paragraph{Accelerated Kripke structure.}
The main idea of the acceleration construction is to convert a finite
sequence of transitions in $\krip$ that only change phase in the last
transition into a single transition in $\kripAcc$.
Also, an infinite sequence of transitions with no phase change is
transformed into a self-loop around a sink state.
The alphabet remains the same, $\AP$.
Given $\krip=\tupleof{S, \States_\init, \trans, L}$, the accelerated Kripke
structure is $\kripAcc=\tupleof{\SAcc, \States_\init, \transAcc,\LAcc}$ where:
\begin{compactitem}
\item $\SAcc=S\cup\{\sbot \mid s\in S\}$ contains two copies of each state
  in $S$, where we use $\sbot$ to denote the sink state associated
  with $s$.
  We use $\Color(s)$ for the phase of $s$, that is, the concrete
  valuation in $s$ of the Boolean predicates in $P$ of the atomic
  phase formula.
  
\item For every states $s, s' \in S$ such that
  $\Color(s)\neq\Color(s')$, if there is a finite path
  $ss_2s_3\ldots s_ns'$ in $\krip$ such that
  $\Color(s)=\Color(s_2)=\cdots=\Color(s_n),$
  then we add a transition $(s,s')$ to $\transAcc$.
  These transitions model the jump at the frontier of phase changes.
  Additionally, if $s$ can be a sink we add a transition $(s,s_\bot)$
  and a self-loop from $s_\bot$ to itself.
\item $\LAcc(s)=L(s)$ for $s\in S$, and $\LAcc(\sbot)=L(s)$.
\end{compactitem}

This construction can, with standard techniques, be enriched to encode
the satisfaction of the temporal monadic formulas along paths of
$\krip$, and then also accelerate the fairness conditions (annotating
the accepting states reached along the accelerated paths) into
$\kripAcc$.

\paragraph{Relating paths to accelerated paths.}
We now define two auxiliary functions to aid in the proof.
\begin{itemize} 
\item 
  The first function, $\acc$, maps paths in $\krip$ into paths in
  $\kripAcc$.
Let $s$ be an arbitrary state of $\krip$ and
$\rho:ss_1s_2s_3\ldots$ an outgoing path from $s$.
Either there are infinitely many phase changes in $\rho$ or only
finitely many changes.
We create the path $\rho'=\acc(\rho)$ as follows. 
The initial state of $\rho$, that is, $s$, is preserved.
The states $s_{i_j}$ in $\sigma$ that are color changes (that is
$\Color(s_{i_j-1} )\neq\Color(s_{i_j}$) are also preserved, while the
states $s_k$ with $\Color(s_{k-1})=\Color(s_k)$ are removed from
$\rho$.
If there are only finitely many color changes in $\rho$, with $r$
being the last state preserved, then we pad the path with
$r_\bot^\omega$, so $\rho'$ is also an infinite path.
It is easy to see that $\rho'$ is a path of $\kripAcc$ outgoing $s$.
It is also easy to see that the phase changes in $\rho$ and
$\rho'$ are the same.

\item 
  The second map, $\dec$, takes a path $\rho':ss_1's_2'\ldots$ of $\kripAcc$
  and maps it to a path of $\krip$ as follows.
  For every transition $(s_i',s_{i+1}')$ in $\rho$ such that $s_{i+1}'$
  is not of the form $r_\bot$, there is a finite path $r_1r_2\ldots r_m$
  in $\krip$ from $s_i'$ into $s_{i+1}'$ that visits only states with the
  same color as $s_i'$, except $s_{i+1}$ that is a color change.
  In $\rho$, we insert $r_1r_2\ldots r_m$ between $s_i'$ and $s_{i+1}'$.
  Now, if for some $j$, $s'_j$ is of the form $r_\bot$ then
  $s'_k=r_\bot$ for all $k>j$.
  In $\krip$ there must an infinite path from $s'_j$ that only visits the
  same color as $s'_j$. 
  We remove all successor states after the first such $r_\bot$ state
  and replace it with one such infinite path.
\end{itemize}

Given a trace assignment $\Pi$ for formula
$\Q_1\pi_1.\ldots \Q_n\pi_n.\Etau.\psi$ that assigns
$\Pi(\pi_i)=(\sigma_i,0)$ for every $i$ and a path assignment $\Pi'$
for formula $\Q_1\pi_1.\ldots.\Q_n\pi_n.\psi$ that assigns
$\Pi'(\pi_)i=(\sigma_i',0)$, we write $\acc(\Pi)=\Pi'$ if the paths
that generate the corresponding traces are related by $\acc$.
Similarly we defined $\dec(\Pi')=\Pi$.
%
%
It is easy to show from the construction above that if
$\Pi\models\Etau\psi$ then $\acc(\Pi)\models\psi$, and if
$\Pi'\models\psi$ then $\dec(\Pi')\models\Etau\psi$.


The main result for the accelerating construction follows immediately
from this observation and allows to reduce the model-checking problem
to HyperLTL.

\newcounter{thm-accelerating}
\setcounter{thm-accelerating}{\value{theorem}}

\begin{theorem}
  \label{thm:accelerating}
  Let $\krip$ be an arbitrary Kripke structure,
  $\Q_1\pi_1.\ldots.\Q_n\pi_n.\Etau.\psi$ such that $\psi$ is a simple
  admissible formula.
  Then $\krip\models{}\Q_1\pi_1.\ldots{}\Q_n\pi_n\Etau.\psi$ if and only if
  $\kripAcc\models{}\Q_1\pi_1.\ldots{}\Q_n\pi_n.\psi$.
\end{theorem}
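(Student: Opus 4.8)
The plan is to prove the equivalence by induction on the quantifier prefix $\Q_1\pi_1.\ldots.\Q_n\pi_n$, taking the observation that precedes the theorem as the base case. Concretely, I would strengthen the statement to an invariant about partial assignments: for every prefix assignment $\Pi$ over the already-quantified variables, its accelerated counterpart $\acc(\Pi)$, and every suffix $\Q_k\pi_k.\ldots.\Q_n\pi_n$ of the prefix,
\[
  \Pi\models_{\Trace(\krip)}\Q_k\pi_k.\ldots.\Q_n\pi_n.\Etau.\psi
  \quad\Longleftrightarrow\quad
  \acc(\Pi)\models_{\Trace(\kripAcc)}\Q_k\pi_k.\ldots.\Q_n\pi_n.\psi.
\]
The theorem is the instance $k=1$ with $\Pi=\Pi_\emptyset$ (and $\acc(\Pi_\emptyset)=\Pi_\emptyset$). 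The base case $k=n+1$ is exactly the stated observation: $\Pi\models\Etau\psi$ implies $\acc(\Pi)\models\psi$, and $\acc(\Pi)\models\psi$ implies $\dec(\acc(\Pi))\models\Etau\psi$, which in turn yields $\Pi\models\Etau\psi$ because $\dec\circ\acc$ preserves the sequence of phase changes together with the (stuttering-invariant) monadic and the initial state information on which $\Etau\psi$ depends.

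Two structural facts drive the inductive step. First, $\acc$ and $\dec$ act componentwise on assignments, so that $\acc(\Pi[\pi_k\mapsto(\sigma,0)])=\acc(\Pi)[\pi_k\mapsto(\acc(\sigma),0)]$ and symmetrically for $\dec$. Second, $\acc$ sends every trace of $\krip$ to a trace of $\kripAcc$, while $\dec$ sends every trace $\sigma'$ of $\kripAcc$ to a trace of $\krip$ with $\acc(\dec(\sigma'))=\sigma'$ up to labelling, so the image $\{\acc(\sigma)\mid\sigma\in\Trace(\krip)\}$ exhausts $\Trace(\kripAcc)$ modulo the equivalence that $\psi$ can observe. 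For a universal quantifier $\Q_k=\forall$, I would unfold the semantics, apply the induction hypothesis to each trace $\sigma$, rewrite using componentwise $\acc$, and then use this surjectivity to replace ``for all $\acc(\sigma)$'' by ``for all traces $\sigma'$ of $\kripAcc$''. The existential case is the dual: a witness $\sigma$ is transported forward via $\acc(\sigma)$, and a witness $\sigma'$ backward via $\sigma=\dec(\sigma')$, where $\acc(\dec(\sigma'))=\sigma'$ ensures that $\acc(\Pi[\pi_k\mapsto\sigma])$ coincides with the accelerated witness so that the induction hypothesis applies.

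The main obstacle is the base case rather than the quantifier bookkeeping, namely making precise that $\acc$ and $\dec$ are mutually inverse up to the equivalence induced by $\psi$. This requires checking that the accelerating construction preserves exactly the data $\psi$ can see: the sequence of colours (phase changes), which is identical on $\rho$ and $\acc(\rho)$ by construction; the satisfaction of each monadic temporal formula, which is why $\kripAcc$ must be enriched to annotate monadic and fairness information along the collapsed same-colour segments; and the initial state, on which the state formulas depend. I would also treat the finite-phase-change case carefully, verifying that the padding with the sink $\sbot$ and the reinsertion under $\dec$ of an infinite same-colour path agree on all of the above. Finally, I would exploit that \emph{simple} admissibility---all atomic phase formulas sharing one predicate set $P$---is precisely what rules out conflicting alignment demands, so that a single synchronous run of $\kripAcc$ realizes an existential trajectory on $\krip$, and conversely every synchronous run of $\kripAcc$ decompresses, via $\dec$, to a trajectory witnessing $\Etau\psi$.
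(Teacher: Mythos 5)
Your proposal is correct and follows essentially the same route as the paper: the paper's entire argument consists of the two maps $\acc$ and $\dec$ together with the observation that $\Pi\models\Etau\psi$ implies $\acc(\Pi)\models\psi$ and $\Pi'\models\psi$ implies $\dec(\Pi')\models\Etau\psi$, from which the theorem is said to follow immediately. Your explicit quantifier induction, the componentwise action of $\acc$/$\dec$, the identity $\acc(\dec(\sigma'))=\sigma'$, and the caveats about enriching $\kripAcc$ for monadic formulas and handling the sink-state padding are exactly the bookkeeping the paper leaves implicit, so your write-up is a faithful (and more detailed) rendering of the paper's proof.
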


\subsection{Decidable Practical  \AHLTL formulas}
\label{subsec:dec-examples}

We revisit the properties expressed in Section~\ref{subsec:examples}.
\begin{itemize}
\item \textit{Linearizability}. The property $\lnz$ is of the form
  $\forall \pi.\exists \pi'.\Etau.\G(\history_\pi \Iff
  \history_{\pi'})$ where the temporal formula is a simple admissible
  formula.
  Therefore $\lnz$ is decidable by the accelerating construction.
\item \textit{Goguen and Meseguer's non-interference}. The property
  $\gmni$ is expressed by 
  $\forall \pi. \exists \pi'. \Etau.  (\G \lambda_{\pi'}) \, \And \,
  \G(\lo_\pi \Iff \lo_{\pi'})$, that is, a Boolean combination of a
  monadic temporal formula and a simple admissible formula. Therefore,
  $\gmni$ is decidable by the acceleration algorithm.
\item \textit{Not never terminates}. Formula $\nnt$ is simply a
  Boolean combination of state formulas and monadic temporal formulas:
  $\forall \pi. \exists \pi'.\exists \pi''.\Etau.  (\pi[0] = \pi'[0] =
  \pi''[0]) \; \Into \; (\F \, \term_{\pi'} \; \And \; \G \neg
  \term_{\pi''})$, so it is again decidable by the acceleration
  construction.
\item \textit{Termination-insensitive noninterference}. To handle
  $\tin$ we rewrite the formula as follows
  \[\begin{array}{l}
	\tin \DefinedAs \forall \pi. \forall \pi'. \Etau.  
      \Big(l_\pi \Iff l_{\pi'}\Big) \; \Into \;
    \begin{pmatrix}
      \begin{array}{l}
	(\G \neg \term_\pi \Or  \G \neg \term_{\pi'})  \Or \\
        \G\big((l_\pi\And \term_{\pi}) \Iff(l_{\pi'}\And \term_{\pi'})\big)
      \end{array}
    \end{pmatrix}
    \end{array} \]
\label{form:TNI2}
  Note that $(l_\pi\And \term_{\pi})$ can be turned into a state
  predicate of $\pi$.  This formula is equivalent because the last
  case is evaluates precisely to $l_\pi\Iff l_{\pi'}$ when both traces
  terminate.
  This formula can be handled by the stuttering construction.
\item  \textit{Termination-sensitive noninterference}. Similarly, to handle
  $\tsn$ we rewrite the formula as 
  \[\begin{array}{l}
	\tsn \DefinedAs \forall \pi. \forall \pi'. \Etau.  
      \Big(l_\pi \Iff l_{\pi'}\Big) \; \Into \;
    \begin{pmatrix}
      \begin{array}{l}
	(\G \neg \term_\pi \And  \G \neg \term_{\pi'})  \Or \\
        \G\big((l_\pi\And \term_{\pi}) \Iff(l_{\pi'}\And \term_{\pi'})\big)
      \end{array}
    \end{pmatrix}
    \end{array} \]
  This is again equivalent because the last case again is the only
  relevant case when both paths terminate.
  Again, this case is covered by the stuttering construction.
\end{itemize}




\section{Undecidability and Lower-bound Complexity}
\label{sec:undec}

In this section, we show that the general problem of model-checking
\AHLTL is undecidable.
Then, we show a polynomial reduction from the synchronous HyperLTL
model-checking into \AHLTL model-checking, which shows that even for
those \AHLTL formulas for which the model-checking is decidable, this
problem is no easier than the corresponding problem for HyperLTL,
which is known to be \comp{PSPACE-hard} in the size of the Kripke
structure.

\newcounter{thm-undec}
\setcounter{thm-undec}{\value{theorem}}

\begin{theorem}
  \label{thm:undec}
  Let $\krip$ be a Kripke structure and $\varphi$ be an asynchronous
  HyperLTL formula. The problem of determining whether or not
  $\krip \models \varphi$ is undecidable.
\end{theorem}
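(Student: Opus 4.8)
The plan is to reduce the Post Correspondence Problem (PCP), which is undecidable, to the \AHLTL model-checking problem, producing a formula of the promised shape $\forall\pi.\forall\pi'.\Etau.(\G\psi_1(\pi,\pi')\wedge\F\psi_2(\pi,\pi'))$. Recall that a PCP instance is a finite list of dominoes $\domi{u_1}{v_1},\ldots,\domi{u_m}{v_m}$ over an alphabet $\Gamma$, and a solution is a nonempty index sequence $i_1\cdots i_n$ with $u_{i_1}\cdots u_{i_n}=v_{i_1}\cdots v_{i_n}$. From an instance I would build, in polynomial time, a Kripke structure $\kripPCP$ and a formula $\varphiPCP$ of the above shape such that $\kripPCP\models\varphiPCP$ if and only if the instance has \emph{no} solution; since the set of solvable instances is undecidable, this establishes the theorem, and the purely universal prefix shows undecidability already for the $\forall^*$ fragment.

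First I would design $\kripPCP$ so that each of its traces encodes a candidate index sequence together with the induced concatenations. A trace nondeterministically guesses dominoes $i_1,i_2,\dots$; while emitting domino $i_j=\domi{u_{i_j}}{v_{i_j}}$ it outputs, cell by cell, the symbols of the upper word $u_{i_j}$ on a set of ``upper'' propositions and the symbols of the lower word $v_{i_j}$ on a disjoint set of ``lower'' propositions, padding the shorter side with a blank tag and carrying a domino-boundary marker. After finitely many dominoes the trace emits an end marker $\#$ and then loops forever in a sink emitting blanks, so all traces are infinite. The two universally quantified traces $\pi$ and $\pi'$ thus supply two candidate sequences, and the existential trajectory $\Etau$ is the device that lets us stretch them independently so that the upper concatenation carried by $\pi$ can be compared, symbol by symbol, against the lower concatenation carried by $\pi'$, even though these words have different lengths.

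The key point is to arrange the two components so that, for a fixed pair $(\pi,\pi')$, the formula $\Etau.(\G\psi_1\wedge\F\psi_2)$ holds exactly when $(\pi,\pi')$ does \emph{not} witness a matching solution. I would let $\G\psi_1$ be a global invariant asserting that the chosen trajectory realizes a legal length-cancelling alignment: it forbids a trace to move unless its current cell is ``active,'' it forbids overrunning the end marker $\#$, and it keeps the comparison well defined, holding trivially once both traces have entered their sinks. I would let $\F\psi_2$ assert that, under this alignment, a content discrepancy is eventually exposed---an aligned upper symbol of $\pi$ differing from the aligned lower symbol of $\pi'$, a disagreement on the position of $\#$, or a difference in the two index sequences. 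Consequently, if $\pi$ and $\pi'$ encode the same index sequence and that sequence is a solution, then the upper concatenation of $\pi$ equals the lower concatenation of $\pi'$, no discrepancy exists, and $\F\psi_2$ fails under \emph{every} legal alignment, so the inner formula is false; otherwise (the sequences differ, or the common sequence is a non-solution) some legal alignment reaches a witnessing discrepancy and the inner formula is true. The only pairs falsifying $\Etau.(\G\psi_1\wedge\F\psi_2)$ are thus those in which both traces encode one and the same solution, and such a pair exists iff the instance is solvable; hence under the $\forall\pi.\forall\pi'$ prefix we obtain $\kripPCP\models\varphiPCP$ iff the instance is unsolvable.

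I expect the main obstacle to be exactly the step of \emph{forcing the trajectory}: because $\Etau$ quantifies over arbitrary stuttering, I must engineer the structural invariant $\G\psi_1$---via the blank tags, the ``active cell'' discipline, and the boundary markers---tightly enough that the only trajectories satisfying it are those realizing the intended cell-by-cell alignment of $\pi$'s upper word against $\pi'$'s lower word. Otherwise a spurious misalignment could fabricate a discrepancy for a genuine solution, or hide one for a non-solution, and the equivalence would break. Proving that the admissible trajectories are precisely the intended length-cancelling alignments, together with checking that differing index sequences are correctly surfaced as discrepancies rather than silently permitted, is the technical crux; once it is in place, the correspondence with PCP solvability, and hence undecidability, follows routinely.
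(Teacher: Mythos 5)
Your high-level plan is indeed the paper's: reduce the \emph{complement} of PCP, quantify two traces universally, and use $\Etau$ so that a pair of traces fails the matrix under every trajectory exactly when it encodes a solution. But the step you yourself flag as the crux hides a genuine gap that cannot be repaired inside a single conjunction $\G\psi_1\wedge\F\psi_2$: comparing letters and comparing index sequences require \emph{incompatible} alignments. Under the length-cancelling letter alignment that $\G\psi_1$ enforces, the domino-boundary markers of $\pi$ and $\pi'$ are necessarily skewed whenever some used domino has $|u_i|\neq|v_i|$ --- and this skew occurs even when $\pi=\pi'$ encodes a genuine solution. So there are two horns: (i) if $\F\psi_2$ treats positionally mismatched boundary markers as ``a difference in the two index sequences,'' then it fires on every solution pair, no pair ever falsifies your formula, and the reduction fails in one direction; (ii) if $\F\psi_2$ fires only on genuine index differences, these are simply not observable under the letter alignment (atoms compare the currently aligned positions, and counting boundaries without aligning them is beyond the logic), so a pair with different index sequences but equal strings falsifies your formula although the instance may have no solution --- e.g.\ for dominoes $\domi{a}{c}$, $\domi{b}{c}$, $\domi{c}{ab}$, the pair given by indices $(1,2)$ (upper word $ab$) and $(3)$ (lower word $ab$) falsifies, yet no index sequence even starts with agreeing first letters. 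The paper escapes this dilemma structurally: its matrix is $\varphiType\Then(\varphiSeqWV\vee\varphiLetter)$, a \emph{disjunction} of two self-aligned modes, where $\varphiSeqWV$ bundles its own alignment $\G(\idx_{\pi_w}\Iff\idx_{\pi_v})$ with an eventual index mismatch, and $\varphiLetter$ bundles $\G(\lc_{\pi_w}\Iff\lc_{\pi_v})$ with an eventual letter mismatch; the single existential trajectory then \emph{chooses} which of the two incompatible alignments to realize, and for a solution pair neither mode can expose a mismatch. The shape $\forall\pi.\forall\pi'.\Etau.(\G\psi_1\wedge\F\psi_2)$ announced in the introduction is the shape of each disjunct, not of the whole matrix; taking it literally is what forced you into one alignment.

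A second, independent gap: you assume every trace emits the end marker after finitely many dominoes, but a Kripke structure cannot enforce termination --- since arbitrarily long domino sequences must be available, there are necessarily traces that loop through dominoes forever. This breaks the equivalence on instances admitting only ``infinite matchings'': for the single-domino instance $\domi{a}{aa}$ there is no PCP solution, yet the trace $\pi$ repeating that domino forever carries upper stream $a^\omega$ and lower stream $a^\omega$, so the pair $(\pi,\pi)$ exhibits no letter discrepancy, no end-marker disagreement, and identical index sequences under any legal alignment; it falsifies your formula and your reduction wrongly reports a solution. The paper guards against exactly this with $\varphiType$: the antecedent $\big((w_{\pi_w}\And\neg v_{\pi_w})\U\mathit{end}_{\pi_w}\big)\And\big((\neg w_{\pi_v}\And v_{\pi_v})\U\mathit{end}_{\pi_v}\big)$ makes any pair containing a non-terminating trace satisfy the matrix vacuously, so only finite domino sequences can falsify $\varphiNoPCP$. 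You need an analogous guard restricting attention to traces that actually reach the end marker.
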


\begin{proof}[sketch] We reduce the complement of the {\em post
    correspondence problem
    (PCP)}~\cite{post47recursive,sipser12introduction} to the \AHLTL
  model checking problem.
\begin{figure}[b!]
	\centering
	\includegraphics[scale=.8]{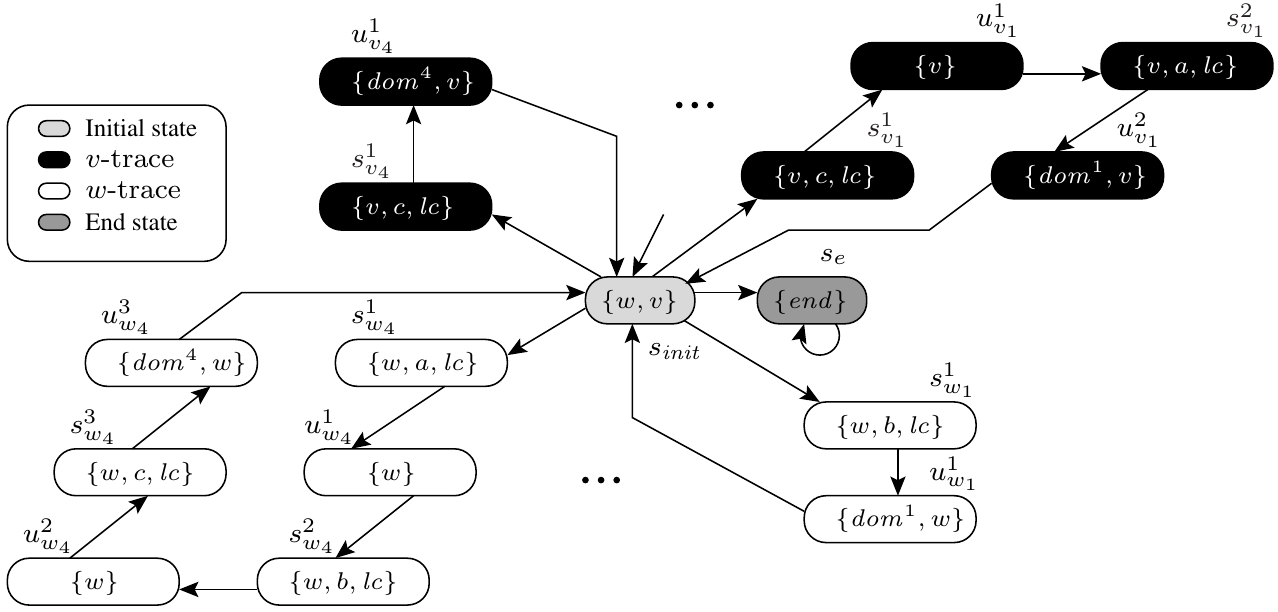}
	\caption{Mapping from PCP to model checking \AHLTL
		(only construction for dominos $\domi{w_1}{v_1} =\domi{b}{ca}$ and
		$\domi{w_4}{v_4} = \domi{abc}{c}$ are shown).}
	\label{fig:undec}
\end{figure}
PCP consists of a set of dominos, for example, of the form $\domi{w}{v} 
=\{\domi{b}{ca},\domi{a}{ab},\domi{ca}{a},\domi{abc}{c}\}$ and the problem is to decide whether there is 
a sequence of dominos (with possible repetitions), such that the upper and lower finite strings 
of the dominos are equal. A solution to the above set of dominos is the sequence       
$\domi{a}{ab}\domi{b}{ca}\domi{ca}{a}  \domi{a}{ab}\domi{abc}{c}$. We map a given set of dominos to 
a Kripke structure that allows arranging the dominos in a sequence (see Fig.~\ref{fig:undec} for an 
example), where $v$ and $w$ indicate lower and upper words, respectively, $\idx^i$ is for each domino 
$\domi{w_i}{v_i}$, and proposition $\lc$ marks whether or not a new letter is processed. The \AHLTL 
formula in our reduction is the following such that $\idx_{\pi_w} \DefinedAs 
\bigvee_{i\in[1..k]}\idx_{\pi_w}^i$:
 \begin{align*}
\nonumber	\varphiNoPCP \DefinedAs&\forall \pi_w\forall \pi_v. \Etau.\Big(\varphiType 
	\Then  
                                         ( \varphiSeqWV \vee \varphiLetter)\Big)   \\
   \text{where } \quad \quad
	\varphiType \DefinedAs & \Big((w_{\pi_w} \And \neg v_{\pi_w}) \U 
	\mathit{end}_{\pi_w}\Big) \And \Big((\neg w_{\pi_v} \And v_{\pi_v})  \U 
                                 \mathit{end}_{\pi_v} \Big) \notag \\
 \end{align*}   
 \begin{align*}
	\varphiSeqWV \DefinedAs & \G( \idx_{\pi_w}\Iff\idx_{\pi_v}) \And  \F\bigvee_{i=1}^k \idx_{\pi_w}^i 
                                  \not\Iff\idx^i_{\pi_v} \notag \\
	\varphiLetter \DefinedAs &  \G (\lc_{\pi_w} \Iff  \lc_{\pi_v}) \;
	\; \And \F\bigvee_{l \in \alphabet_{\mathit{pcp}}} 
                                   (l_{\pi_w}  \not\Iff  l_{\pi_v} \notag)
 \end{align*}
The intention of formula $\varphiNoPCP$ is that the Kripke structure
is a model of the formula if and only if the original PCP problem
has \emph{no} solution.
Intuitively, formula $\varphiType$ forces trace $\pi_w$ (respectively, $\pi_v$) 
to traverse only the traces labeled by $w$ (respectively, $v$) to build a 
$w$-word (respectively, $v$-word). Formula $\varphiSeqWV$ establishes that the trajectory aligns the
positions at which the domino indices are checked and at last once
the index is different.
Finally, formula $\varphiLetter$ captures if $\pi_w$ and $\pi_v$ are
aligned to compare the letters, at least one pair of the letters prescribed by the existential trajectory 
are different. In the detailed proof in the appendix, we show that the constructed Kripke structure 
satisfies formula $\varphiNoPCP$ if and only if the answer to deciding PCP is negative.\qed
\end{proof}

Theorem~\ref{thm:undec} above implies that there is no algorithm to
decide the model-checking problem correctly for every formula and
every system.
However, as we saw in Section~\ref{sec:decidable} for some formulas
the model-checking problem is decidable.
We now show that in these cases the problem is at least as hard as
model-checking HyperLTL, which is known to be
PSPACE-hard~\cite{clarkson14temporal,rabe16temporal}.

\newcounter{thm-lowerbound}
\setcounter{thm-lowerbound}{\value{theorem}}

\begin{theorem}
  \label{thm:lowerbound}
  Given a HyperLTL formula $\varphi$ and a Kripke structure $\krip$
  there is a \AHLTL formula $\varphi'$ and a Kripke structure $\krip'$
  such that $\krip'$ is linear in the size of $\krip$, $\varphi'$ is
  polynomial on the size of $\varphi$ and
  \( \krip\models\varphi \text{ if and only if } \krip'\models\varphi'\).
\end{theorem}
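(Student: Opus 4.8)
The plan is to reduce synchronous HyperLTL model-checking to \AHLTL model-checking by \emph{forcing the existential trajectory to be the fully synchronous one}. Write the given formula as $\varphi = Q_1\pi_1.\cdots Q_n\pi_n.\psi$ with $Q_i\in\{\forall,\exists\}$ and $\psi$ quantifier-free. First I would build $\krip'$ as the synchronous product of $\krip$ with a two-state flip-flop that toggles a fresh proposition $b\notin\AP$ at every transition: $\krip'$ has state set $S\times\{0,1\}$, the clock component alternates deterministically, and the labeling adds $b$ exactly in the states whose clock bit is $1$. Thus every trace of $\krip'$ is a trace of $\krip$ in which $b$ holds exactly at the odd positions (say). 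This doubles the number of states, so $\krip'$ is linear in $\krip$, and since the clock cycle is total, $\krip'$ remains total. The \AHLTL formula is $\varphi' \DefinedAs Q_1\pi_1.\cdots Q_n\pi_n.\Etau.(\psi_b\And\psi)$, where the synchronization guard is $\psi_b\DefinedAs\G\bigwedge_{1\le i<j\le n}(b_{\pi_i}\Iff b_{\pi_j})$. Since $\psi$ never mentions $b$, its truth on an augmented trace of $\krip'$ agrees with its truth on the underlying trace of $\krip$; the guard $\psi_b$ contributes $O(n^2)$ atoms with $n\le|\varphi|$, so $\varphi'$ is polynomial in $|\varphi|$.

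The crux is a \emph{trajectory-forcing lemma}: for any assignment $\Pi$ of traces of $\krip'$ to $\pi_1,\ldots,\pi_n$, the only trajectory $t$ with $(\Pi,t)\models\psi_b$ is the fully synchronous trajectory $t^\ast$ defined by $t^\ast(k)=\{\pi_1,\ldots,\pi_n\}$ for all $k$. I would prove this by tracking, at each trajectory step $k$, the parity of every pointer $p_{\pi_i}(k)$: because $b$ strictly alternates, the atom $b_{\pi_i}$ reads exactly the parity of $p_{\pi_i}(k)$, so $\psi_b$ asserts that all these parities coincide at every step. All pointers start at $0$ and hence agree initially; a step that advances some but not all traces makes two parities disagree, while a step advancing all traces preserves equality. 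Since a trajectory is a sequence of \emph{non-empty} subsets of $\Vars(\varphi')$, the ``advance none'' option is forbidden, so the unique parity-preserving choice at every step is ``advance all'', giving $t=t^\ast$; conversely $t^\ast$ trivially satisfies $\psi_b$. Under $t^\ast$ the successor operation $(\Pi,t^\ast)+1$ advances every pointer by one, so the \AHLTL semantics of $\Next$, $\U$, and hence $\G$, coincide step-for-step with the synchronous HyperLTL semantics; therefore $(\Pi,t^\ast)\models\psi$ in \AHLTL iff $\Pi\models\psi$ in synchronous HyperLTL.

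Combining these, $\Pi\models\Etau.(\psi_b\And\psi)$ holds iff the synchronous trajectory satisfies $\psi$, iff $\Pi$ satisfies $\psi$ synchronously. Pushing this equivalence outward through the unchanged quantifier prefix $Q_1\pi_1.\cdots Q_n\pi_n$ is legitimate because both logics quantify over the same trace set $\Trace(\krip')$ and reset pointers to $0$, and it yields $\krip'\models\varphi'$ iff $\Trace(\krip')$ is a synchronous HyperLTL model of $\varphi$. Finally, since the clock is deterministic, forgetting $b$ is a bijection between $\Trace(\krip')$ and $\Trace(\krip)$ that preserves the truth of every $b$-free formula, so $\Trace(\krip')\models\varphi$ iff $\krip\models\varphi$, closing the chain. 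I expect the main obstacle to be the trajectory-forcing lemma together with checking that the equivalence survives arbitrary quantifier alternation: one must argue that placing the single $\Etau$ after the entire prefix still pins down $t^\ast$ for every fully instantiated assignment, and that no desynchronized trajectory can \emph{spuriously} satisfy $\psi_b\And\psi$ when $\varphi$ fails. The parity argument above is exactly what rules this out.
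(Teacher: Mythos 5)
Your proof is correct, and it shares the paper's high-level strategy---use a fresh alternating proposition whose agreement across all traces encodes pointer parity, thereby forcing the existential trajectory to be the lock-step one---but your realization of that strategy is genuinely different and in one respect cleaner. The paper builds $\krip'$ by subdividing every transition of $\krip$ with a fresh intermediate state labeled $\sync$, so traces are stretched by a factor of two; as a consequence the inner formula cannot be reused as-is and must be rewritten by a translation $g$ that doubles every $\Next$ and relativizes both arguments of every $\U$ to the non-$\sync$ positions, and the paper's correctness argument then has to track this translation by structural induction (the $\U$ case being the fiddliest part). Your clock-product construction leaves the positions of the original traces untouched, so $\psi$ transfers verbatim and the entire burden of correctness is concentrated in the trajectory-forcing lemma, which your parity induction establishes soundly: pointers start with equal parity, advancing a proper non-empty subset breaks parity agreement at the very next evaluation point of the $\G$, and the ``advance none'' option is excluded because trajectory steps are non-empty sets. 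The remaining steps---coincidence of the \AHLTL semantics under the synchronous trajectory with the synchronous HyperLTL semantics, commutation with the quantifier prefix (both logics quantify over $\Trace(\krip')$ and reset pointers to $0$), and the label-forgetting bijection between $\Trace(\krip')$ and $\Trace(\krip)$ for $b$-free formulas---are exactly the routine checks needed, and you perform them. Both constructions meet the stated size bounds (yours with $2|S|$ states and an $O(n^2)$-atom guard, which could even be made linear by chaining the biconditionals); what yours buys is the elimination of the formula transformation $g$ altogether, which is the main source of bookkeeping in the paper's proof, while the paper's edge-subdivision buys nothing additional that this particular theorem needs.
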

The proof proceeds as follow.  Giving $\krip$ we build a Kripke
structure $\krip'$ that alternates between real states in $\krip$ and
synchronization states.  Then the formula is transformed to force
alternations at every other step, therefore forcing the trajectory to
synchronize (see Appendix~\ref{subsec:pspace} for details).
Since the model-checking problem for HyperLTL is PSPACE-hard on the
size of the Kripke structure, the same follows for \AHLTL.

\newcounter{cor-pspace}
\setcounter{cor-pspace}{\value{corollary}}

\begin{corollary}
  \label{cor:pspace}
  For asynchronous HyperLTL formulas, the model checking problem is
  \comp{PSPACE-hard} in the size of the system.
\end{corollary}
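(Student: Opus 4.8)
The plan is to obtain the corollary as a direct consequence of Theorem~\ref{thm:lowerbound} together with the known \comp{PSPACE-hardness} of \HyperLTL model-checking. First I would recall the precise form of the \HyperLTL lower bound~\cite{clarkson14temporal,rabe16temporal}: there is a \emph{fixed} \HyperLTL formula $\varphi$ for which deciding $\krip\models\varphi$ is \comp{PSPACE-hard} when the complexity is measured in the size of the Kripke structure $\krip$. This is the right starting point, since the corollary concerns hardness \emph{in the size of the system} and not in the size of the formula.

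Next I would apply Theorem~\ref{thm:lowerbound} to this fixed $\varphi$ and an arbitrary $\krip$. The theorem yields an \AHLTL formula $\varphi'$ of size polynomial in $|\varphi|$ and a Kripke structure $\krip'$ of size linear in $|\krip|$, with $\krip\models\varphi$ if and only if $\krip'\models\varphi'$. Because $\varphi$ is fixed, $\varphi'$ has constant size, so the map $\krip\mapsto\krip'$ (with $\varphi'$ fixed once and for all) is computable in polynomial time and increases the system size only by a constant factor.

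From here the conclusion follows by composition of reductions. Every problem in \comp{PSPACE} reduces in polynomial time to the \HyperLTL model-checking problem parameterized by system size, and Theorem~\ref{thm:lowerbound} supplies a polynomial-time reduction from the latter to the \AHLTL model-checking problem, again parameterized by system size. Since the composition of two polynomial-time reductions is again a polynomial-time reduction, every \comp{PSPACE} problem reduces to \AHLTL model-checking, which establishes the claimed \comp{PSPACE-hardness} in the size of the system.

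The main point to watch---rather than a genuine obstacle---is the bookkeeping of the two complexity dimensions. I must verify that the blow-up guaranteed by Theorem~\ref{thm:lowerbound} keeps the Kripke structure linear, so that the ``in the size of the system'' parameterization is preserved rather than inflated into a different regime, while simultaneously keeping the formula fixed, so that the hardness is genuinely attributable to the system and not hidden inside a growing specification. Since Theorem~\ref{thm:lowerbound} provides exactly these two guarantees, nothing beyond this verification is required.
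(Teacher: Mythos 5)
Your proposal is correct and takes essentially the same route as the paper: Corollary~\ref{cor:pspace} is obtained by composing the known \comp{PSPACE}-hardness of HyperLTL model-checking in the size of the system with the reduction of Theorem~\ref{thm:lowerbound}, whose linear blow-up of the Kripke structure (with the transformed formula depending only on $\varphi$) preserves the system-size parameterization. Your explicit bookkeeping about keeping the formula fixed merely spells out what the paper leaves implicit in its one-line derivation.
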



\section{Case Studies and Evaluation}
\label{sec:eval}

We evaluated our algorithm for the $\forall^*_\pi \Etau $ \AHLTL fragment on several examples. 
After reducing the asynchronous model checking problem to a synchronous one, we use \textsc{MCHyper}~\cite{finkbeiner15algorithms, Hyperliveness} to check our property. 
\textsc{MCHyper} is a model checker for synchronous HyperLTL that can handle formulas with up to one quantifier alternation. 
It computes the self composition of the system and composes it with the formula automaton. 
\textsc{ABC}~\cite{conf/cav/BraytonM10} is then used as the backend tool checking the reachability of a violation. 

Our reduction from the asynchronous to the synchronous semantics follows the stuttering construction described in Section~\ref{subsec:stuttering}.
To model check a system against an \AHLTL formula, we first add a stuttering input to the system that forces the system to stutter in the current state.
The transformed formula ensures that the stuttering guarantees synchronous phase changes.
Currently, the transformation stage is only partially automated while
the synchronous verification is fully automated.
Future work includes the full automation of all steps.
We now describe the various case studies\footnote{All experimental
  data will be publicly available.}.
All our experiments were performed on a MacBook Pro with a $3.3$GHz
processor and $16$GB of RAM running MacOS~$11.1$. 

\subsection{Compiler Optimizations}

We modeled the source and target programs of different compiler
optimization techniques (from~\cite{WitnessingSecureCompilation}) as finite
state machines encoded as circuits, and used asynchronous
hyperproperties to prove the correspondence between both programs.
We analyzed the following optimizations:
\begin{compactitem}
\item Common Branch Factorization~(CBF), where expressions occurring
  in both branches of a conditional are factored out;
\item Loop Peeling~(LP), which consists in unrolling of a loop that is
  executed at least once;
\item Dead Branch Elimination~(DBE), that is, removing conditional
  checks and their branches that are unreachable; and
\item Expression Flatting~(EF), which splits complex computations into
  several explicit steps.
\end{compactitem}
Besides evaluating each optimization individually, we also examined several combinations of these optimizations.
Each optimization affects the alignment between source and target program, so synchronous hyperproperties fail to recognize the correspondence between both programs.
Using asynchronous hyperproperties instead allows us to compensate for this misalignment by stuttering the programs accordingly.
Essentially, each optimization is checked against the following \AHLTL formula in which $\pi$ represents traces from the source program and $\pi'$ traces from the target program: 
\[
	\forall \pi. \forall \pi'. \Etau. 
		(\bigwedge_{i \in I} i_\pi \leftrightarrow i_{\pi'}) 
	  \rightarrow 
		(\G \bigwedge_{o \in O} o_\pi \leftrightarrow o_{\pi'})
\]
This formula states that for all pairs of traces that initially agree on the inputs from the set~$I$ there exists a trajectory that aligns the phase changes of the outputs in set~$O$.
We use the stuttering construction and \textsc{MCHyper} to verify that
in all cases the source and target programs go through the same phases
of possibly different length.
The results of this case study are summarized in Table~\ref{table:CaseStudies}(a).
We note that \AHLTL model-checking subsumes the approach
in~\cite{WitnessingSecureCompilation} based on construction of a {\em
  buffer automaton} to reason about the alignment of executions.

\begin{table}[t]
  \centering
  \begin{tabular}{cc}
  \begin{tabular}{l c S[table-format=4.1] S[table-format=4.1]}
    Optimizations \hspace{0.5em} & \multicolumn{2}{c}{System Size \hspace{0.5em}} & {Time (s) \hspace{0.5em}} \\ 
                               & \#\textsc{ls} \hspace{1em} & {\#\textsc{ands}} & \\ \hline \hline
    EF             & 12 & 64   & 0.6    \\ 
    DBE            & 16 & 128  & 0.8    \\ 
    CBF            & 16 & 145  & 2.7    \\ 
    LP             & 28 & 514  & 365.9  \\ \hline 
    CBF+DBE      & 16 & 137  & 11.4   \\ 
    CBF+DBE+EF & 20 & 175  & 10.0   \\ 
    CBF+EF       & 20 & 180  & 1.7    \\ 
    EF+LP        & 41 & 8642 & 1315.2    
  \end{tabular} & 
    \begin{tabular}{l c S[table-format=4.1] S[table-format=4.1]}
    Propery \hspace{0.5em} & \multicolumn{2}{c}{System Size \hspace{0.5em}} & {Time (s) \hspace{0.5em}} \\ 
                               & \#\textsc{ls} \hspace{1em} & {\#\textsc{ands}} & \\ \hline \hline
    \SPICorrect  & 30 & 175 & 65.7 \\ 
    \SPITerm & 33 & 296 & 155.8 
  \end{tabular}
  \\
  (a) Compiler Optimizations & (b) SPI \\[0.3em]
  \end{tabular}
  \caption{Verification times and system sizes in number of latches (\#\textsc{ls})
    and \textsc{and}-gates (\#\textsc{ands}) for the case studies.}
  \label{table:CaseStudies}
\end{table}


\subsection{SPI Bus Protocol}
The Serial Peripheral Interface (SPI) is a bus protocol that supports a single main component's communication with multiple secondary components. 
Each secondary can be selected individually by the main via the secondary's own $\mathit{ss}$ (``secondary select'') input signal.
If a secondary is enabled (that is, if $\neg \mathit{ss}$ holds as the secondary select is ``active low''), it reads the $\mathit{mosi}$ (main out, secondary in) signal and writes to the $\mathit{miso}$ (main in, secondary out) wire. 

We verify the behavior of a single SPI secondary component
that receives an input which it sends to the main component upon
request.
This behavior should always be the same, independent of when the
secondary is enabled or how fast the bus protocol's ``serial clock''
($\mathit{sclk}$) set by the main component ticks compared to the secondary's internal clock.
The \AHLTL formula we check is the following (see observational determinism in Section~\ref{form:obsdet}):

\newcommand{\SSe}{\ensuremath{\mathit{ss}}}
\newcommand{\SCLK}{\ensuremath{\mathit{sclk}}}
\newcommand{\MISO}{\ensuremath{\mathit{miso}}}
\newcommand{\SSpi}{\ensuremath{\SSe_\pi}}
\newcommand{\SSpiP}{\ensuremath{\SSe_{\pi'}}}
\newcommand{\IN}{\ensuremath{\mathit{in}}}
\newcommand{\INIT}{\ensuremath{\mathit{init}}}

\iftrue
\begin{equation*}
  \forall \pi. \forall \pi'. \Etau. 
  \begin{pmatrix}
    \begin{array}{c}
      \bigwedge\limits_{i \in \{in, \mathit{init}\}} i_\pi \Iff i_{\pi'} \\ \And \\
      \textit{SPI input assumptions}
    \end{array}
    \end{pmatrix} 
  \Then
  \Always
  \begin{pmatrix}
    \begin{array}{c}
      (\MISO_\pi \And \neg \SCLK_\pi \And \neg \SSpi) \\ \Iff \\
      (\MISO_{\pi'} \And \neg \SCLK_{\pi'} \And \neg \SSpiP)
    \end{array}
    \end{pmatrix}
  \label{form:SPI}  
\end{equation*}
\fi

%

\noindent This formula (called \SPICorrect in Table~\ref{table:CaseStudies}(b))
ensures that for all pairs of traces~$\pi$ and $\pi'$ that agree on
the initial configuration, on the input, and additional \textit{SPI input assumptions}, there is a trajectory that
aligns their relevant behavior.
We consider it relevant that both secondaries agree on their~$\mathit{miso}$ output whenever they are enabled and the~$\mathit{sclk}$ is low.
Checking~$\mathit{miso}$ only when the~$\mathit{sclk}$ is low is sufficient as changes on~$\mathit{miso}$ only occur at falling edges of the~$\mathit{sclk}$. 
The \textit{SPI input assumptions} are required to guarantee the implicit assumptions of the protocol, for example, that the~$\mathit{sclk}$ behaves as an infinitely ticking clock.  
By introducing additional variables and applying logical transformations, we obtain an equivalent formula that syntactically lies in the fragment of the stuttering construction. 
Again, we reduce this model checking problem to the synchronous
semantics and use \textsc{MCHyper} to perform the verification.

In a second experiment, we modified the system to send the value only once and checked it for termination insensitive noninterference \SPITerm (see Sections~\ref{form:TNI1} and~\ref{form:TNI2}).
In our setup, we use the variable $\mathit{term}$ to flag that the secondary has sent the full value. 
In the premise of the formula, we require that the input value is equal on both traces and again assume that the inputs conform to the SPI protocol. 
The conclusion checks if both secondaries have sent the same values by using additional variables that are set together with $\mathit{term}$. 
The results of this case study are summarized in
Table~\ref{table:CaseStudies}(b).
%


\section{Related Work}
\label{sec:related}

The study of specific hyperproperties, such as noninterference, dates
back to the seminal work by Goguen and Meseguer \cite{gm82} in the
1980s.
The first systematic study of hyperproperties is due to Clarkson and
Schneider~\cite{cs10}.

It is well-known that classic specification languages like LTL cannot
express hyperproperties.
There are two principal methods with which the standard logics have
been extended to express hyperproperties:
\begin{itemize}
\item The first method is the quantification over variables that
  identify specific paths or traces.
  The temporal logics LTL, CTL$^*$ have been extended with
  quantification over traces and paths, resulting in the temporal
  logics HyperLTL and HyperCTL$^*$~\cite{clarkson14temporal}.
  There are also extensions of the $\mu$-calculus, most recently, the
  temporal fixpoint calculus
  $H_\mu$~\cite{DBLP:journals/pacmpl/GutsfeldMO21}, which adds path
  quantifiers to the polyadic
  $\mu$-calculus~\cite{Andersen/1994/APolyadicModalMuCalculus}.
\item The second method is the addition of the equal-level predicate
  $E$ to first-order and second-order logics, like MPL, MSO, FOL, and
  S1S, which results in the logics FOL[E], S1S[E], MPL[E],
  MSO[E]~\cite{Hierarchy,Martin}.
\end{itemize}

HyperCTL$^*$, MPL[E], and MSO[E] are branching-time logics, we
therefore focus in the following on the linear-time logics HyperLTL,
$H_\mu$, FOL[E], and S1S[E].  Among these logics, HyperLTL is the only
logic for which practical model-checking algorithms are
known~\cite{finkbeiner15algorithms,Hyperliveness,hsb21}.
For HyperLTL, the algorithms have been implemented in the model
checkers MCHyper and bounded model checker HyperQube.
As discussed in this paper, HyperLTL is limited to synchronous
hyperproperties.

FOL[E] can express a limited form of asynchronous hyperproperties.
As shown in~\cite{Hierarchy}, FOL[E] is subsumed by HyperLTL with
additional quantification over predicates.
Using such predicates as ``markers,'' one can relate different
positions in different traces.
However, only a finite number of such predicates is available in each
formula.
S1S[E] is known to be strictly more expressive than
FOL[E]~\cite{Hierarchy}, and conjectured to subsume
$H_\mu$~\cite{DBLP:journals/pacmpl/GutsfeldMO21}.
For $S1S[E]$ and $H_\mu$, the model checking problem is in general
undecidable; for $H_\mu$, two fragments, the $k$-synchronous,
$k$-context bounded fragments, have been identified for which model
checking remains decidable~\cite{DBLP:journals/pacmpl/GutsfeldMO21}.
It is not known, however, if any of the commonly used hyperproperties,
like observational determinism, noninterference, or linearizability,
can be encoded in these fragments.

Like S1S[E] and $H_\mu$, asynchronous HyperLTL has an (in general)
undecidable model checking problem.
However, in this paper we have identified decidable fragments of
asynchronous HyperLTL that can express observational determinism,
noninterference, and linearizability.
\AHLTL is thus the first logic for hyperproperties that can express
the major asynchronous hyperproperties of interest within decidable
fragments.
Furthermore, asynchronous HyperLTL is the first logic for asynchronous
hyperproperties with a practical model checking algorithm.




\section{Conclusion}
\label{sec:conclusion}

We have introduced \AHLTL, a temporal logic to describe asynchronous
hyperproperties.
This logic extends HyperLTL with {\em trajectory} modalities, which control
when a trace proceeds and when it stutters.
Synchronous HyperLTL corresponds to a trajectory that always moves all
paths in a lock-step manner.
%
%
This notion of trajectory allows to define formulas that are invariant
under stuttering, paving the way for relevant model-checking
optimizations such a partial order reduction and
abstraction-refinement techniques in the context of hyperproperties.
We show that model-checking \AHLTL formulas is in general undecidable,
and identify two fragments of \AHLTL formulas, which cover a rich set
of security requirements and can be decided by a reduction to
HyperLTL model-checking.
This in turn has allowed us to the reuse the existing model-checker
McHyper.

Future work includes the study of larger decidable fragments (that
encompass both fragments studied here), extending the logic allowing
several trajectory modalities, as well as their implementation in
practical tools.
Extending bounded model-checking~\cite{hsb21} to \AHLTL is another
interesting research problem.
Asynchronous hyperproperties are important for applying a logic-based
verification approach to verify hyperproperties for {\em software}
programs, because the relative speed of the execution of programs
depends on many factors like the compiler, hardware, execution
platform and concurrent running programs, that the analysis must
tolerate.
Therefore, future work includes adapting techniques for infinite-state
software model-checking, like deductive methods, abstraction, etc to
verify \AHLTL properties of software systems.

%
%


\vfill\pagebreak
%


\bibliographystyle{abbrv}
\bibliography{bibliography}

\newcommand{\SortNoop}[1]{}
\begin{thebibliography}{10}

\bibitem{as85}
B.~Alpern and F.~B. Schneider.
\newblock Defining liveness.
\newblock {\em Information Processing Letters}, 21:181--185, 1985.

\bibitem{Andersen/1994/APolyadicModalMuCalculus}
H.~R. Andersen.
\newblock A polyadic modal $\mu$-calculus, 1994.
\newblock Technical Report.

\bibitem{ahss08}
A.~Askarov, S.~Hunt, A.~Sabelfeld, and D.~Sands.
\newblock Termination-insensitive noninterference leaks more than just a bit.
\newblock In {\em Proceedings of 13th European Symposium on Research in
  Computer Security Computer Security (ESORICS)}, pages 333--348, 2008.

\bibitem{bss18}
B.~Bonakdarpour, C.~S{\'{a}}nchez, and G.~Schneider.
\newblock Monitoring hyperproperties by combining static analysis and runtime
  verification.
\newblock In {\em Proceedings of the 8th Leveraging Applications of Formal
  Methods, Verification and Validation (ISoLA)}, pages 8--27, 2018.

\bibitem{conf/cav/BraytonM10}
R.~K. Brayton and A.~Mishchenko.
\newblock {ABC:} an academic industrial-strength verification tool.
\newblock In {\em Proceedings of {CAV}}, volume 6174 of {\em LNCS}, pages
  24--40. Springer, 2010.

\bibitem{clarkson14temporal}
M.~R. Clarkson, B.~Finkbeiner, M.~Koleini, K.~K. Micinski, M.~N. Rabe, and
  C.~S\'{a}nchez.
\newblock Temporal logics for hyperproperties.
\newblock In {\em Proc. of the 3rd Int'l Conf. on Principles of Security and
  Trust (POST'14)}, volume 8414 of {\em LNCS}, pages 265--284. Springer, 2014.

\bibitem{cs10}
M.~R. Clarkson and F.~B. Schneider.
\newblock Hyperproperties.
\newblock {\em Journal of Computer Security}, 18(6):1157--1210, 2010.

\bibitem{Hierarchy}
N.~{Coenen}, B.~{Finkbeiner}, C.~{Hahn}, and J.~{Hofmann}.
\newblock The hierarchy of hyperlogics.
\newblock In {\em 2019 34th Annual ACM/IEEE Symposium on Logic in Computer
  Science (LICS)}, pages 1--13, 2019.

\bibitem{Hyperliveness}
N.~Coenen, B.~Finkbeiner, C.~S{\'a}nchez, and L.~Tentrup.
\newblock Verifying hyperliveness.
\newblock In I.~Dillig and S.~Tasiran, editors, {\em Computer Aided
  Verification}, pages 121--139, Cham, 2019. Springer International Publishing.

\bibitem{finkbeiner15algorithms}
B.~Finkbeiner, M.~Rabe, and C.~S\'{a}nchez.
\newblock Algorithms for model checking {HyperLTL} and {HyperCTL$^*$}.
\newblock In {\em In Proc. of the 27th Int'l Conf. on Computer Aided
  Verification (CAV'15)}, volume 9206 of {\em LNCS}, pages 30--48. Springer,
  2015.

\bibitem{finkbeiner13temporal}
B.~Finkbeiner, M.~N. Rabe, and C.~S{\'{a}}nchez.
\newblock A temporal logic for hyperproperties.
\newblock {\em CoRR}, abs/1306.6657, 2013.

\bibitem{Martin}
B.~Finkbeiner and M.~Zimmermann.
\newblock The first-order logic of hyperproperties.
\newblock In {\em 34th Symposium on Theoretical Aspects of Computer Science,
  {STACS} 2017, March 8-11, 2017, Hannover, Germany}, pages 30:1--30:14, 2017.

\bibitem{gm82}
J.~A. Goguen and J.~Meseguer.
\newblock Security policies and security models.
\newblock In {\em Proceedings of the {IEEE} Symposium on Security and Privacy},
  pages 11--20, 1982.

\bibitem{DBLP:journals/pacmpl/GutsfeldMO21}
J.~O. Gutsfeld, M.~M{\"{u}}ller{-}Olm, and C.~Ohrem.
\newblock Automata and fixpoints for asynchronous hyperproperties.
\newblock {\em Proc. {ACM} Program. Lang.}, 5({POPL}):1--29, 2021.

\bibitem{hw90}
M.~Herlihy and J.~M. Wing.
\newblock Linearizability: {A} correctness condition for concurrent objects.
\newblock {\em {ACM} Transactions on Programming Languages and Systems},
  12(3):463--492, 1990.

\bibitem{hsb21}
T.-H. Hsu, C.~S{\'{a}}nchez, and B.~Bonakdarpour.
\newblock Bounded model checking for hyperproperties.
\newblock In {\em Proceedings of the 27th International Conference on Tools and
  Algorithms for Construction and Analysis of Systems (TACAS)}, 2021.
\newblock To appear.

\bibitem{l80}
L.~Lamport.
\newblock ``sometime'' is sometimes ``not never'' - on the temporal logic of
  programs.
\newblock In {\em Proceedings of the Seventh Annual {ACM} Symposium on
  Principles of Programming Languages (PoPL)}, pages 174--185. {ACM} Press,
  1980.

\bibitem{manna95temporal}
Z.~Manna and A.~Pnueli.
\newblock {\em Temporal Verification of Reactive Systems}.
\newblock Springer-Verlag, 1995.

\bibitem{WitnessingSecureCompilation}
K.~S. Namjoshi and L.~M. Tabajara.
\newblock Witnessing secure compilation.
\newblock In D.~Beyer and D.~Zufferey, editors, {\em Verification, Model
  Checking, and Abstract Interpretation}, pages 1--22, Cham, 2020. Springer
  International Publishing.

\bibitem{pnueli77temporal}
A.~Pnueli.
\newblock The temporal logic of programs.
\newblock In {\em Symposium on Foundations of Computer Science (FOCS)}, pages
  46--57, 1977.

\bibitem{pss98}
A.~Pnueli, M.~Siegel, and E.~Singerman.
\newblock Translation validation.
\newblock In {\em Proceedings of the 4th International Conference on Tools and
  Algorithms for Construction and Analysis of Systems (TACAS)}, pages 151--166,
  1998.

\bibitem{post47recursive}
E.~L. Post.
\newblock A variant of a recursively unsolvable problem.
\newblock {\em Bulleting of the American Mathematical Society}, 52:264--268,
  1946.

\bibitem{rabe16temporal}
M.~N. Rabe.
\newblock {\em A Temporal Logic Approach to Information-flow Control}.
\newblock PhD thesis, Saarland University, 2016.

\bibitem{ss01}
A.~Sabelfeld and D.~Sands.
\newblock A per model of secure information flow in sequential programs.
\newblock {\em Higher Order Symbolic Computation}, 14(1):59--91, 2001.

\bibitem{sipser12introduction}
M.~Sipser.
\newblock {\em Introduction to the Theory of Computation}.
\newblock MIT Press, Boston, MA, USA, 2012.

\bibitem{wzbp19}
Y.~Wang, M.~Zarei, B.~Bonakdarpour, and M.~Pajic.
\newblock Statistical verification of hyperproperties for cyber-physical
  systems.
\newblock {\em {ACM} Transactions on Embedded Computing systems (TECS)},
  18(5s):92:1--92:23, 2019.

\bibitem{zm03}
S.~Zdancewic and A.~C. Myers.
\newblock Observational determinism for concurrent program security.
\newblock In {\em Proceedings of the 16th {IEEE} Computer Security Foundations
  Workshop (CSFW)}, page~29, 2003.

\end{thebibliography}

\vfill
\pagebreak
\appendix

\section{Full Proofs of Undecidability and Complexity}
\label{sec:app:undec-pspace}

\subsection{Undecidability of Asynchronous HyperLTL}
\label{subsec:undec}

We show now the full proof of undecidability of full \AHLTL via
reduction from PCP (Theorem~\ref{thm:undec})

\newcounter{aux}
\setcounter{aux}{\value{theorem}}
\setcounter{theorem}{\value{thm-undec}}

\begin{theorem}
  Let $\krip$ be a Kripke structure and $\varphi$ be an asynchronous
  HyperLTL formula. The problem of determining whether or not
  $\krip \models \varphi$ is undecidable.
\end{theorem}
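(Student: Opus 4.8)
The plan is to establish undecidability by reducing the complement of the Post Correspondence Problem (PCP) to the A-HLTL model-checking problem. PCP asks, given a finite set of dominos $\domi{w_i}{v_i}$, whether there is a nonempty sequence of indices $i_1 i_2 \cdots i_m$ such that the concatenation $w_{i_1} w_{i_2} \cdots w_{i_m}$ of upper words equals the concatenation $v_{i_1} v_{i_2} \cdots v_{i_m}$ of lower words. Since PCP is undecidable, it suffices to build, from any instance, a Kripke structure $\kripPCP$ and an A-HLTL formula $\varphiNoPCP$ with $\kripPCP \models \varphiNoPCP$ if and only if the instance admits \emph{no} solution.

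First I would construct $\kripPCP$ so that its paths enumerate all finite arrangements of dominos. For each domino I add a gadget that emits, one letter per state, either the upper word $w_i$ (states labeled with proposition $w$) or the lower word $v_i$ (states labeled with $v$), each state carrying the domino index $\idx^i$ and the marker $\lc$ indicating that a fresh letter is being processed. Paths may concatenate arbitrarily many gadgets, so the $w$-paths spell out sequences of upper words and the $v$-paths sequences of lower words. Next I would fix the formula $\varphiNoPCP$ of the form $\forall \pi_w \forall \pi_v. \Etau.\, (\varphiType \Then (\varphiSeqWV \Or \varphiLetter))$. Here $\varphiType$ restricts $\pi_w$ to a pure $w$-trace and $\pi_v$ to a pure $v$-trace; the consequent asserts that the chosen pair does \emph{not} witness a solution, because either the index sequences disagree ($\varphiSeqWV$) or, once the indices are aligned, some pair of compared letters differs ($\varphiLetter$).

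The existential trajectory $\Etau$ is what carries the comparison. Because the upper and lower words of a domino generally differ in length, a synchronous comparison would be meaningless; the trajectory is free to stutter each trace independently so that it can line up the index-change positions and the per-letter positions, using the invariants $\G(\idx_{\pi_w} \Iff \idx_{\pi_v})$ and $\G(\lc_{\pi_w} \Iff \lc_{\pi_v})$ to mark a valid alignment, and the $\F$ conjuncts to detect the eventual mismatch. This is precisely the additional power of the asynchronous semantics, and it is what makes the reduction possible at all.

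The main obstacle is proving the two directions of the correspondence. For the forward (soundness) direction I must show that if a solution exists then the pair of traces spelling the matched words defeats \emph{every} trajectory: whatever alignment $\Etau$ proposes, the equal index sequences and equal concatenated words leave no mismatch to expose, so the consequent fails and $\kripPCP \not\models \varphiNoPCP$. For the backward (completeness) direction I must show that if no solution exists then for \emph{every} type-correct pair of traces some trajectory aligns the comparison points and reveals a difference, either in the indices or in a letter. The delicate point is arguing that the existential trajectory always has enough freedom to reach such an alignment; this hinges on the interaction between the universal trace quantifiers and the single existential trajectory, and is where the bulk of the detailed case analysis will lie.
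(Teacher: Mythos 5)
Your proposal follows essentially the same route as the paper's proof: a reduction from the complement of PCP, using a Kripke structure whose paths spell out domino sequences as $w$-traces and $v$-traces, and the formula $\forall \pi_w \forall \pi_v. \Etau.\,(\varphiType \Then (\varphiSeqWV \Or \varphiLetter))$, where the existential trajectory aligns index-change and letter positions to expose a mismatch, with the same two-directional correspondence argument. The only deviation is a cosmetic labeling detail (the paper places $\idx^i$ only on the final state of each word gadget rather than on every state), which does not change the substance of the reduction.
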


\setcounter{theorem}{\value{aux}}

\begin{proof}	
  In order to show the undecidability of model-checking A-HLTL, we
  reduce the complement of the {\em post correspondence problem
    (PCP)}, which is a well-known undecidable
  problem~\cite{post47recursive,sipser12introduction}, to our model
  checking problem.
  The PCP is as follows.
  Let $\alphabet_{\mathit{pcp}}$ be an alphabet of size at least two.
  The input to the decision problem is a collection of dominos of the
  form:
  \[
    \Domi{w}{v} =
    \bigg\{\Domi{w_1}{v_1}, \Domi{w_2}{v_2},\dots, \Domi{w_k}{v_k} \bigg\} \]
  where for all $i \in [1, k]$, we have $v_i,w_i \in \alphabet_{\mathit{pcp}}^*$.
  The decision problem consists of checking whether there exists a
  finite sequence of dominos of the form
  \[
    \Domi{w_{i_1}}{v_{i_1}}\Domi{w_{i_2}}{v_{i_2}} \cdots \Domi{w_{i_m}}{v_{i_m}}
  \]
  where each index $i_j \in [1, k]$, such that the upper and lower finite strings 
  of the dominos are equal, i.e.,
  
  \[ v_{i_1}v_{i_2}\cdots{}v_{i_m} = w_{i_1}w_{i_2}\cdots{}w_{i_m}. \]
  For some collection of dominos, it is impossible to find a match. An example is 
  the following:
  \[
    \Domi{w}{v} = \bigg\{\Domi{abc}{ab}, \Domi{ca}{a}, \Domi{acc}{ba}\bigg\}
  \]

  We now start from an arbitrary instance 
  $\domi{w}{v}=\{\domi{w_2}{v_2}, \domi{w_2}{v_2}, \dots,
  \domi{w_k}{v_k}\}$ of the PCP
  and create a Kripke structure $\krip_{\mathit{pcp}} = \ktuple$ and
  an asynchronous HyperLTL formula $\varphiNoPCP$ (i.e., no solution
  to PCP), and later show that the PCP instance has {\em no} solution
  if and only if $\krip_{\mathit{pcp}} \models \varphiNoPCP$.
  To illustrate the construction, we use the following instance of PCP
  as a running example.
  The alphabet is $\alphabet_{\mathit{pcp}} = \{a, b, c\}$ and the
  set of dominos~\cite{sipser12introduction} is:
  \[
    \Domi{w}{v} =\bigg\{\Domi{b}{ca},\Domi{a}{ab},\Domi{ca}{a},\Domi{abc}{c} \bigg\}
  \]
  
  \paragraph{Kripke structure.} We map the PCP instance to 
  $\krip_{\mathit{pcp}}$ as follows (see Fig.~\ref{fig:undec} for 
  example):

  \begin{itemize}
  \item \emph{Atomic propositions}: The set $\AP$ of atomic propositions
    include the following:
    \begin{itemize}
    \item Two propositions $v$ and $w$ to indicate that a state belongs to
      a path that builds a $v$-word in the instance of PCP, or to a path
      that builds a $w$-word.
    \item A proposition $\idx^i$ for each domino from $\domi{w_i}{v_i}$,
      where $i \in [1, k]$, that indicates the domino that contains a word
      from $v_i$ or $w_i$ that has been processed.
    \item A {\em letter clock} proposition $\lc$, which serves the purpose
      of marking whether or not a new letter is processed.
    \item A proposition $\mathit{end}$, which serves the purpose of
      marking the end of sequencing dominos.
    \item A proposition for each letter in the PCP alphabet
      $\alphabet_{\mathit{pcp}}$.
    \end{itemize}
    Thus, we have:
    \[
      \AP = \alphabet_{\mathit{pcp}} \, \cup \, \{v, w, \lc, \mathit{end}\} \, \cup 
      \,\{\idx^i \mid i \in [1,k]\}
    \]
  \item \emph{States and transitions}:
    We include the following states $S$ and transitions $\trans$:
    \begin{itemize}
    \item We include a single initial state $s_\init$ with no labels and
      an end state $s_e$ labeled by $\{\mathit{end}\}$ with a self-loop
      $(s_e, s_e)$.
    \item For each word $w_i = w_i^1\cdots w_i^l$ in a domino
      $\domi{w_i}{v_i}$ of the PCP instance, where $i \in [1,k]$, we
      include a path of length $2l$ that starts and ends in $s_\init$,
      and whose states are only reachable in the given path.
      More specifically,
      \begin{itemize}
      \item For each $j \in [1,l]$, we introduce two states $s_{w_i}^j$
        and $u_{w_i}^j$.
        Each state $s_{w_i}^j$, where $i \in [1,k]$, is labeled by
        $\{w, w_i^j, \lc\}$.
        This state indicates a new letter in the word $w_i$.
        Each state $u_{w_i}^j$, where $j \in [1,l)$, is labeled by
        $\{w\}$.
        This state does not correspond to a letter and its purpose is
        to server as separation between two letters within the same
        word.
        In this manner, a word is encoded by a sequence of
        letter/no-letter states.
        Finally, state $u_{w_i}^l$ is labeled by $\{\idx^i, w\}$,
        which is used for synchronizing the end of words $w_i$ and
        $v_i$ in domino $i$.
      \item We add a transition $(s^j_{w_i}, u^j_{w_i})$ for each
        $j \in [1, l]$ and a transition $(u^j_{w_i}, s^{j+1}_{w_i})$ for
        each $j \in [1, l)$.
      \item We also add transitions $(s_\init, s_{w_i}^1)$,
        $(u_{w_i}^l, s_\init)$.
      \end{itemize}
      
    \item For each word $v_i = v_i^1\cdots v_i^l$ in a domino
      $\domi{w_i}{v_i}$ of the PCP instance, where $i \in [1,k]$, we
      include a path in the same way as we did for $w_i$, except that
      all instances of ``$w$'' are replaced by ``$v$''.
    \end{itemize}
  \end{itemize}

  \paragraph{Asynchronous HyperLTL formula}
  The formula that will be checked against the Kripke structure is the
  following:
  
  \begin{align}
    \varphiNoPCP \DefinedAs&\forall \pi_w\forall \pi_v. \Etau.\Big(\varphiType 
                             \Then  
                             ( \varphiSeqWV \vee \varphiLetter)\Big)     \label{eq:ahltl} \\
    \intertext{where}
    \varphiType \DefinedAs & \Big((w_{\pi_w} \And \neg v_{\pi_w}) \U 
                             \mathit{end}_{\pi_w}\Big) \And \Big((\neg w_{\pi_v} \And v_{\pi_v})  \U 
                             \mathit{end}_{\pi_v} \Big) \notag \\
    \varphiSeqWV \DefinedAs & \G( \idx_{\pi_w}\Iff\idx_{\pi_v}) \And  \F\bigvee_{i=1}^k \idx_{\pi_w}^i 
                              \not\Iff\idx^i_{\pi_v} \notag \\
    \varphiLetter \DefinedAs &  \G (\lc_{\pi_w} \Iff  \lc_{\pi_v}) \;
                               \; \And \F\bigvee_{l \in \alphabet_{\mathit{pcp}}} 
                               (l_{\pi_w}  \not\Iff  l_{\pi_v} \notag)\\
    \intertext{and}
    \nonumber \idx_{\pi_w} \DefinedAs & \bigvee_{i\in[1..k]}\idx_{\pi_w}^i
  \end{align}

  The intention of formula $\varphiNoPCP$ is that the Kripke structure
  is a model of the formula if and only if the original PCP problem
  has \emph{no} solution.
  Intuitively, formula $\varphiType$ forces trace $\pi_w$ (respectively, $\pi_v$) 
  to traverse only the traces labeled by $w$ (respectively, $v$) to build a 
  $w$-word (respectively, $v$-word). When such traces are built (i.e., the left side of the implication), 
  PCP has no solution when at least one of the following conditions hold: either (1) the order of 
  dominos in $\pi_w$ and $\pi_v$ is not legitimate, or (2) the letters do not match. 
  More formally, formula $\varphiSeqWV$ establishes that the trajectory aligns the
  positions at which the domino indices are checked and at last once
  the index is different.
  Finally, formula $\varphiLetter$ captures if $\pi_w$ and $\pi_v$ are
  aligned to compare the letters, at least one pair of the letters prescribed by the existential trajectory 
  are different.
  
  We now show that the given PCP instance has no matching if and only
  if $\kripPCP \not\models \varphiNoPCP$:

  \begin{itemize}
  \item ($\Rightarrow$) Supposed that the PCP instance has no
    solution.  That is, no sequence of the dominos of the form:
    \[
      \Domi{w_{i_1}}{v_{i_1}}\Domi{w_{i_2}}{v_{i_2}} \cdots 
      \Domi{w_{i_m}}{v_{i_m}}
    \]
    constructs a matching where $w_{i_1}w_{i_1}\ldots{}w_{i_m}=v_{i_1}v_{i_1}\ldots{}v_{i_m}$.
    In our running example, the following is a sequence of dominos that does not render a good 
    matching:
    \begin{align}
      \label{eq:ex1}
      \Domi{a}{ab}\Domi{b}{ca}\Domi{ca}{a} \Domi{abc}{c}
    \end{align}
    But the following is a solution to the problem:
    \begin{align}
      \label{eq:ex2}
      \Domi{a}{ab}\Domi{b}{ca}\Domi{ca}{a}  \Domi{a}{ab}\Domi{abc}{c}
    \end{align}
    We now show that if the PCP problem has no solution, then $\kripPCP 
    \models \varphiNoPCP$. 
    Let $\pi_w$ and $\pi_v$ be any two arbitrary traces of the $\kripPCP$. If these traces do not 
    satisfy 
    formula $\varphiType$, then they are representing two words that are actually not viable $w$ and 
    $v$ words. This makes $\varphiNoPCP$ vacuously true. So, let us consider the case where 
    $\pi_w$ 
    and $\pi_v$ satisfy $\varphiType$. For example, Fig.~\ref{fig:pcptraj1} shows an instantiation of 
    traces $\pi_w$ and $\pi_v$ with sequence~\eqref{eq:ex1} that satisfies 
    $\varphiType$.
    
    Now, we consider two cases: either the words represented by $\pi_w$ and 
    $\pi_v$ do not establish a legitimate sequence of dominos or they do. 
    If they do not, then traces $\pi_w$ and $\pi_v$ satisfy formula $\varphiSeqWV$ 
    and, hence, we have $\kripPCP \models \varphiNoPCP$. Now, consider the case 
    where $\pi_w$ and $\pi_v$ do establish a legitimate sequence of dominos. Since 
    there is no solution to PCP, the words represented by $\pi_w$ and $\pi_v$ 
    cannot match. That means there is at least one pair of letters that do not 
    match, which in turn implies the existence of a trajectory that finds this 
    mistmatch and indeed satifies formula $\varphiLetter$. Put it another 
    way, sine there is no solution the to PCP, the sequence that reaches a letter 
    mismatch construct the witness to the existential trajectory quantifier.  
    Fig.~\ref{fig:pcptraj1} shows the trajectory that identifies the mismatched 
    letters in our running example. Therefore, we have $\kripPCP \models 
    \varphiNoPCP$.
    
    \begin{figure}[t]
      \centering
      \includegraphics[width=\textwidth]{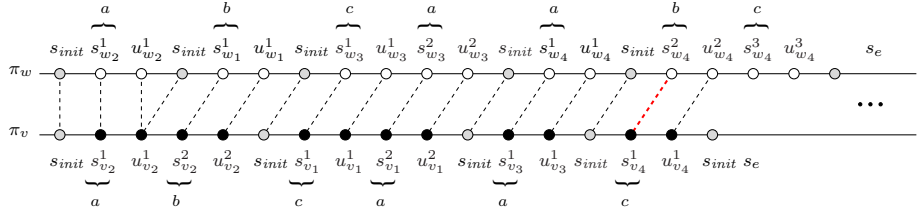}
      \caption{Obtaining PCP matching from an instantiation of trajectory 
        $\tjvar_2$ in Formula~\ref{eq:ahltl}.}
      \label{fig:pcptraj1}
    \end{figure}

  \item ($\Rightarrow$) Now, suppose we have $\kripPCP \models \varphiNoPCP$. 
    This 
    implies that for any pair of traces $\pi_v$ and $\pi_w$ that represent legitimate $w$ and $v$ 
    words 
    (i.e., where $\varphiType$ holds), there exists a trajectory that satisfies $(\varphiSeqWV \vee 
    \varphiLetter)$. This means that for the chosen pair of traces $\pi_v$ and $\pi_w$, the existential 
    trajectory has found either illegitimate alignment of dominos, or letters that do not match. 
    In our example, the existential trajectory can be instantiated by the 
    trajectory shown in Fig.~\ref{fig:pcptraj1}. Such a witness trajectory shows 
    that for a pair of traces that represent ligitimate $w$ and $v$ words, 
    either they do not establish a legitimate sequence of dominos or not all 
    letters along the words match.
    Since this holds for any such pair of traces that explore any ordering of $w$ and $v$ words, it 
    means that there is no solution to PCP.
  \end{itemize}
  This shows that the $\kripPCP \models \varphiNoPCP$ if and only if the PCP 
  instance $\frac{w}{v}$ has no solution. Consequently, the model-checking problem 
  for asynchronous HyperLTL is undecidable.
  \qed
\end{proof}

\subsection{Lower Bound Complexity}
\label{subsec:pspace}

We show now construction of the lower bound complexity using a
reduction to HyperLTL model-checking.
The main results is Corollary~\ref{cor:pspace} that follows from the
following theorem.

\setcounter{aux}{\value{theorem}}
\setcounter{theorem}{\value{thm-lowerbound}}

\begin{theorem}
  Given a HyperLTL formula $\varphi$ and a Kripke structure $\krip$
  there is a \AHLTL formula $\varphi'$ and a Kripke structure $\krip'$
  such that $\krip'$ is linear in the size of $\krip$, $\varphi'$ is
  polynomial on the size of $\varphi$ and
  \[ \krip\models\varphi \text{ if and only if } \krip'\models\varphi'\]
\end{theorem}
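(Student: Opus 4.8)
The plan is to reduce synchronous HyperLTL model-checking to \AHLTL\ model-checking by \emph{stretching time}. Given $\krip=\ktuple$, I would build $\krip'$ by inserting, for every transition $(s,s')\in\trans$, a fresh \emph{synchronization state} $m_{s,s'}$ and replacing that edge by the two edges $(s,m_{s,s'})$ and $(m_{s,s'},s')$. I enrich the alphabet with a single fresh proposition $\sync$, label every synchronization state by $\{\sync\}$ (carrying no original proposition), and keep every original state labelled as in $\krip$ with $\sync$ false. This is a valid (total) Kripke structure with $|\States|+|\trans|$ states, hence linear in $\krip$, and its paths are exactly the stretched paths $s_0\,m_{s_0,s_1}\,s_1\,m_{s_1,s_2}\,s_2\cdots$. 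Consequently $\Trace(\krip')$ is in bijection with $\Trace(\krip)$: a trace $\sigma$ of $\krip$ corresponds to $\sigma'$ with $\sigma'(2i)=\sigma(i)$ and $\sigma'(2i+1)=\{\sync\}$, so the original data sits at the even positions.

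Next I force the existential trajectory to run in lock-step. Keeping the same quantifier prefix, I set
\[ \varphi' \DefinedAs Q_1\pi_1\ldots Q_n\pi_n.\,\Etau.\big(\theta \And \hat\psi\big), \quad\text{where}\quad \theta\DefinedAs \G\bigwedge_{i=1}^{n}\big(\sync_{\pi_i}\not\Iff\Next\sync_{\pi_i}\big), \]
each $Q_i\in\{\forall,\exists\}$ as in $\varphi$, and $\hat\psi$ is the time-dilation translation defined below. The key step is a trajectory-forcing lemma: in $\krip'$ every edge flips the $\sync$ bit while a stuttering step preserves it, so $\theta$ holds under a trajectory $t$ iff $\pi_i\in t(k)$ for all $i$ and $k$; since trajectories must be non-empty at each step, a single lagging trace would immediately break the required flip. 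Hence the only trajectory satisfying $\theta$ is the fully synchronous one $t_{\mathrm{sync}}(k)=\Vars(\varphi')$, which trivially satisfies it. Therefore $\Pi\models\Etau.(\theta\And\hat\psi)$ collapses to $(\Pi,t_{\mathrm{sync}})\models\hat\psi$, where the successor advances all traces by exactly one position per step.

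I would then define $\hat{\cdot}$ to compensate for the doubling of time: it is the identity on $a_\pi$, $\neg$ and $\vee$, sends $\Next\chi$ to $\Next\Next\hat\chi$, and sends $\chi_1\U\chi_2$ to $(\mathit{real}\Then\hat\chi_1)\U(\mathit{real}\And\hat\chi_2)$, where $\mathit{real}\DefinedAs\neg\sync_{\pi_1}$ marks the even (original) positions. The correctness claim, proved by induction on $\chi$, is that under the trace bijection $\Pi\leftrightarrow\Pi_0$ and for every $j$, $(\Pi,t_{\mathrm{sync}})+2j\models\hat\chi$ iff $(\Pi_0+j)\models\chi$ in the synchronous HyperLTL semantics. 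The Boolean and $\Next$ cases are immediate from the doubling; the $\U$ case uses the $\mathit{real}$ guard to force the witness onto an even position and to render the obligation vacuous on the intervening odd (synchronization) positions, recovering exactly the synchronous until over $\krip$.

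Finally I assemble the pieces. Combining the trajectory-forcing lemma with the correspondence at $j=0$ gives $\Pi\models\Etau.(\theta\And\hat\psi)$ iff $\Pi_0\models\psi$ synchronously; since the prefixes coincide and each $\forall/\exists$ respects the bijection $\Trace(\krip')\leftrightarrow\Trace(\krip)$, an outer induction on the prefix yields $\krip'\models\varphi'$ iff $\krip\models\varphi$. The size bounds are then clear: $\hat\psi$ at most doubles the temporal nesting and $\theta$ adds $O(n)$ symbols, so $\varphi'$ is polynomial in $|\varphi|$, while $\krip'$ is linear in $|\krip|$. I expect the main obstacle to be the $\U$ case of the correspondence induction, namely getting the $\mathit{real}$-guard exactly right so that synchronization positions are skipped without weakening or strengthening the until, and verifying that this behaves correctly under nesting of temporal operators.
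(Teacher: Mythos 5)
Your proposal is correct and takes essentially the same route as the paper's proof: the identical edge-splitting construction with $\sync$-labelled intermediate states, a conjunct that forces the existential trajectory to be the fully synchronous one, and the same time-dilation translation ($\Next\chi \mapsto \Next\Next\hat\chi$, until relativized to non-$\sync$ positions). The only differences are cosmetic: your $\theta$ forces lock-step by requiring each trace's $\sync$ bit to flip at every step, whereas the paper's $\psiSync$ requires all traces to agree on $\sync$ and alternate globally, and you guard the until with $\neg\sync_{\pi_1}$ instead of $\bigwedge_{i}\neg\sync_{\pi_i}$, which is equivalent once synchrony is enforced.
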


\setcounter{theorem}{\value{aux}}


\begin{proof}
We reduce the HyperLTL model-checking problem, which is known to be
\comp{PSPACE-hard} to asynchronous HyperLTL model-checking.

Consider an instance of the HyperLTL model-checking problem consist
of a Kripke structure $\krip = \ktuple$ and a HyperLTL formula of
the form:
\[
  \varphi = Q_1 \trvar_1.Q_2 \trvar_2\cdots Q_n  \trvar_n.\psi
\]
where each $Q_i \in \{\forall, \exists\}$ is a trace quantifier
($i \in [1,n]$).
We map this instance to asynchronous HyperLTL model-checking problem
as follows:
\begin{itemize}
\item {\bf Kripke structure. }
  We now transform $\krip$ into another Kripke structure
  $\krip' = \langle S', s_\init, \trans', L'\rangle$ as follows:
  \begin{itemize}
  \item First, we set the set of atomic propositions as
    $\AP \cup \{\sync\}$.
  \item For each transition $(s_0, s_1) \in \trans$, we introduce a new state 
    $u_{(s_0, s_1)}$ labeled only by proposition $\sync$.
    That is:
    \[ S' = S \cup \{u_{(s_0, s_1)} \mid (s_0, s_1) \in \delta\} \]
    The set of labels of states in $S$ remains unchanged.
  \item We replace transition $(s_0, s_1)$ by two transitions $(s_0, u_{(s_0, 
      s_1)})$ and $(u_{(s_0, s_1)}, s_1)$:

    \vspace{0.5em}
    \framebox{
      \includegraphics[scale=1]{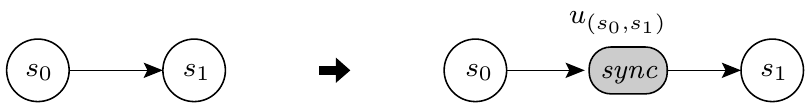}
    }
    \vspace{0.5em}
  
    That is,
    \(
      \delta' = \{(s_0, u_{(s_0, s_1)}), (u_{(s_0, s_1)}, s_1) \mid  (s_0, s_1) \in 
      \delta\}. \)
  \end{itemize}
  
\item {\bf Formula. } We first map subformulas of $\psi$ using the
  following $g$ function:
  \begin{align*}
    g(a_\pi) & = a_\pi\\
    g(\neg \psi) & = \neg g(\psi)\\
    g(\psi_1 \vee \psi_2) & = g(\psi_1) \vee g(\psi_2)\\
    g(\Next \psi) & = \Next \Next g(\psi)\\
    g(\psi_1 \U \psi_2) & = \bigg(\bigwedge_{i=1}^n \neg \sync_{\pi_i} \Into 
                          g(\psi_1)\bigg) \U \bigg(g(\psi_2) \, \wedge \, \bigwedge_{i = 1}^n \neg 
                          \sync_{\trvar_i}\bigg)
  \end{align*}
  Intuitively, function $g$ ensures that (1) state formulas are
  evaluated in the same fashion as in HyperLTL semantics (2) the
  next operator is evaluated within two steps (instead of one) due
  to the existence of intermediate $\sync$ states, and (3) both
  sides of the until operator are only evaluated over $\sync$
  states.
  The asynchronous HyperLTL formula will be the following:
  \[
    \varphi' \DefinedAs Q_1 \trvar_1.Q_2 \trvar_2\cdots Q_n 
    \trvar_n.\Etau.\bigg(g(\varphi) \And \psiSync\bigg)
  \]
  \noindent where
  \[
    \psiSync \DefinedAs  \G_\tjvar \bigg(\bigwedge_{i,j = 1}^n 
    (\sync_{\trvar_i} \Iff \sync_{\trvar_j}) \And
    (\bigwedge_{i = 1}^n \sync_{\trvar_i} \Iff \Next 
    \bigwedge_{i = 1}^n  \neg \sync_{\trvar_i})
  \]
\end{itemize}
Intuitively, formula $\psiSync$ enforces the synchronous semantics.
That is, each step of trajectory $\tjvar$ alternates between all
$\sync$ and all $\neg \sync$ states across all traces.
Formula $\psiMap$ ensures that all subformulas of the given HyperLTL
formula are evaluated only in $\sync$ states across all traces.

We now show that $\krip \models \varphi$ if and only if $\krip' \models 
\varphi'$:
\begin{itemize}
\item ($\Rightarrow$) Suppose that $\krip \models \varphi$ holds.
  To show that $\krip' \models \varphi'$, we instantiate all
  existential trace quantifiers in $\varphi'$ as we do for $\krip$
  to satisfy $\varphi$ (except that traces in
  $\Traces(\krip', s_\init)$ have the additional intermediate
  $\sync$ state in between each two consecutive states of traces of
  $\krip$).
  Moreover, we instantiate $\tjvar$ in $\varphi'$ with a trajectory
  that advances all traces in all its steps (i.e., identical to the
  semantics of synchronous HyperLTL).
  It is straightforward to see that this trajectory satisfies
  $\psiSync$, as in each step of the trajectory, $\sync$ is
  alternatively true and false either across all traces or none.
  We now show that $\psiMap$ is also satisfied by structural
  induction:
  \begin{itemize}
  \item If $\psiMap$ is a state formula (i.e., an atomic
    proposition, negation or disjunction), then it is trivially
    satisfied, as $g(\psi) = \psi$.
  \item If $\Next\psi$ holds in $\krip$, then $\psi$ holds in two
    steps in $\tjvar$, as an intermediate $\sync$ state is inserted
    in each transition of $\trans$ of $\krip'$.
    This implies that $\Next_\tjvar\Next_\tjvar \psi$ holds in
    $\krip'$.
  \item Suppose $\psi_1 \U \psi_2$ holds in $\krip$. Then, since
    there exists some step $i \geq 0$ in instantiated traces from
    $\Traces(\krip, s_\init)$ such that $\psi_2$ holds.
    This means that in the instantiated traces from
    $\Traces(\krip', s_\init)$, $g(\psi_2)$ holds in $2i$ steps,
    where $\sync$ is false across all traces.
    This the right side of the until holds in $\krip'$.
    Also, for all $j \in [0, i)$, in instantiated traces from
    $\Traces(\krip, s_\init)$, $\psi_1$ holds.
    This means that in instantiated traces from
    $\Traces(\krip', s_\init)$, $g(\psi_1)$ also holds if $\sync$ is
    false in states across all traces.
  \end{itemize}
  Hence, we have $\krip' \models \varphi'$.
\item ($\Leftarrow$) Suppose that we have $\krip' \models
  \varphi'$.
  To show that $\krip \models \varphi$, we instantiate all
  existential trace quantifiers in $\varphi$ as we do for $\krip'$
  to satisfy $\varphi'$ (except that in traces of
  $\Traces(\krip, s_\init)$, intermediate $u$ states in between each
  two consecutive states of traces of $\krip'$ are
  eliminated).
  Since, $\psiSync$ holds in $\varphi'$, there exists a trajectory
  $\tjvar$ which advances all traces in all its steps such that in
  each step of the trajectory, $\sync$ is alternatively true and
  false either across all traces or none.
  We now show that $\varphi$ holds in $\krip$ by structural
  induction:
  \begin{itemize}
  \item If $\psi$ is a state formula (i.e., an atomic proposition,
    negation or disjunction), then it is trivially satisfied, as
    $\psi = g(\psi)$.
  \item First, notice that the number of nested next operators in
    $\varphi'$ is always even.
    If $\Next\Next\psi$ holds in $\krip'$, then $\psi$
    holds in two steps in $\tjvar$.
    When the $u$ state is eliminated, $\psi$ will hold in the next
    state. This implies that $\Next \psi$ holds in $\krip$.
  \item Suppose $\psi_1 \U \psi_2$ holds in $\krip'$.
    Then, there exists some step $i \geq 0$ in instantiated traces
    from $\Traces(\krip', s_\init)$ such that $\psi_2$ holds.
    In this step, $\sync$ has to be false states across all
    traces.
    This means that in the instantiated traces from
    $\Traces(\krip, s_\init)$, $g^{-1}(\psi_2)$ (obtained by
    removing the right side of the conjunction when computing $g$ of
    an until) holds in $i/2$ steps.
    Also, $\psi_1$ holds in all states $j \in [0, i)$, in
    instantiated traces from $\Traces(\krip', s_\init)$.
    This means that in all $j \in [0, i/2)$ in the instantiated
    traces from $\Traces(\krip, s_\init)$, $g^{-1}(\psi_1)$
    (obtained by removing the left side of the implication when
    computing $g$) holds.
  \end{itemize}
  It is easy to see that the size of $\krip'$ is linear in the size
  of $\krip$ and that $\varphi'$ is polynomial on the size of
  $\varphi$.
  \qed
\end{itemize}
\end{proof}

\setcounter{aux}{\value{corollary}}
\setcounter{corollary}{\value{cor-pspace}}

\begin{corollary}
  For asynchronous HyperLTL formulas, the model-checking problem is
  \comp{PSPACE-hard} in the size of the system.
\end{corollary}

\setcounter{corollary}{\value{aux}}

        
\vfill\pagebreak
\section{Full Decidability Proofs}
\label{app:decidability}

\subsection{The Stuttering Construction}

Since there is only one occurrence of $\psiPH$ in $\psi$ we present
here the case where the occurrence is with positive polarity as in
Section~\ref{sec:decidable}.
The case for negative polarity is analogous.
We will use the fact that $\psiPH$ occurs with positive polarity to
spuriously evaluate $\psi'$ (the formula that $\psiPH$ gets translated
into) into $\true$ for those tuples of paths of $\kripST$ for which the
evaluation is not relevant (non-aligned traces).
The final verdict will depend on those paths that are aligned.

For the following discussion we fix $\krip$ as the original Kripke
structure, $\kripST$ the corresponding stuttering structure and $\psi$
as the admissible formula.
Given a tuple of traces $(\sigma_1,\ldots,\sigma_n)$ of $\krip$ and a
trajectory $t$, we define the following sequence of tuples of pointed
traces
\[ 
  \big((\sigma_1,p^0_1),\ldots,(\sigma_n,p^0_n)\big)\;\;\;\;,\;\;\;\;
  \big((\sigma_1,p^1_1),\ldots,(\sigma_n,p^1_n)\big) \;\;\;\cdots
\]
such that (1) $p_i^0=0$ and (2) $p^{k+1}_i=p^k_i+1$ if
$\pi_i\in t(k)$ and $p^{k+1}_i=p^k_i$ if $\pi_i\notin t(k)$.
In turn, this sequence induces a tuple of traces
$(\sigma_1^*,\ldots,\sigma^*_n)$ of $\kripST$ as follows:
\[
  \begin{array}{rcl}
    \sigma_i^*(0) &=& \sigma_i(0)\\
    \sigma_i^*(k+1)&=&\begin{cases}
      \sigma_i(p_i^{k+1}) & \text{if } p_i^{k+1}=p_i^k \\
      \sigma_i(p_i^{k+1})\cup\{\ST\} & \text{if } p_i^{k+1}=p_i^k+1 \\
    \end{cases}
  \end{array}
\]
Note that $\sigma^*_i$ is encoding the progress of $\sigma_i$
according to trajectory $t$.
It is easy to see that $(\sigma^*_1,\ldots,\sigma_n^*)$ are fair
traces of $\kripST$ (due to the fairness requirement of $t$) and that
$\sigma_i^*$ is a stuttering expansion of $\sigma_i$.
We call $(\sigma^*_1,\ldots,\sigma_n^*)$ the traces induced by
$(\sigma_1,\ldots,\sigma_n)$ and $t$, which we denote by
\[
  (\sigma_1^*,\ldots,\sigma_n^*)=\Expand((\sigma_1,\ldots,\sigma_n),t).
\]

Similarly, given a tuple of fair traces
$(\sigma^*_1,\ldots,\sigma_n^*)$ of $\kripST$ there is a unique tuple
$(\sigma_1,\ldots,\sigma_n)$ of traces of $\krip$ and trajectory $t$
defined as follows.
Given $\sigma_i^*$, let $i_1<i_2<\ldots$ be the ordered set of
positions for which $\ST\notin\sigma_i^*$ (this set is guaranteed to
be infinite because $\sigma_i^*$ is fair).
Then $\sigma_i$ is defined as $\sigma_i(k)=\sigma_i^*(i_k)$.
Also, $t(k)$ contains $\pi_i$ if and only if $\ST\notin\sigma^*_i(k)$.
We call $(\sigma_1,\ldots,\sigma_n)$ and $t$ the trace and trajectory
induced by $(\sigma_1^*,\ldots,\sigma_n^*)$, which we denote by
\[ ((\sigma_1,\ldots,\sigma_n),t)=\Compress((\sigma_1^*,\ldots,\sigma_n^*)). \]

The following result follows directly from the definitions, and
implies that $(\Compress\circ\Expand)$ and $(\Compress\circ\Expand)$
are identities.

\begin{lemma}
  Let $(\sigma_1,\ldots,\sigma_n)$ be a tuple of traces of $\krip$,
  $t$ a trajectory and $(\sigma_1^*,\ldots,\sigma_n^*)$ a tuple of
  fair traces of $\kripST$. Then
  $((\sigma_1,\ldots,\sigma_n),t)=\Compress((\sigma_1^*,\ldots,\sigma_n^*))$
  if and only if
  $(\sigma_1^*,\ldots,\sigma_n^*)=\Expand((\sigma_1,\ldots,\sigma_n),t)$.
\end{lemma}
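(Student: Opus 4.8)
The plan is to read this biconditional as the assertion that $\Expand$ and $\Compress$ are mutually inverse maps: $\Expand$ sends a pair $\big((\sigma_1,\ldots,\sigma_n),t\big)$ to a tuple of fair traces of $\kripST$, and $\Compress$ sends such a tuple back to a pair. Since both are total functions on their respective domains, the claimed equivalence is exactly the statement that they are inverse bijections, and I would reduce it to the two round-trip identities $\Compress\circ\Expand=\mathrm{id}$ and $\Expand\circ\Compress=\mathrm{id}$. Given both identities, the forward implication follows by applying $\Compress$ to the hypothesis $(\sigma_1^*,\ldots,\sigma_n^*)=\Expand\big((\sigma_1,\ldots,\sigma_n),t\big)$, and the backward implication by applying $\Expand$ to the hypothesis $\big((\sigma_1,\ldots,\sigma_n),t\big)=\Compress\big((\sigma_1^*,\ldots,\sigma_n^*)\big)$, each time using that the other map cancels.

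The core of both identities is one bookkeeping invariant about the pointers $p_i^k$ produced during $\Expand$: a trivial induction on $k$ using $p_i^0=0$ and $p_i^{k+1}=p_i^k+1 \Leftrightarrow \pi_i\in t(k)$ shows that $p_i^k$ counts the genuine moves of trace $i$ among the first $k$ steps of $t$. Hence the underlying letter at position $k$ of $\sigma_i^*$ is always $\sigma_i(p_i^k)$, and the $\ST$-marker at that position records whether the associated step of $t$ is a move or a stutter. For $\Compress\circ\Expand=\mathrm{id}$ I would then note that the increasing enumeration of the non-$\ST$ positions of $\sigma_i^*$ lists precisely the times at which the pointer takes a fresh value, so deleting the stutter positions recovers $\sigma_i$, while reading off, at each step, which traces are non-stuttering recovers $t$. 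The reverse composition $\Expand\circ\Compress=\mathrm{id}$ is symmetric: starting from a fair tuple, fairness guarantees infinitely many non-$\ST$ positions per component, so $\Compress$ is well defined, and re-expanding along the recovered trajectory regenerates the very same $\ST$-pattern and underlying letters step by step.

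I expect the only real difficulty to be the index bookkeeping rather than any conceptual point: one must keep the enumeration of non-$\ST$ positions, the pointer sequence $p_i^k$, and the distinguished base position $0$ aligned without an off-by-one slip, and verify that the convention placing the $\ST$-marker on a move versus a stutter step is applied consistently in both $\Expand$ and $\Compress$. Once the move-counting invariant is fixed, the remainder is a direct unfolding of the two definitions, exactly as the surrounding text anticipates, so no additional machinery is required.
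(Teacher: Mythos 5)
Your proposal is correct and is essentially the paper's own approach: the paper gives no argument beyond asserting that the lemma ``follows directly from the definitions,'' and your plan---unpacking the biconditional into the two round-trip identities $\Compress\circ\Expand=\mathrm{id}$ and $\Expand\circ\Compress=\mathrm{id}$, each verified by induction on $k$ with the invariant that $p_i^k$ counts the moves $|\{j<k \mid \pi_i\in t(j)\}|$---is precisely the definitional unfolding the paper is appealing to. Your closing caution is also well placed: as literally written, the paper's $\Expand$ attaches $\ST$ to move steps (whereas in $\kripST$ the label $\ST$ marks the stuttering copies $s^\ST$) and $\Compress$ reads $t(k)$ off position $k$ rather than $k+1$, so the off-by-one and marker-polarity bookkeeping you single out is exactly what must be normalized for the two maps to be mutually inverse.
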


Abusing notation, we use $\Pi'=\Expand(\Pi,t)$ for trace assignments
$\Pi$ in $\krip$, trajectory $t$ an $\Pi'$ in $\kripST$ if
$(\Pi'(\pi_1),\ldots,\Pi'(\pi_n))=\Expand((\Pi(\pi_1),\ldots,\Pi(\pi_n)),t)$.
We define $(\Pi,t)=\Compress(\Pi')$ analogously.

We first prove some auxiliary lemmas.

\begin{lemma}
  \label{lem:four}
  $(\Pi,t)\models\psiPH$ if and only if
  $\Expand(\Pi,t)\models(\Always\Phases \And\psiPH)$.
\end{lemma}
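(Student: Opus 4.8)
The plan is to split the biconditional according to the two conjuncts on the right-hand side. I would first establish the clean equivalence $(\Pi,t)\models\psiPH \iff \Expand(\Pi,t)\models\psiPH$, and then, as the only real work, the one-directional implication $(\Pi,t)\models\psiPH \Rightarrow \Expand(\Pi,t)\models\Always\Phases$. With these two facts the lemma follows: the forward direction combines both, while the backward direction is immediate, since $\Expand(\Pi,t)\models(\Always\Phases\And\psiPH)$ entails $\Expand(\Pi,t)\models\psiPH$ and hence, by the equivalence, $(\Pi,t)\models\psiPH$.

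For the equivalence, write $\sigma_i=\Pi(\pi_i)$, let $p_i^m$ be the pointer positions determined by $t$, and let $\sigma_i^*$ be the induced trace of $\kripST$. The construction of $\Expand$ gives $\sigma_i^*(m)\cap\AP=\sigma_i(p_i^m)$, i.e. the expansion preserves the $\AP$-labelling and only decorates letters with the fresh marker $\ST$. Hence for each atomic phase formula $(\pi_i,\pi_j,P)\in\PS$ and each $p\in P$, the atom $p_{\pi_i}\Iff p_{\pi_j}$ has the same truth value at the $m$-th synchronous position of $\Expand(\Pi,t)$ as in the assignment $(\Pi,t)+m$. Since $\psiPH$ is $\Always$ applied to a Boolean combination of such atoms, both sides unfold to the identical pointwise condition ``the colors agree at every step'', which yields the equivalence.

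The substantive step is to check that each conjunct $\Align_{(\pi_i,\pi_j,P)}$ of $\Phases$ holds at every position $m$. Fix an atomic phase formula $(\pi_i,\pi_j,P)$ and let $c_i^m$ be the valuation of $P$ in $\sigma_i(p_i^m)$; the hypothesis $(\Pi,t)\models\psiPH$ gives $c_i^m=c_j^m$ for all $m$. Here $\Move(\pi_i)$ holds at $m$ iff $\sigma_i^*(m{+}1)$ is a genuine (non-$\ST$) letter, i.e. iff $\pi_i$ advances at step $m$; and when $\pi_i$ advances, $\Change_P(\pi_i)$ holds iff $c_i^m\neq c_i^{m+1}$. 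I would then argue the three clauses. If both advance, then $c_i^{m+1}=c_j^{m+1}$ as well, so $c_i^m\neq c_i^{m+1}$ iff $c_j^m\neq c_j^{m+1}$, giving $\Change_P(\pi_i)\Iff\Change_P(\pi_j)$. If $\pi_i$ advances while $\pi_j$ stutters, then $c_j^{m+1}=c_j^m$, so $c_i^{m+1}=c_j^{m+1}=c_j^m=c_i^m$ and $\neg\Change_P(\pi_i)$ follows; the remaining clause is symmetric. If both stutter---possible because $t(m)$ may advance only a third trace---every antecedent is false and the clause holds vacuously.

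The step I expect to be most delicate is the justification that the synchronous encodings $\Move$ and $\Change_P$ really compute these quantities on the expanded traces. Concretely, one must confirm that at a position where $\pi_i$ advances the sub-formula $\Next(\ST_{\pi_i}\U p_{\pi_i})$ reads the value of $p$ at the next genuine letter $\sigma_i(p_i^m+1)$ and is not short-circuited by an interposed stuttering copy; this is precisely why $\Change_P$ is invoked in $\Align$ only under a $\Move$ premise, so that the very next position already carries no $\ST$. Fairness of $t$, inherited by $\Expand(\Pi,t)$, guarantees that such a next genuine letter always exists, so all of the color comparisons above are well defined.
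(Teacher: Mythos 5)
Your proof is correct and takes essentially the same approach as the paper: the paper disposes of this lemma with a one-line appeal to ``a simple induction on the steps $k$ of the traces,'' and your pointwise verification---$\AP$-labels are preserved by $\Expand$, so the $\psiPH$ valuations coincide step by step, and the three $\Align$ clauses follow from $c_i^m=c_j^m$ at consecutive steps, with $\Change_P$ read under a $\Move$ premise---is exactly that induction written out in full. Note only that you (rightly) interpret $\Expand$ with $\ST$ marking the \emph{stuttering} positions, which matches the labeling of $\kripST$ and the intended semantics of $\Move$, whereas the paper's displayed definition of $\Expand$ has the two cases swapped, an evident typo that your reading silently corrects.
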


\begin{proof}
  The proof (in both directions) proceeds by a simple induction on the
  steps $k$ of the traces. \qed
\end{proof}


\begin{lemma}
  \label{lem:five}
  Let $\Pi'$ and $\Pi''$ be stuttering equivalent with
  $\Pi'\models\Always\Phases$ and $\Pi'\models\Always\Phases$.
  Then $\Pi'\models\psiPH$ if and only if $\Pi''\models\psiPH$.
\end{lemma}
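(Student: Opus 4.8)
The plan is to exploit the conjunctive structure of $\psiPH$ and reduce the whole statement to a single atomic phase formula, then to characterise its satisfaction in a way that depends only on data invariant under stuttering. Since $\psiPH=\Always\bigwedge_{(\pi_i,\pi_j,P)\in\PS}\bigwedge_{p\in P}(p_{\pi_i}\Iff p_{\pi_j})$ and $\Always$ distributes over conjunction, it suffices to prove, for each atomic phase formula $\alpha=(\pi_i,\pi_j,P)\in\PS$, that $\Pi'\models\Always\bigwedge_{p\in P}(p_{\pi_i}\Iff p_{\pi_j})$ if and only if $\Pi''\models\Always\bigwedge_{p\in P}(p_{\pi_i}\Iff p_{\pi_j})$; conjoining over $\PS$ then yields the claim. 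Here I use that $\Pi'\models\Always\Phases$ gives $\Pi'\models\Always\Align_{\alpha}$ for each $\alpha\in\PS$, and likewise for $\Pi''$.

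The engine of the argument is a \emph{phase-index synchronisation} claim: whenever an assignment $\Pi$ satisfies $\Always\Align_{(\pi_i,\pi_j,P)}$, the current phase index of $\pi_i$ and of $\pi_j$ coincide at every position. Writing $\mathrm{ph}_\ell(k)$ for the number of $P$-colour changes of the trace $\Pi(\pi_\ell)$ strictly before position $k$, I would prove $\mathrm{ph}_i(k)=\mathrm{ph}_j(k)$ for all $k$ by induction on $k$. The base case is immediate, and the step follows by casing on the three implications of $\Align_{(\pi_i,\pi_j,P)}$: if both traces move, they change phase simultaneously (first conjunct); if exactly one moves, that one does not change phase (second and third conjuncts); if a trace stutters it cannot change colour. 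In every case the two phase indices increment together or not at all. A direct consequence is that at any position $k$ the pointwise test $\bigwedge_{p\in P}(p_{\pi_i}\Iff p_{\pi_j})$ compares the colour of phase $\mathrm{ph}_i(k)$ of $\pi_i$ with the colour of the \emph{same}-indexed phase of $\pi_j$. Hence $\Pi\models\Always\bigwedge_{p\in P}(p_{\pi_i}\Iff p_{\pi_j})$ holds if and only if, for every attained phase index $m$, the $m$-th $P$-phase-colour of $\pi_i$ equals the $m$-th $P$-phase-colour of $\pi_j$, a condition phrased purely in terms of the two collapsed colour sequences.

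The second ingredient is that these collapsed colour sequences are stuttering invariants. The sequence of $P$-phase-colours of a trace, obtained by restricting each letter to $P$ and merging consecutive repetitions, is determined by its stuttering-equivalence class, as is its length (the number of phases). Since $\Pi'(\pi_i)$ and $\Pi''(\pi_i)$ are stuttering equivalent, they share one colour sequence $C_i$, and likewise $\Pi'(\pi_j)$ and $\Pi''(\pi_j)$ share $C_j$. Applying the characterisation of the previous paragraph to both $\Pi'$ and $\Pi''$, satisfaction of the atomic phase formula is in each case equivalent to the single condition that $C_i$ and $C_j$ agree colour-by-colour up to the attained phase index, which is the same for the two assignments. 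Note moreover that $\Always\Align_\alpha$ forces $\pi_i$ and $\pi_j$ to have the same number of phases (a phase change in one without the other is exactly what $\Bad$ rules out), so the range of attained indices is also shared. This gives $\Pi'\models\alpha$ if and only if $\Pi''\models\alpha$, and the lemma follows.

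The main obstacle I anticipate is the careful bookkeeping in the synchronisation claim, because in $\kripST$ a move skips over stuttering letters and $\Change_P(\pi_i)$ refers to the next \emph{non-stuttering} letter (via the sub-formula $\Next(\ST_{\pi_i}\U p_{\pi_i})$) rather than the literal successor. The inductive step must therefore relate the synchronous position $k$ in $\kripST$ to colour changes of the underlying $\krip$-trace, distinguishing stuttering from genuine moves, and verify that $\Move$ and $\Change_P$ behave as intended at each of the four $\Move$-combinations. Once this correspondence between positions, moves, and phase indices is pinned down, the remaining steps are routine.
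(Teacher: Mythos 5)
Your proof is correct and takes essentially the same route as the paper's: the paper's one-sentence argument---that for each atomic phase formula the valuations of $\psiPH$ at colour-changing and non-colour-changing positions coincide---is precisely what your phase-index synchronisation claim (under $\Always\Align_{(\pi_i,\pi_j,P)}$) together with the stuttering-invariance of the collapsed colour sequences spells out in detail. One cosmetic remark: the fact that $\Always\Align_{(\pi_i,\pi_j,P)}$ forces equal numbers of phases follows from your synchronisation claim by letting $k\to\infty$, not from $\Bad$, which is a separate formula of the construction and plays no role inside the alignment assumption.
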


\begin{proof}
  The proof proceeds by proving that the valuations of $\psiPH$ at
  color-changing positions and non color-changing positions coincide
  for all atomic phase formulas. \qed
\end{proof}

\begin{lemma}
  \label{lem:six}
  Let $\Pi'$ be such that
  $\Pi'\models\Phases\,\U(\Bad\Or\Block)$ and let $\Pi''$ be
  such that $\Pi'$ and $\Pi''$ are stuttering equivalent.
  Then $\Pi''\not\models\Always\Phases$.
\end{lemma}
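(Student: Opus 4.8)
The plan is to argue by contradiction: assume $\Pi''\models\Always\Phases$ and derive a contradiction from $\Pi'\models\Phases\,\U(\Bad\Or\Block)$. The central tool is a \emph{synchronization} observation about $\Align$: if $\Pi''\models\Always\Align_{(\pi_i,\pi_j,P)}$, then at every step at which $\pi_i$ performs a real $P$-color change (i.e.\ $\Move(\pi_i)\And\Change_P(\pi_i)$ holds), the second clause of $\Align$ forbids $\pi_j$ from stuttering and the first clause forces $\Change_P(\pi_j)$, so $\pi_j$ changes $P$-color at exactly the same step, and symmetrically. Hence the $P$-color changes of $\pi_i$ and $\pi_j$ are in a step-wise bijection, so $\pi_i$ and $\pi_j$ exhibit the \emph{same} number of $P$-color changes in $\Nat\cup\{\infty\}$. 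Since the number of $P$-color changes of a trace is invariant under stuttering, this count can be read off the common underlying traces of $\Pi'$ and $\Pi''$.

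Let $m$ be the position witnessing $\Phases\,\U(\Bad\Or\Block)$ in $\Pi'$, so $\Phases$ holds at all steps below $m$ and $\Bad\Or\Block$ holds at $m$. First I would dispatch the $\Bad$ case. Here some atomic phase formula $(\pi_i,\pi_j,P)$ satisfies, at $m$, that exactly one of $\Always\neg\Change_P(\pi_i)$ and $\Always\neg\Change_P(\pi_j)$ holds; say $\pi_i$ makes no further $P$-change from $m$ on while $\pi_j$ makes at least one. Applying the synchronization bijection to the prefix (where $\Align$ does hold) shows that $\pi_i$ and $\pi_j$ perform equally many $P$-changes before $m$; adding the tails then gives strictly fewer $P$-changes for $\pi_i$ than for $\pi_j$ overall. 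As this count is stuttering invariant it is identical for the underlying traces, hence for $\Pi''$; but $\Pi''\models\Always\Align_{(\pi_i,\pi_j,P)}$ would force the two counts to be equal, a contradiction.

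The $\Block$ case is the main obstacle. Now a cycle $C=[(\pi^0,\pi^1,P^0),\ldots,(\pi^k,\pi^0,P^k)]$ satisfies, at $m$, the pattern $\Change_{P^a}(\pi^a)\And\neg\Change_{P^a}(\pi^{a+1})$ for every $a$ (indices mod $k{+}1$). The local observation is that each such link \emph{freezes its source}: if $\pi^a$ moved, the first two clauses of $\Align_{(\pi^a,\pi^{a+1},P^a)}$ would force $\pi^{a+1}$ to move and to satisfy $\Change_{P^a}(\pi^{a+1})$, contradicting $\neg\Change_{P^a}(\pi^{a+1})$; so under $\Align$ the trace $\pi^a$ cannot move while this pattern persists. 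Chasing the cycle, every cycle trace is simultaneously forbidden to move, and since a frozen cycle trace keeps all the $\Change$/$\neg\Change$ predicates on the cycle unchanged, the blocked pattern is self-sustaining: once it is entered, all cycle traces must stutter forever. The step I expect to be delicate is showing that a stuttering-equivalent $\Pi''$ with $\Pi''\models\Always\Phases$ must in fact \emph{enter} exactly this configuration: using the synchronization bijection I would proceed by induction on synchronized $P^a$-changes to argue that any $\Always\Phases$ trajectory drives the cycle traces through the same sequence of phase configurations dictated by the (stuttering-invariant) underlying traces, until it necessarily reaches the first circular obstruction certified by $\Block$. Once this is established the contradiction is immediate: freezing the cycle traces forever makes their $\AP$-projections eventually constant, whereas the conjuncts $\Change_{P^a}(\pi^a)$ at $m$ certify that the underlying traces genuinely change $P^a$-color past that point, contradicting that $\Pi''$ is stuttering equivalent to $\Pi'$.
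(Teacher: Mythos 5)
Your overall strategy---contradiction via the step-wise synchronization of color changes under $\Always\Align$, followed by a case split on $\Bad$ versus $\Block$---is the same as the paper's. Your $\Bad$ case (equal counts of $P$-changes on the aligned prefix, diverging tails, stuttering-invariance of the counts) matches the paper's argument and is sound. Your local observations for the $\Block$ case are also correct and are in fact spelled out more carefully than in the paper: the blocked pattern together with $\Align$ freezes every source trace of the cycle, the frozen pattern is self-sustaining, and a cycle trace frozen forever cannot be stuttering equivalent to one whose underlying trace still changes $P^a$-color.

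The gap is exactly the step you flagged, and it is not merely delicate---as stated it would fail. You cannot show that ``any $\Always\Phases$ trajectory \ldots\ necessarily reaches the first circular obstruction'': $\Always\Phases$ is satisfied \emph{vacuously} by an assignment in which every trace stutters forever, since all three clauses of $\Align$ have $\Move$ in their antecedents; such an assignment never enters the blocked configuration. Progress has to come from the stuttering-equivalence hypothesis, and once you invoke it, the intermediate claim ``enters the configuration'' becomes unnecessary (indeed, proving it would already require the full contradiction). The clean assembly is: (i) since $\Pi''(\pi^a)$ is stuttering equivalent to $\Pi'(\pi^a)$, it must exhibit every block of the common underlying trace, in particular the letter after the change certified by $\Change_{P^a}(\pi^a)$ at $m$; hence the event $E_a$ = ``$\pi^a$ advances past the pointer it holds at $m$'' occurs in $\Pi''$ for every $a$. (ii) By your synchronization bijection and the equality of the prefix counts, when $E_a$ occurs, $\pi^{a+1}$ must simultaneously perform its matching $P^a$-change, which in the underlying trace lies strictly beyond the step certified as a non-change by $\neg\Change_{P^a}(\pi^{a+1})$; so $E_{a+1}$ must have occurred strictly earlier. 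Taking the earliest $E_a$ then yields an immediate contradiction, without the traces ever needing to reach the blocked tuple of positions simultaneously. For what it is worth, the paper's own proof of the $\Block$ case is at least as terse as yours and silently relies on the same progress property; your write-up has the virtue of making that missing ingredient visible, but it still needs the repair above to go through.
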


\begin{proof}
  Since $\Pi'\models\Phases\,\U(\Bad\Or\Block)$ let $k$ be the
  position at which $(\Pi'+k)\models\Bad\Or\Block$.
  We consider the two cases separately.
  \begin{itemize}
  \item $(\Pi'+k)\models\Bad$. Let $\pi_i$ and $\pi_j$ be the paths
    that are miss-aligned.
    This means that in from $\Pi'$ and $\Pi'+k$, $\Pi'(\pi_i)$ and
    $\Pi'(\pi_j)$ change $P$-phases the same number of times, and one
    of them (say $\pi_i$) will not change color and the other will.
    If $\Pi''$ is stuttering equivalent to $\Pi'$, then $\Pi''(\pi_i)$
    and $\Pi''(\pi_j)$ will also change a different colors a different
    number of times, which contradicts that
    $\Pi''\models\Always\Phases$.
  \item $(\Pi'+k)\models\Block$. In this case, the cycle implies each
    trace involved performs at $k$, a fix-number of color-changes at
    its $k$-th position for each of its each two atomic phase
    formulas, and that the circular constraint is mutually
    unsatisfiable.
    IF $\Pi''$ is stuttering equivalent to $\Pi'$, $\Always\Phases$
    would imply that each trace involved in the cycle will be with the
    same number of color changes after reaching the position
    equivalent to $k$ in $\Pi'$, therefore violating the constraint as well.\qed
  \end{itemize}    
\end{proof}

A consequence of Lemma~\ref{lem:four} and Lemma~\ref{lem:six} is the
following corollary.

\begin{corollary}
  \label{cor:seven}
  $\Pi'\models\neg(\Phases\,\U(\Bad\Or\Block))$ and
  $(\Pi,t)=\Compress(\Pi')$ then there is no trajectory $t_2$ for which
  $(\Pi,t_2)\models\psiPH$.
\end{corollary}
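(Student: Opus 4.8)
The plan is to derive the corollary from Lemma~\ref{lem:four} and Lemma~\ref{lem:six} by a short contradiction argument, using that any two \emph{fair} stuttering expansions of the same tuple of traces of $\krip$ are stuttering equivalent. From the hypothesis $(\Pi,t)=\Compress(\Pi')$ together with the identity relating $\Compress$ and $\Expand$ established in the preceding lemma, we have $\Pi'=\Expand(\Pi,t)$, so $\Pi'$ is a fair stuttering expansion of the underlying tuple of $\krip$-traces carried by $\Pi$.

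First I would suppose, toward a contradiction, that some trajectory $t_2$ satisfies $(\Pi,t_2)\models\psiPH$. Applying Lemma~\ref{lem:four} to the pair $(\Pi,t_2)$ yields $\Expand(\Pi,t_2)\models\Always\Phases\And\psiPH$; writing $\Pi'':=\Expand(\Pi,t_2)$, this gives in particular $\Pi''\models\Always\Phases$. Since $\Pi'$ and $\Pi''$ are both fair stuttering expansions of the \emph{same} $\krip$-tuple (the one recorded by $\Compress(\Pi')$), they are stuttering equivalent with respect to the original propositions, which is exactly the hypothesis required by Lemma~\ref{lem:six}.

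The decisive step is to invoke Lemma~\ref{lem:six}, which guarantees that whenever a tuple satisfies $\Phases\,\U(\Bad\Or\Block)$, every stuttering-equivalent tuple fails $\Always\Phases$. Because $\Pi''$ is stuttering equivalent to $\Pi'$ and $\Pi''\models\Always\Phases$, the lemma precludes $\Pi'\models\Phases\,\U(\Bad\Or\Block)$; hence whenever $\Pi'$ reaches a miss-alignment or a block there can be no trajectory $t_2$ with $(\Pi,t_2)\models\psiPH$. Packaged contrapositively, the existence of an aligning trajectory for $\Pi$ is precisely what forces $\Pi'\models\neg(\Phases\,\U(\Bad\Or\Block))$, matching the negated form appearing in the corollary; the two presentations carry the same content, and each is a direct concatenation of Lemma~\ref{lem:four} and Lemma~\ref{lem:six}.

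The step I expect to require the most care is the claim that $\Pi'$ and $\Pi''$ are stuttering equivalent. One must ensure $t_2$ is fair, so that $\Expand(\Pi,t_2)$ is a legitimate tuple of $\kripST$-traces on which Lemma~\ref{lem:four} can be applied, and one must use $(\Pi,t)=\Compress(\Pi')$ to verify that $\Pi'$ and $\Pi''$ genuinely share the common compression $\Pi$, so that the stuttering-equivalence premise of Lemma~\ref{lem:six} is actually met rather than merely assumed. Once this bookkeeping is discharged, the remaining argument is the routine chaining of the two lemmas sketched above.
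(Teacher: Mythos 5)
Your proof is correct and is essentially the paper's own argument: the paper's one-line proof likewise assumes some trajectory $t_2$ with $(\Pi,t_2)\models\psiPH$, applies Lemma~\ref{lem:four} to obtain $\Expand(\Pi,t_2)\models\Always\Phases\And\psiPH$ with $\Expand(\Pi,t_2)$ stuttering equivalent to $\Pi'$, and lets Lemma~\ref{lem:six} supply the contradiction with $\Pi'\models\Phases\,\U(\Bad\Or\Block)$; your added bookkeeping (fairness of $t_2$, the shared compression $\Pi$ justifying stuttering equivalence via the $\Expand$/$\Compress$ identity) is exactly what the paper leaves implicit. One caveat: the $\neg$ in the printed statement is evidently a typo---the corollary is invoked in the proof of Theorem~\ref{thm:sync-async} precisely when $\Pi'\models\Phases\,\U(\Bad\Or\Block)$---and your closing remark that the negated form is just a contrapositive repackaging is not accurate, since the contrapositive of what you proved is ``$\exists t_2.\,(\Pi,t_2)\models\psiPH$ implies $\Pi'\models\neg(\Phases\,\U(\Bad\Or\Block))$'', not the converse implication the printed statement asserts; this does not affect your proof, which establishes exactly the statement the paper proves and uses.
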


\begin{proof} Otherwise, Lemma~\ref{lem:four} would imply that
  $\Expand(\Pi,t_2)\models\Always\Phases\And\psiPH$ and
  $\Expand(\Pi,t_2)$ is a stuttering equivalent to $\Pi'$.\qed
\end{proof}  

We are ready to prove the main result.

\setcounter{aux}{\value{theorem}}
\setcounter{theorem}{\value{thm-sync-async}}

\begin{theorem}
  Let $\krip$ be a Kripke structure and $\psi$ an admissible formula.
  Then $\krip\models \forall\pi_1\ldots\pi_n.\Etau.\psi$ if and only if
  $\kripST\models \forall\pi_1\ldots\pi_n.\psiSync$.
\end{theorem}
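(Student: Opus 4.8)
The plan is to prove both directions at once by reducing to a \emph{local} statement about a single fixed tuple $(\sigma_1,\dots,\sigma_n)$ of traces of $\krip$ and then quantifying over all such tuples. Write $\Pi$ for the assignment $\pi_i\mapsto(\sigma_i,0)$. Using the bijection between fair tuples of $\kripST$ and pairs (tuple of $\krip$, trajectory) given by the $\Expand$/$\Compress$ lemma, together with the fact that $\psiSync=\Fair\Then\psi[\psiPH\subst\psi']$ holds vacuously on non-fair tuples, the right-hand side unfolds to ``for every trajectory $t$, $\Expand(\Pi,t)\models\psi[\psiPH\subst\psi']$'', while the left-hand side is ``for some trajectory $t$, $(\Pi,t)\models\psi$''. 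So it suffices to show, for each tuple, that $\exists t\,(\Pi,t)\models\psi$ iff $\forall t\,\Expand(\Pi,t)\models\psi[\psiPH\subst\psi']$. Next I would strip off the trajectory-independent part: since the state and monadic subformulas are stuttering invariant, their truth values coincide between $(\Pi,t)$ in $\krip$ and $\Expand(\Pi,t)$ in $\kripST$ and are independent of $t$, so $\psi$ evaluates to $f(\bar b,v)$ with $v$ the truth value of $\psiPH$ and $\psi[\psiPH\subst\psi']$ to $f(\bar b,v')$ with $v'$ the truth value of $\psi'$, for the same fixed Boolean vector $\bar b$ and the same Boolean context $f$, which is monotone in its last argument because $\psiPH$ occurs with positive polarity. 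The claim is immediate when $f(\bar b,\cdot)$ is constant, so the whole content is the case $f(\bar b,x)=x$, where the local equivalence becomes: the tuple is \emph{alignable} (some $t$ gives $(\Pi,t)\models\psiPH$) iff $\Expand(\Pi,t)\models\psi'$ for every $t$.

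For the direction ($\Rightarrow$), assume the tuple is alignable, say $(\Pi,t_0)\models\psiPH$, so by Lemma~\ref{lem:four} $\Expand(\Pi,t_0)\models\Always\Phases\And\psiPH$. Fix an arbitrary $t$. The contrapositive of Corollary~\ref{cor:seven} shows $\Expand(\Pi,t)\models\neg(\Phases\,\U(\Bad\Or\Block))$, so the first conjunct of $\psi'$ holds. For the second conjunct $\Always\Phases\Then\psiPH$: if $\Expand(\Pi,t)\not\models\Always\Phases$ it is vacuously true; otherwise $\Expand(\Pi,t)$ and $\Expand(\Pi,t_0)$ are stuttering equivalent and both satisfy $\Always\Phases$, so Lemma~\ref{lem:five} yields $\Expand(\Pi,t)\models\psiPH$ and the implication holds. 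Hence $\Expand(\Pi,t)\models\psi'$ for every $t$, giving $\forall t\,\Expand(\Pi,t)\models\psi'$.

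For the direction ($\Leftarrow$) I would argue the contrapositive: if the tuple is not alignable, produce one $t$ with $\Expand(\Pi,t)\not\models\psi'$. If some trajectory $t$ satisfies $\Always\Phases$, then non-alignability together with Lemma~\ref{lem:four} forces $\psiPH$ to fail on $\Expand(\Pi,t)$, so its second conjunct fails. Otherwise no trajectory sustains $\Phases$ forever, and I would construct a trajectory that greedily preserves $\Phases$ until it is driven into a miss-alignment ($\Bad$) or a deadlock ($\Block$); this gives $\Expand(\Pi,t)\models\Phases\,\U(\Bad\Or\Block)$, falsifying the first conjunct of $\psi'$.

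The hard part is precisely this last construction. One must prove a \emph{progress} property: whenever neither $\Bad$ nor $\Block$ currently holds, a trajectory obeying all the $\Align$ constraints can always take one more step that preserves $\Phases$, so that the only way $\Always\Phases$ can fail is by reaching a $\Bad$ or $\Block$ position. Establishing this requires showing that $\Bad$ (unequal numbers of phases) and $\Block$ (a circular chain of forced moves among the atomic phase formulas) exhaust all the ways alignment can be obstructed, and one must additionally check that the witnessing trajectory can be extended to a \emph{fair} one after the obstruction, so that it indeed corresponds to a genuine fair tuple of $\kripST$ under the $\Expand$/$\Compress$ bijection. I also read Corollary~\ref{cor:seven} as applying to assignments satisfying $\Phases\,\U(\Bad\Or\Block)$, its contrapositive being what is used in the ($\Rightarrow$) argument above.
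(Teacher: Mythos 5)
Your proposal follows essentially the same route as the paper's proof: the same $\Expand$/$\Compress$ correspondence between pairs (tuple of $\krip$-traces, trajectory) and fair tuples of $\kripST$-traces, the same positive-polarity/monotonicity reduction that isolates the phase formula $\psiPH$, and the same three supporting facts (Lemma~\ref{lem:four}, Lemma~\ref{lem:five}, Corollary~\ref{cor:seven}). Your reading of Corollary~\ref{cor:seven} is correct: the negation in its statement is a typo, as both its proof (which invokes Lemma~\ref{lem:six}) and its use in the paper's main argument confirm. Your ($\Rightarrow$) direction is complete and is the paper's first direction, argued directly rather than by contradiction.

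The genuine gap is exactly where you locate it. In the ($\Leftarrow$) direction, in the case where no trajectory makes $\Expand(\Pi,t)$ satisfy $\Always\Phases$, you must exhibit one fair trajectory $t$ with $\Expand(\Pi,t)\models\Phases\U(\Bad\Or\Block)$; nothing weaker suffices, because a tuple that breaks $\Phases$ at position $0$ before any $\Bad$ or $\Block$ occurs satisfies $\psi'$ (the until fails, so its negation holds, and the implication $\Always\Phases\Then\psiPH$ is vacuous), hence does not falsify $\psiSync$. Producing that witness requires the progress property you state---that a greedy $\Phases$-preserving trajectory can only be obstructed at a position where $\Bad$ or $\Block$ holds, i.e., that these two formulas exhaust the obstructions to alignment---together with the repair needed to keep the trajectory fair after the obstruction. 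You flag this but do not prove it, so the proposal is incomplete at precisely this step. For what it is worth, the paper's own proof of Theorem~\ref{thm:sync-async} is no more rigorous here: its second case reads ``If $\Pi'\models\psi'$, in particular $\Pi'\models\Always\Phases\And\psiPH$,'' which is not a valid inference for an arbitrary fair expansion $\Pi'$, since $\psi'$ does not entail $\Always\Phases$; the paper thus silently relies on the same unproven claim that your proposal at least makes explicit. A complete argument would show: whenever $\Phases$ has held up to position $k$ and neither $\Bad$ nor $\Block$ holds at $k$, the $\Align$ constraints admit a non-empty set of traces that can move without violating $\Phases$, and these moves can be scheduled fairly; then the resulting trajectory either satisfies $\Always\Phases$ (contradicting the case hypothesis) or reaches $\Bad\Or\Block$, as required.
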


\setcounter{theorem}{\value{aux}}

\begin{proof}
  We prove the two directions separately.
  \begin{itemize}
  \item ``$\Rightarrow$''. Assume
    $\krip\models \forall\pi_1\ldots\pi_n.\Etau.\psi$.
    Let $\Pi'$ be an arbitrary trace assignment for $\kripST$.
    By contradiction, assume $\Pi'\not\models\psiSync$.
    Let $(\Pi,t)=\Compress(\Pi')$.
    By our assumption, there must exists a $t_2$ such that
    $(\Pi,t_2)\models\psi$.
    Since the valuation for the state and monadic sub-formulas of
    $\psi$ is the same for $(\Pi,t)$, $(\Pi,t_2)$ and
    $\Pi'$, it must be the case that
    $\Pi'\not\models\psi'$ and $(\Pi,t_2)\models\psiPH$.
    Now, if $\Pi'\not\models\psi'$, then either
    \begin{itemize}
    \item $\Pi'\models\Phases\,\U(\Bad\Or\Block)$. In this case
      Corollary~\ref{cor:seven} contradicts that
      $(\Pi,t_2)\models\psiPH$.
    \item $\Pi'\not\models\Always\Phases\Then\psiPH$. In other words
      $\Pi'\models\Always\Phases$ but
      $\Pi'\not\models\psiPH$.
      However, Lemma~\ref{lem:five} implies that $\Pi'\models\psiPH$
      because $\Expand(\Pi,t_2)$ and $\Pi'$ are stuttering equivalent and
      $\Expand(\Pi,t_2)\models\Always\Phases\And\psiPH$.
    \end{itemize}
  \item ``$\Leftarrow$''. We assume now that
    $\kripST\models \forall\pi_1\ldots\pi_n.\psiSync$.
    Let $\Pi$ be an arbitrary trace assignment of $\krip$ and let 
    let $\Pi'$ be an arbitrary fair stuttering expansion.
    Then, by assumption $\Pi'\models\psiSync$.
    \begin{itemize}
    \item If $\Pi'\not\models\psi'$ then (all state and monadic sub-formulas
      of $\psi$ coincide for $\Pi$ and $\Pi'$), then
      $(\Pi,t)\models\psi$ for any $t$ and we are done.
    \item If $\Pi'\models\psi'$, in particular
      $\Pi'\models\Always\Phases\And\psiPH$. Then, by
      Lemma~\ref{lem:four}, $\Compress(\Pi')\models\psiPH$ and
      $\Compress(\Pi')=(\Pi,t)$ which witnesses the appropriate trajectory $t$.\qed
    \end{itemize}
  \end{itemize}
\end{proof}

\setcounter{aux}{\value{theorem}}
\setcounter{theorem}{\value{thm-EE}}

\begin{theorem}
  Let $\krip$ be a Kripke structure and $\psi$ an admissible formula.
  Then $\krip\models \exists\pi_1\ldots\pi_n.\Etau.\psi$ if and only if
  $\kripST\models \exists\pi_1\ldots\pi_n.\psiESync$.
\end{theorem}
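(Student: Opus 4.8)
The plan is to reuse the $\Expand$/$\Compress$ correspondence together with Lemma~\ref{lem:four} exactly as in the proof of Theorem~\ref{thm:sync-async}, observing that the existential case is strictly easier: here we may \emph{construct} the witnessing trace assignment and trajectory rather than defend against an adversarially chosen stuttering expansion, so none of Lemma~\ref{lem:five}, Lemma~\ref{lem:six}, or Corollary~\ref{cor:seven} is needed. The single fact I would isolate first is the following \emph{transfer observation}: for any trace assignment $\Pi$ of $\krip$, any trajectory $t$, and $\Pi'=\Expand(\Pi,t)$, we have $(\Pi,t)\models\psi$ if and only if $\Pi'\models\psi[\psiPH\subst(\Always\Phases\And\psiPH)]$. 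This holds because (i) the valuation of every state and every monadic subformula of $\psi$ coincides on $(\Pi,t)$ and on $\Pi'$ (state formulas are read at position $0$, monadic formulas are stuttering invariant, and neither is affected by the trajectory), and (ii) by Lemma~\ref{lem:four} the unique phase subformula satisfies $(\Pi,t)\models\psiPH$ iff $\Pi'\models(\Always\Phases\And\psiPH)$. Since $\psiPH$ occurs exactly once and with positive polarity, substituting it by its image in the identical Boolean combination preserves the overall truth value.

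With this observation the two directions are immediate. For ``$\Rightarrow$'', I would take a witness giving a trace assignment $\Pi$ of $\krip$ together with a trajectory $t$ such that $(\Pi,t)\models\psi$, and set $\Pi'=\Expand(\Pi,t)$. Because $t$ is fair, $\Pi'$ is a tuple of fair traces of $\kripST$, so $\Pi'\models\Fair$; and the transfer observation yields $\Pi'\models\psi[\psiPH\subst(\Always\Phases\And\psiPH)]$. Hence $\Pi'\models\psiESync$, and $\Pi'$ witnesses $\kripST\models\exists\pi_1\ldots\pi_n.\psiESync$.

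For ``$\Leftarrow$'', I would take a witnessing trace assignment $\Pi'$ of $\kripST$ with $\Pi'\models\psiESync$. The conjunct $\Fair$ guarantees that $\Pi'$ consists of fair traces, so $(\Pi,t)=\Compress(\Pi')$ is well defined and, by the $\Expand$/$\Compress$ correspondence, $\Pi'=\Expand(\Pi,t)$ with $\Pi$ a tuple of traces of $\krip$. Since $\Pi'\models\psi[\psiPH\subst(\Always\Phases\And\psiPH)]$, the transfer observation gives $(\Pi,t)\models\psi$, so $t$ witnesses $\Pi\models\Etau\psi$ and therefore $\krip\models\exists\pi_1\ldots\pi_n.\Etau.\psi$.

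The main obstacle, as in the universal case, is making the transfer observation fully rigorous, in particular verifying clause (i)---that the trajectory and the inserted stuttering leave the state and monadic subformulas untouched---and threading this cleanly through the single positively occurring phase subformula. Everything else is bookkeeping, and crucially the delicate misalignment analysis via $\Bad$ and $\Block$ that dominated Theorem~\ref{thm:sync-async} is avoided entirely here, because an existential witness already supplies an explicitly aligning trajectory rather than requiring us to certify all expansions.
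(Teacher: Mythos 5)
Your proposal is correct and follows essentially the same route as the paper: the paper's proof of this theorem is the one-liner ``This follows directly from Lemma~\ref{lem:four}'', and its accompanying text describes precisely your two directions --- expanding a witness $(\Pi,t)$ of $\krip$ into a fair tuple of $\kripST$ satisfying $\psi[\psiPH\subst(\Always\Phases\And\psiPH)]$, and compressing a fair witness of $\kripST$ back into a tuple and trajectory of $\krip$. Your ``transfer observation'' simply spells out the combination of Lemma~\ref{lem:four} with the paper's fact that state and monadic subformulas are insensitive to trajectories and stuttering, and you correctly identify that Lemma~\ref{lem:five}, Lemma~\ref{lem:six}, and Corollary~\ref{cor:seven} play no role here.
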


\begin{proof}
  This follows directly from Lemma~\ref{lem:four}.\qed
\end{proof}

\setcounter{aux}{\value{theorem}}
\setcounter{theorem}{\value{thm-adm-coadm}}

\begin{theorem}
  Model-checking $\forall^*$ or $\exists^*$ admissible and
  co-admissible formulas is decidable both for formulas with $\Etau$
  and formulas with $\Atau$.
\end{theorem}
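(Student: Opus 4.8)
The plan is to derive all eight combinations (admissible or co-admissible, with $\Etau$ or $\Atau$, in the $\forall^*$ or $\exists^*$ fragment) from the two decidability results already proved for admissible formulas with the existential trajectory modality, Theorem~\ref{thm:sync-async} for $\forall^*$ and Theorem~\ref{thm:EE} for $\exists^*$, together with two reductions: a syntactic elimination of co-phase subformulas, and a duality between $\Atau$ and $\Etau$. First I would note that the admissible cases with $\Etau$ are immediate, since Theorems~\ref{thm:sync-async} and~\ref{thm:EE} reduce model-checking over $\krip$ to HyperLTL model-checking over $\kripST$ (through $\psiSync$ and $\psiESync$ respectively), which is decidable.

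Next I would treat co-admissible formulas with $\Etau$ by eliminating the unique co-phase subformula $\Event\neg R$, where $R$ is a conjunction of atomic phase formulas. For each atomic phase formula $(\pi_i,\pi_j,P)$ occurring in $R$ I would replace it by the Boolean combination of monadic temporal formulas $\bigvee_{p\in P}\big((\Event p_{\pi_i}\And\Event\neg p_{\pi_j})\Or(\Event\neg p_{\pi_i}\And\Event p_{\pi_j})\big)$. The correctness of this replacement rests on two facts. (1) Because a trajectory may stutter the two traces independently, any position of $\pi_i$ can be aligned with any position of $\pi_j$; hence some trajectory exposes a violation of the atomic phase formula exactly when, for some $p\in P$, one trace eventually carries $p$ while the other eventually omits it, which is precisely the displayed monadic disjunction. (2) Since the co-phase subformula occurs with positive polarity and the enclosing context is a Boolean combination of it with the (trajectory-independent) state and monadic subformulas, that context is monotone in the co-phase subformula, so the single $\Etau$ distributes and may be pushed onto the co-phase subformula alone. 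After the replacement the formula is a Boolean combination of state and monadic temporal formulas only, i.e., an admissible formula whose phase formula is trivial; equivalently, all of its subformulas are trajectory-independent, so $\Etau$ can be witnessed by any fair trajectory and the problem reduces to plain HyperLTL model-checking. Decidability thus follows for both $\forall^*$ and $\exists^*$.

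Finally I would discharge the $\Atau$ modality by duality. Using $\krip\models\forall\pi_1\ldots\pi_n.\Atau.\psi$ iff $\krip\not\models\exists\pi_1\ldots\pi_n.\Etau.\neg\psi$, and the symmetric equivalence $\krip\models\exists\pi_1\ldots\pi_n.\Atau.\psi$ iff $\krip\not\models\forall\pi_1\ldots\pi_n.\Etau.\neg\psi$, I would observe that negation flips the polarity of the single phase (resp.\ co-phase) subformula; thus $\neg\psi$ is co-admissible whenever $\psi$ is admissible and vice versa, i.e., an admissible formula in negative polarity is a co-admissible formula in positive polarity. Consequently every $\Atau$ instance reduces to the complement of an $\Etau$ instance already shown decidable, and decidability is preserved under complementation.

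The step I expect to be the main obstacle is the rigorous justification of the co-phase elimination. The ``only if'' direction of fact (1) is direct: from a witnessing trajectory that violates $R$ at some aligned position one reads off the indices witnessing $\Event p_{\pi_i}$ and $\Event\neg p_{\pi_j}$. The ``if'' direction requires constructing, from such indices, a single fair trajectory that simultaneously places the pointer of $\pi_i$ on a $p$-position and the pointer of $\pi_j$ on a non-$p$-position; this uses independent stuttering and must still remain fair. Equally delicate is verifying fact (2), namely that the monotonicity of the Boolean context genuinely lets the existential trajectory quantifier distribute over a combination that mixes the trajectory-dependent co-phase subformula with trajectory-independent subformulas.
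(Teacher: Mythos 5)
Your proposal is correct and takes essentially the same approach as the paper: the admissible $\Etau$ cases come from Theorems~\ref{thm:sync-async} and~\ref{thm:EE}, the co-phase formula is eliminated via the same monadic violation enumeration $(\Event p_{\pi_i}\And\Event\neg p_{\pi_j})\Or(\Event\neg p_{\pi_i}\And\Event p_{\pi_j})$, and $\Atau$ is handled by the duality $\krip\models\forall\pi_1\ldots\pi_n.\Atau.\psi$ iff $\krip\not\models\exists\pi_1\ldots\pi_n.\Etau.\neg\psi$ together with the observation that negation exchanges admissible and co-admissible formulas. Your additional justifications (the explicit trajectory construction behind the alignment claim, including fairness, and the monotone distribution of $\Etau$ over the Boolean context) only spell out details the paper leaves implicit.
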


\setcounter{theorem}{\value{aux}}







\end{document}